\newtheorem{Corollary}{Corollary}
\newtheorem{Proposition}{Proposition}
\newtheorem{Definition}{Definition}
\newtheorem{Theorem}{Theorem}
\newtheorem{Lemma}{Lemma}
\begin{document}

\title{
Learning Deep Tree-based Retriever for Efficient Recommendation: Theory and Method
}



\author{Ze Liu, Jin Zhang, Defu Lian\textsuperscript{*}, Chao Feng,  Jie Wang, and Enhong Chen,~\IEEEmembership{Fellow,~IEEE}



\thanks{
Ze Liu and Chao Feng are with the School of Computer Science and Technology, University of Science and Technology of China, Hefei, Anhui 230027, China (e-mail: lz123@mail.ustc.edu.cn, chaofeng@mail.ustc.edu.cn)
}
\thanks{Jin Zhang is with the School of Artificial Intelligence and Data Science, University of Science and Technology of China, Hefei, Anhui 230027, China (e-mail: jinzhang21@mail.ustc.edu.cn).}%
\thanks{Defu Lian and Enhong Chen are with the State Key Laboratory of Cognitive Intelligence, School of Computer Science and Technology, School of Artificial Intelligence and Data Science, University of Science and Technology of China, Hefei, Anhui 230027, China (e-mail: liandefu@ustc.edu.cn; cheneh@ustc.edu.cn).}%
\thanks{Jie Wang is with the Department of Electronic Engineering and Information Science, University of Science and Technology of China, Hefei, Anhui 230027, China (e-mail: jiewangx@ustc.edu.cn).}%
\thanks{\textsuperscript{*}Corresponding author: Defu Lian (e-mail: liandefu@ustc.edu.cn).}
\thanks{This work has been submitted to the IEEE for possible publication. Copyright may be transferred without notice, after which this version may no longer be accessible.}
}

\markboth{Journal of \LaTeX\ Class Files,~Vol.~14, No.~8, August~2021}%
{Shell \MakeLowercase{\textit{et al.}}: A Sample Article Using IEEEtran.cls for IEEE Journals}


\IEEEtitleabstractindextext{
    \begin{abstract}
    With the advancement of deep learning, deep recommendation models have achieved remarkable improvements in recommendation accuracy.  
    However, due to the large number of candidate items in practice and the high cost of preference computation, these methods still suffer from low recommendation efficiency. The recently proposed tree-based deep recommendation models alleviate the problem by directly learning tree structure and representations under the guidance of recommendation objectives. 
    To guarantee the effectiveness of beam search for recommendation accuracy, these models strive to ensure that the tree adheres to the max-heap assumption, where a parent node's preference should be the maximum among its children's preferences.
    However, they employ \textcolor{black}{a one-versus-all strategy}, framing the training task \textcolor{black}{as a series of independent binary classification objectives for each node}, which limits their ability to fully satisfy the max-heap assumption.
    To this end, we propose a Deep Tree-based Retriever (DTR for short) for efficient recommendation.
    DTR frames the training task as a softmax-based multi-class classification over tree nodes at the same level, \textcolor{black}{enabling explicit horizontal competition and more discriminative top-k selection among them, which mimics the beam search behavior 
    during training}.
    To mitigate the suboptimality induced by the labeling of non-leaf nodes, we propose a rectification method for the loss function, which further aligns with the max-heap assumption in expectation.  
    As the number of tree nodes grows exponentially with the levels, we employ sampled softmax to approximate optimization and thereby enhance efficiency. 
    Furthermore, we propose a tree-based sampling method to reduce the bias inherent in sampled softmax. 
    Theoretical results reveal DTR's generalization capability, and both the rectification method and tree-based sampling contribute to improved generalization.
    The experiments are conducted on four real-world datasets, validating the effectiveness of the proposed method.
    \end{abstract}
    
    \begin{IEEEkeywords}
    Recommender System; Tree-based Index; Multi-class Classification; Sampled Softmax;  Efficient Recommendation
    \end{IEEEkeywords}
}

\maketitle

\IEEEdisplaynontitleabstractindextext

\IEEEpeerreviewmaketitle

\IEEEraisesectionheading{\section{Introduction}}
\IEEEPARstart{I}{n} the information era, the overwhelming volume of daily information leads to significant information overload. The recommendation plays a crucial role in mitigating information overload by providing a personalized ranking list tailored to individual preferences. With the advancements of deep learning techniques, recommendation techniques also achieve remarkable improvements in ranking performance. The widespread application of these techniques has generated considerable economic benefits for various kinds of content providers in industrial companies.

Through the use of deep learning, we not only learn better representations for users, items, and contexts but also provide a more generalized expression for users' preference scores via neural networks than the widely-used inner product in matrix factorization. Both lead to stronger recommendation performance, as demonstrated by models such as DIN~\cite{zhou2018deep}, DIEN~\cite{zhou2019deep},
and CKE~\cite{zhang2016collaborative}. However, the use of neural network-based preference functions brings online-serving challenges for recommender systems due to the high cost of preference computation. Generally speaking, immediate responses to adaptive recommendations are prerequisites for excellent customer experiences and custom retention. 
Existing popular and well-performed graph-based indexes (e.g., HNSW~\cite{malkov2018efficient}), quantization-based indexes (e.g.,
SCANN~\cite{guo2020accelerating}), and hash-based indexes (e.g., SGDH~\cite{li2020weakly}) are typically constructed based on inner product or Euclidean distance, making them unsuitable for accelerating the recommendations of deep models.
In particular, neural network-based preference functions do not form a valid metric and are incompatible with Euclidean distance and inner product. As a result, two items that are close in Euclidean or inner-product space may have significantly different neural preference scores.

To guarantee the compatibility between search indexes with the neural network-based preference functions, it is a good solution to learn the search indexes together with the recommendation model under the guidance of recommendation objectives. The representative work is the tree-based deep model (i.e., TDM~\cite{zhu2018tdm} and its improved version JTM~\cite{Zhu2019JointOO}). These models use the balanced tree index, which is constructed by hierarchically clustering item representations from top to bottom. 
Given the tree, the top-k ranked items are retrieved through a layer-wise beam search, which selects the k largest tree nodes at each level based on neural preference scores from the top layer to the bottom layer. 
In this way, beam search achieves logarithmic computation complexity w.r.t. the number of items. To guarantee the accuracy of beam search, these tree-based deep models rely on the following max-heap assumption: the preference scores of query for a parent node should be the maximum between the preference scores of its children node. For the sake of satisfying the assumption, they cast the overall problem as a series of independent node-wise binary classification problems, treating nodes in the path from the root node to the leaf nodes corresponding to positive samples as positive and other randomly sampled nodes as negative. However, these tree-based models suffer from the following drawback: the max-heap assumption is not well satisfied by the used binary classification objectives due to the lack of explicit horizontal competition among the tree nodes at the same level. 
This drawback provides an opportunity to improve the accuracy of the efficient recommendation.

We propose a Deep Tree-based Retriever (DTR for short) for efficient recommendation. To satisfy the max-heap assumption as much as possible and mimic the beam search behavior in the training stage, DTR regards the training task as \textcolor{black}{softmax-based multi-class classification} over tree nodes at the same layer, which enables \textcolor{black}{explicit horizontal competition and more discriminative top-k selection among them}. Within this training mode, DTR utilizes a multi-class cross-entropy loss to optimize the deep model; however, theoretical analysis from the aspect of Bayes optimality indicates that this loss function still results in suboptimal performance under beam search induced by the labeling of non-leaf nodes, prompting us to propose a rectification method. Additionally, as the number of tree nodes increases exponentially with levels, the softmax loss becomes computationally expensive. Therefore, we resort to sampled softmax for approximation to promote training efficiency. We develop a tree-based negative sampling method, guided by sampled softmax theory, to estimate the gradient of the original softmax loss more accurately. Moreover, we propose a tree learning method compatible with this training mode, enabling the alternating learning of the preference model and the tree index. Furthermore, we provide a generalization analysis of our proposed method, showing that it has good generalization capabilities and that the generalization could be enhanced through \textcolor{black}{a negative sampling distribution closer to the softmax distribution} and an increased number of branches in the tree structure.

As an extension of the preliminary paper \cite{feng2022forest}, which contributed to the layer-wise softmax-based multi-class classification training mode and a tree learning method compatible with this training mode, we further make the following contributions:
\begin{itemize}
    \item We identify the issue of suboptimality of multi-class classification training mode under beam search from the aspect of Bayes optimality and propose a rectification method to address this issue.
    \item We propose a tree-based negative sampling method for the multi-class classification problem in the tree. This sampling method can lead to more accurate estimations of loss gradient to reduce the inherent bias in sampled softmax.
    \item We provide a generalization analysis of the deep tree-based retriever, demonstrating the great generalization capability of our proposed methods.
    \item We evaluate the proposed methods on four real-world datasets and validate the superiority of DTR to the baselines and the effectiveness of the rectification method and the sampling method.
\end{itemize}

\section{Related work}
\label{sec: related work}
This study aims to enhance the accuracy of efficient recommendation through the deep tree-based retriever, focusing on both the methodologies and theoretical foundations. We begin by reviewing recent advancements in efficient recommendation. Subsequently, we survey recent efficient training techniques of recommendation and then delve into the theories closely connected to our study.

\subsection{Efficient Recommendation}
Efficient recommendation relies on building a search index, including LSH~\cite{datar2004locality}, inverted index~\cite{norouzi2012fast,babenko2014inverted}, tree index~\cite{preparata2012computational} and graph index~\cite{malkov2018efficient} for all items. The recommender system usually uses the inner product for computing preference scores, and the top-k recommendation can be cast into the maximum inner product search (MIPS) problem. The search index is usually constructed based on Euclidean distance and has been extended to the inner product. This extension can be achieved by establishing the relationship between nearest neighbor search (NNS) and MIPS~\cite{bachrach2014speeding,neyshabur2015symmetric,douze2024faiss}, or learning from either item representations~\cite{guo2016quantization,morozov2018non} or the raw data directly~\cite{lian2017discrete,mazur2019beyond,lian2021product}. With the introduction of deep learning, the preference score function becomes complicated, so that it is challenging to transform neural ranking into MIPS. Existing work either directly used metric-based index~\cite{tan2020fast}, or learns search index from raw data directly together with recommendation models~\cite{zhu2018tdm, Zhu2019JointOO,lian2020lightrec}.

\subsection{Efficient Training Techniques of Recommendation} 
\subsubsection{Negative Sampling in RecSys}
Negative sampling is a critical technique for mitigating missing negatives and exposure bias, while also improving training efficiency in recommender systems~\cite{lian2020personalized,subbiah2024improved,prakash2024evaluating,yang2024does}. It includes static methods like uniform sampling~\cite{rendle2014improving}, popularity-based methods, and context-aware adaptive sampling. Representative adaptive methods include adaptive oversampling~\cite{rendle2014improving}, rejection sampling~\cite{hsieh2017collaborative}, clustering-based sampling~\cite{lian2020personalized}, and dynamic negative sampling~\cite{zhang2013optimizing}. Their central idea is to assign higher sampling probabilities to items with larger predicted preference scores.

\subsubsection{Techniques of Speeding Up Softmax Computation}
Computing the softmax over a large vocabulary is often costly. To improve efficiency, various approximation methods have been proposed. For example, hierarchical softmax~\cite{morin2005hierarchical} and LightRNN~\cite{li2016lightrnn} decompose the output space, while contrastive divergence~\cite{hinton2002training} approximates gradients via MCMC. Negative sampling is another widely used approach, including noise-contrastive estimation~\cite{gutmann2010noise} with unigram sampling, self-contrastive estimation~\cite{goodfellow2014distinguishability} using the previous model to generate negative samples, self-adversarial sampling~\cite{sun2019rotate}, kernel-based sampling~\cite{blanc2018adaptive} with a tree index, and adaptive top-k softmax~\cite{baharav2024adaptive}.

\subsection{Theoretical Work}
\subsubsection{Bayes Optimality in Multi-class Classification and Hierarchical Classification}
Bayes optimality characterizes the ideal classifier that minimizes expected loss w.r.t. the true data distribution. It has been extensively studied~\cite{zhang2004statistical, tewari2007consistency} in multi-class classification. To handle label ambiguity, top-k Bayes optimality focuses on the $k$ most probable classes~\cite{lapin2017analysis, yang2020consistency}. %
Recent work extends Bayes optimality to hierarchical classification: \cite{wydmuch2018no} utilizes it to assess probability estimates in tree models, while \cite{zhuo2020learning} analyzes performance degradation under beam search and proposes a method to achieve Bayes optimality. 

\subsubsection{Generalization Bounds for Multi-class Classification} 
Generalization error bounds provide theoretical guarantees for the generalization ability of a learned classifier. \cite{koltchinskii2002empirical} provides data-dependent generalization bounds for multi-class classification based on margins. \cite{cortes2013multi} further develop these bounds with multiple kernels. To accommodate the scenario with a large number of classes, data-dependent generalization error bounds with a mild dependency on the number of classes have also been studied \cite{lei2019data}. The aforementioned bounds primarily focus on flat structures. In the context of hierarchical structures, \cite{babbar2016learning} proposes a multi-class, hierarchical data-dependent generalization error bound for kernel classifiers in large-scale taxonomies.

\section{Preliminaries}
\label{sec: preliminaries}

\subsection{Problem Definition and Notation}
Consider a user space $\mathcal{U}$ and an item set $\mathcal{Y}$, where each user in $\mathcal{U}$ is characterized by its historical interaction sequence and other available profiles. Assume that the data is generated i.i.d. from some distribution $\mathbb{P}$ over $\mathcal{U} \times \mathcal{Y}$, and an instance is represented as $(u, y) \in \mathcal{U} \times \mathcal{Y}$, with the corresponding joint probability $p(u, y)$. For a given user $u$, the conditional probability of observing item $y$ is denoted by $\eta_y(u) = p(y|u)$.  The objective of the recommendation is to return the top-$k$ items with the highest conditional probabilities $\eta_y(u)$ for the given user $u$.

In this paper, we use the following notational conventions: bold lowercase and uppercase letters for vectors and matrices respectively, such as $\boldsymbol{a}$ and $\boldsymbol{A}$, and non-bold letters for scalars or constants, such as $k$ and $C$. For vector $\boldsymbol{a}$, $a_i$ denotes its $i$-th component and $\|\boldsymbol{a}\|_p$ denotes the $\ell_p$ norm. For matrix,
$\|\boldsymbol{A}\|_p=\sup\limits_{\|\boldsymbol{x}\|_p=1}\|\boldsymbol{A}\boldsymbol{x}\|_p$
denotes matrix norms induced by vector $p$-norms.  We denote the set $\{1,2, \dots, m\}$ by $\left[m\right]$ for any natural number $m$. Other notations will be explicitly defined in the corresponding section.
Additionally, we summarize some important and commonly used notations of this paper in Appendix \ref{sec: notations}.

\subsection{Tree-based Model}
Modern recommender systems need to retrieve top-$k$ items from a large-scale corpus (i.e., $\left|\mathcal{Y}\right|$ ranges from millions to billions or even tens of billions). 
To tackle the top-$k$ retrieval for such a large corpus efficiently and effectively, TDM ~\cite{zhu2018tdm} and JTM~\cite{Zhu2019JointOO} propose the tree-based recommendation models, which consist of a max-heap-like tree index and an advanced preference model.
In this subsection, we detail the components of the tree-based model, including the tree index and the preference model, followed by the introduction of its top-$k$ retrieval and training processes.

\subsubsection{Tree Index}
The tree index is a key component of tree-based models, enabling logarithmic-time retrieval over a large item corpus. It is constructed as a $B$-ary tree $\mathcal{T}$, consisting of a set of nodes $\mathcal{N}$. Nodes at the $j$-th level are denoted by $\mathcal{N}^j$,  and the entire node set is $\mathcal{N} = \bigcup_{j = 0}^H \mathcal{N}^j$, where the tree height $H = \lceil \log_B |\mathcal{Y}| \rceil$ to accommodate all items. The bijective mapping $\pi: \mathcal{N}^H \rightarrow \mathcal{Y}$ maps each leaf node to a unique item, with the inverse mapping denoted by $\pi^{-1}: \mathcal{Y} \rightarrow \mathcal{N}^H$. Accordingly, an item $y$ is mapped to a leaf and corresponds to a path in the tree from the root to that leaf. 

For each level $j$, we denote the number of nodes in that level as $N_j = \left|\mathcal{N}^j\right|$ and assume the nodes are listed in a fixed order, denoted as $(n_1^j, n_2^j, \dots, n_{N_j}^j)$. For the $i$-th node in the $j$-th level, denoted as $n_i^j$, let $\rho^l(n_i^j) \in \mathcal{N}^l$ represent its ancestor in level $l$ ($0 \le l \le j$), where $\rho^j(n_i^j)$ refers to the node itself. The set $\mathcal{C}(n_i^j) \subseteq \mathcal{N}$ represents child nodes of $n_i^j$, while $\mathfrak{L}(n_i^j) \subseteq \mathcal{N}^H$ denotes the leaf nodes within the subtree rooted at $n_i^j$. The function $\delta$ is introduced to return the index of a node within its level, i.e., $\delta(n_i^j) = i$. The $j$-th ancestor in item $y$'s corresponding path is $\rho^j(\pi^{-1}(y))$ and its index in this level is denoted as $\delta^j(y)=\delta(\rho^j(\pi^{-1}(y)))$. 

\subsubsection{Preference Model}
\label{sec: formalization of DIN}
With the tree index, tree-based models employ a preference model $\mathcal{M}: \mathcal{U} \times \mathcal{N} \rightarrow \mathbb{R}$ to predict user preference scores over tree nodes. 
These scores guide a top-down search to select the top-$k$ nodes at each level,
ultimately retrieving the top-$k$ items. 
DIN~\cite{zhou2018deep}, or one of its variants, is commonly adopted as the preference model.  We provide a detailed description of DIN in this subsection.

In the DIN model, the interaction sequence of a user $u$ serves as the input. Specifically, the model takes as input a matrix of $K$ item embedding vectors, denoted as $\boldsymbol{A}^{(u)}$:
\begin{small}
\begin{equation}
\label{eq:embedding_of_item_sequence}
    \boldsymbol{A}^{(u)} = \left[\boldsymbol{a}_1^{(u)}, \boldsymbol{a}_2^{(u)}, \dots, \boldsymbol{a}_{K}^{(u)}\right] \in \mathbb{R}^{d\times K},
\end{equation}
\end{small}where $\boldsymbol{a}_k^{(u)}$ represents the embedding of $k$-th item. The model employs a two-layer fully connected network to compute the weight between the node $n$ and the $k$-th interacted item, which can be expressed as:
\begin{small}
\begin{equation}
\label{eq:weights_of_target}
w_k^{(u)}(n)=\phi\left(\boldsymbol{W}_w^{(2)}\phi\left(\boldsymbol{W}_w^{(1)}\left[\boldsymbol{a}_k^{(u)};\boldsymbol{a}_k^{(u)}\odot \boldsymbol{w}_n;\boldsymbol{w}_n\right]\right) \right)\in \mathbb{R},
\end{equation}    
\end{small}where $\boldsymbol{W}_w^{(1)}\in \mathbb{R}^{h\times 3d}, \boldsymbol{W}_w^{(2)}\in \mathbb{R}^{1\times h}$, $\boldsymbol{w}_n$ is the embedding of the node $n$, and $\phi$ is the activation function.
For notational simplicity, we omit $n$ and denote the weight by $w_k^{(u)}$ in the following.

To capture dynamic user interests, the interaction sequence is partitioned into $K'$ time windows, where the $t$-th time window $T_t$ of length $k$ is a set of continuous indices, i.e., $T_t=\{i_0+1,\dots,i_0+k\}$. These time windows are mutually exclusive, and their union is exactly $\left[K'\right]$.  
In the $t$-th time window, the interacted item embeddings are aggregated using weights calculated in {Eq. (\ref{eq:weights_of_target})}: 
\begin{small}
\begin{equation}
    \boldsymbol{z}_t^{(u)}=\sum_{k\in T_t} w_k^{(u)} \boldsymbol{a}_k^{(u)}\in \mathbb{R}^d.
\end{equation}
\end{small}Then, the concatenation of $K'$ aggregated embeddings and the embedding of node $n$ are fed into an $L$-layer MLP to output the user's preference score for node:
\begin{small}
\begin{equation}
    f_{\text{DIN}}(u,n) 
    =\boldsymbol{W}_L \cdot \phi_{L-1} \circ \dots \circ \phi_1  \Big( \big[ \boldsymbol{z}_1^{(u)}; \boldsymbol{z}_2^{(u)}; \dots; \boldsymbol{z}_{K'}^{(u)}; \boldsymbol{w}_{n} \big]\Big)
\end{equation}
\end{small}where $\boldsymbol{W}_L\in \mathbb{R}^{1\times d_{L-1}}$, and the function $\phi_k(\boldsymbol{x})$ is defined:
\begin{small}
\begin{equation}
\phi_k(\boldsymbol{x})=\phi(\boldsymbol{W}_k \boldsymbol{x})\in \mathbb{R}^{d_k\times 1}, \forall k\in [L-1].
\end{equation}
\end{small}The element-wise activation function $\phi$ is Lipschitz continuous with Lipschitz constant $c_{\phi}$, has the property $\phi(0)=0$ and $\boldsymbol{W}_k \in \mathbb{R}^{d_k \times d_{k-1}}$ represents the weight matrix. 

The preference model $\mathcal{M}$ is exactly the function space of $f_{\text{DIN}}$. Given a user $u$, the preference score for node $n_i^j$ is denoted as $o_i^j(u) = f_{\text{DIN}}(u, n_i^j)$. For simplicity, we omit $u$ and adopt the notation $o_i^j$ for the remainder of this paper.

\subsubsection{Top-k Retrieval Process}
With the tree index and the preference model, tree-based models enable efficient and effective  top-$k$ item retrieval. Since each item is mapped to a leaf, the  retrieval task is transformed into retrieving the top-$k$ leaf nodes for a given user $u$. As the tree is maintained as a max-heap, this can be achieved by only searching the top-$k$ nodes at each level from top to bottom along the tree. Thus, the retrieval time complexity is logarithmic w.r.t. the number of items, as a constant number of nodes are searched per layer, and the number of layers scales logarithmically with the number of items.

The retrieval process is presented in {Algorithm \ref{alg:beam_search}}, where $o^{\mathcal{M}}(n|u)$ is the user $u$'s preference for node $n$ computed by the preference model $\mathcal{M}$. Essentially, this process is a layer-wise beam search with beam size $k$ along the tree $\mathcal{T}$, guided by the preference $o_i^j$,
and can be formulated as follows:
\begin{equation}
    \label{eq: beam search process}
    \mathcal{B}^{j}(u)\in\operatornamewithlimits{argTopk}\limits_{i\in \widetilde{\mathcal{B}}^{j}(u)}\ o_i^{j},
\end{equation}
where $\mathcal{B}^j(u)$ denotes the set of indices of selected nodes in the $j$-th level for the user $u$, and $\widetilde{\mathcal{B}}^{j}(u) = \left\{ \delta(n) \mid n \in \bigcup_{i \in \mathcal{B}^{j-1}(u)} \mathcal{C}(n^{j-1}_i) \right\}$ represents the index set of nodes to be expanded in the $j$-th level. It's important to note that Eq. (\ref{eq: beam search process}) employs the symbol “$\in$" instead of “$=$" because ties in probabilities may occur, leading to multiple potential sets of top-$k$ nodes. If $\left| \widetilde{\mathcal{B}}^{j}(u) \right| < k$, then all nodes in $j$-th level will be selected.
By recursively applying {Eq. (\ref{eq: beam search process})} until the $H$-th level, the top-$k$ leaf nodes $\mathcal{B}^H(u)$ are retrieved, and the final retrieved item set for the user $u$ is
\begin{equation}
    \widehat{\mathcal{Y}}(u) = \{\pi(n_i^H) \mid  i\in\mathcal{B}^H(u)\}.
\end{equation}

\begin{algorithm}[t]
   \caption{\textsc{Beam Search: Layer-wise Retrieval}}\label{alg:beam_search}
   \textbf{Input:} User $u$, tree $\mathcal{T}$, beam size $k$, the preference model $\mathcal{M}$\\
   \textbf{Output:} top-$k$ items\\
\begin{algorithmic}[1]
   \STATE Result set $A=\emptyset$, candidate set $Q=\{root\ node\ n_1^0\}$;\\
   \WHILE{Q is not empty}
   \STATE Remove all leaf nodes from $Q$ and add them into result set $A$ if Q contains leaf nodes;\\
   \STATE Compute preference $o^{\mathcal{M}}(n|u)$ through $\mathcal{M}$ for each node in set $Q$;\\
   \STATE Parents=\{top-$k$ nodes of Q according to $o^{\mathcal{M}}(n|u)$\};\\
   \STATE Q=\{children of node $n\ |\ n$ $\in$ Parents\};\\
   \ENDWHILE
   \STATE \textbf{return} The top-$k$ items w.r.t. the top-$k$ leaf nodes according to $o^{\mathcal{M}}(n|u)$, $n\in A$.
\end{algorithmic}
\end{algorithm}

\subsubsection{Training Process in TDM and JTM}
To facilitate understanding of the optimization of tree-based models, we now introduce the training process in TDM and JTM.
Both TDM and JTM learn the preference model and tree alternately, fixing one while updating the other until convergence. To learn the preference model, they regard the training task as a series of independent node-wise binary classification problems. Supposing user $u$ has an interaction with an item that corresponds to a leaf node, this leaf node and its ancestors are positive nodes with label 1, while all other tree nodes  are negative nodes with label 0. Then,  the training loss for user $u$ is 
\begin{small}
\begin{equation}
\begin{aligned}
\label{binary_corss_entropy}
   \mathcal{L}(u,S_u^+,S_u^-)=&-\sum_{n\in S_u^+}
    z_u(n)\log p(\hat{z}_u(n)=1|u)\\
    &-\sum_{n\in S_u^-}(1-z_u(n))\log p(\hat{z}_u(n)=0|u).
\end{aligned}
\end{equation}
\end{small}$S_u^+$ and $S_u^-$ denote the positive nodes and negative nodes for user $u$, respectively. $z_u(n)$ denotes the label of node $n$ for user $u$. $p(\hat{z}_u(n)=1|u)$ and $p(\hat{z}_u(n)=0|u)$ denote the like/dislike probabilities for user $u$ to node $n$, predicted by the preference model. Since using all negative nodes is prohibitive in time and memory, both TDM and JTM
sample a subset of negative nodes $S_u^{'-}$ uniformly from $S_u^{-}$, 
and the loss $\mathcal{L}(u,S_u^+,S_u^{'-})$ is used instead. 

To learn the tree, both TDM and JTM adopt hierarchical clustering-like strategies to reassign the items to leaf nodes.  TDM recursively applies k-means clustering on the updated item embeddings until each cluster contains only one item, thereby constructing a tree where each item is assigned to a leaf node. JTM assigns items layer by layer from the root to the leaves in a way that maximizes the total log-likelihood of user preferences for positive nodes at each layer.
In our work, we propose a tree learning method compatible with our proposed preference model learning mode.

\section{Learning Deep Tree-based Retriever}
The performance of the deep tree-based retriever depends on both the preference model and the tree index. 
To learn the deep tree-based retriever, we need to optimize the preference model (i.e., the deep model that outputs the scores between users and nodes) and update the tree index (i.e., the mapping $\pi$ between items and leaf nodes).  Since the tree index updating is discrete and non-differentiable, we learn the preference model and the tree index alternately. Concretely, we fix the mapping $\pi$ and apply stochastic gradient descent to optimize the preference model; then, we fix the preference model and use a discrete optimization method to update the mapping $\pi$. 
This alternating process of model optimization and tree index updating improves retrieval performance gradually until convergence.
 
In this section, we elaborate the learning process of the proposed DTR. We (i) first introduce a softmax-based multi-class training mode to close the gap between training and inference (Subsection~\ref{sec: section of multi-class classification}); (ii) mitigate suboptimality of the softmax loss under beam search via a label rectification method for internal nodes (Subsection~\ref{sec: rectification loss}); (iii) enable efficient training in practice with a tree-based sampling method (Subsection~\ref{sec: negative sampling}); and (iv) update the tree index via top-down item reassignment to align it with the optimized preference model (Subsection~\ref{sec: tree index updating}). An algorithmic overview is provided in Appendix~\ref{appendix: overview of learning process of DTR}.

\subsection{Softmax-based Multi-class Classification Training}
\label{sec: section of multi-class classification}
We develop a softmax-based multi-class classification training mode to address the shortcomings of the binary classification training mode in TDM and JTM.  We regard the training task as a multi-class classification problem at each layer, and employ the classic multi-class cross-entropy loss to optimize the preference model. Upon examining the Bayes optimality of this training scheme, we observe that the traditional multi-class cross-entropy loss still leads to suboptimality under beam search.

\subsubsection{Multi-class Cross-entropy Loss}
\label{subsubsec:multi-class cross-entropy loss}
TDM and JTM formulate the training task as a series of independent node-wise binary classification problems, employing binary cross-entropy as the training loss.  In this mode,  each node contributes to the loss independently, resulting in a lack of explicit horizontal competition among nodes within the same layer. 
However, the inference process involves a top-down, layer-wise retrieval that requires comparison among candidate nodes within each layer.
This creates a gap between training and prediction in TDM and JTM.
To bridge this gap, we reformulate the training at each layer as a multi-class classification problem. By employing a multi-class cross-entropy loss, we encourage direct competition among nodes within the same layer, which better aligns the training process with inference and ultimately improves retrieval accuracy.

In our proposed training mode, we treat each layer $j$ as an independent multi-class classification task with $N_j$ classes, where each node corresponds to one class.
Given an instance $(u,y)$, we attach a label $z_i^j(u,y)$ to each node $n_i^j$ to indicate whether it is positive or negative. For simplicity, we will omit $u,y$ and use the notation $z_i^j$ throughout the paper.
The positive node can be identified by backtracking from the item $y$'s mapping leaf node to the root node, as in TDM and JTM, and the corresponding label will be assigned as $1$, i.e., $z^j_{\delta^j(y)}=1$. The remaining $N_j-1$ nodes are the negative nodes with the label $0$. 
The preference model outputs a score for each node, resulting in $N_j$ preference scores for the $j$-th layer,  which is tailored for the multi-class classification task with $N_j$ classes.
For the given instance $(u,y)$, the index of the positive node in layer $j$ is $\delta^j(y)$. Then, the training loss at layer $j$, specifically the multi-class cross-entropy loss, is calculated as follows:
\begin{equation}
\label{eq:multi_cross_entropy_single_layer}
    \mathcal{L}_{j}(u,y)=-\log p^j_{\delta^j(y)}=-\log \frac{\exp o_{\delta^j(y)}^j}{\sum_{k=1}^{N_j} \exp o_k^j}.
\end{equation}
As shown in Eq. (\ref{eq:multi_cross_entropy_single_layer}), the loss for a given layer involves all nodes in that layer simultaneously, which mitigates the aforementioned training-inference discrepancy. The training loss w.r.t. the whole tree is the sum of the losses for each layer.

\subsubsection{Suboptimality of Multi-class Cross-entropy Loss Under Beam Search}
\label{sec: suboptimality of ce loss}
Treating the training task as a multi-class classification problem at each layer can alleviate shortcomings associated with binary classification. However, the multi-class cross-entropy loss still leads to suboptimality under beam search. 
In this subsection, we demonstrate how multi-class cross-entropy loss results in suboptimal outcomes under beam search.

Similar to Bayes optimality defined in \cite{yang2020consistency, zhuo2020learning}, we define the top-$k$ retrieval Bayes optimality under beam search to adapt to our hierarchical multi-class classification setting:
\begin{Definition}[Top-$k$ Retrieval Bayes Optimal]Given the beam size $k$, a tree-based model consist of a tree $\mathcal{T}$ and a preference model $\mathcal{M}$ is top-$k$ retrieval Bayes optimal, if the following equation
    \begin{equation}
    \widehat{\mathcal{Y}}(u) \in \operatornamewithlimits{argTopk}\limits_{y\in \mathcal{Y}} \eta_y(u)
    \end{equation}
\label{def: top-k optimality}
holds for any $u\in \mathcal{U}$.
\end{Definition}
The underlying idea of {Definition \ref{def: top-k optimality}} is straightforward: for each user $u$, a top-$k$ retrieval Bayes optimal tree-based model should return the $k$ items with the highest conditional probabilities. 

Building on the {Definition \ref{def: top-k optimality}}, we show that optimizing tree-based models using multi-class cross-entropy loss can result in suboptimal outcomes under beam search. 
Our demonstration leverages Bregman divergence~\cite{banerjee2005optimality}. Let $N$ denote the number of classes. For a strictly convex and differentiable function $\psi: \mathbb{R}^N \to \mathbb{R}$, the Bregman divergence $D_\psi: \mathbb{R}^N \times \mathbb{R}^N \to \mathbb{R}$ induced by $\psi$ is defined as follows:
\begin{equation}
    D_{\psi}(\boldsymbol{z},\boldsymbol{o}) = \psi(\boldsymbol{z})-\psi(\boldsymbol{o})-\nabla\psi(\boldsymbol{o})^T(\boldsymbol{z}-\boldsymbol{o}),
\end{equation}
where $\boldsymbol{z}, \boldsymbol{o}\in \mathbb{R}^N$. Previous work \cite{yang2020consistency, banerjee2005optimality} demonstrated that loss functions expressible as Bregman divergence have a specific rank-preserving property, and we present their theoretical result in the following. We first introduce the definition of rank consistent as follows: 
\begin{Definition}[Rank Consistent]
Given $\boldsymbol{x},\boldsymbol{y}\in\mathbb{R}^N$, we say that $\boldsymbol{x}$ is rank consistent w.r.t. $\boldsymbol{y}$, denoted as 
$R(\boldsymbol{x},\boldsymbol{y})$, if and only if for all $i,j\in \left[N\right]$,
\begin{equation*}
    x_i > x_j \Longleftrightarrow y_i > y_j.
\end{equation*}
\end{Definition}
\noindent Then, we state their theorem with modifications to adapt to our problem setting:
\begin{Theorem}[Theorem 3.1 of \cite{yang2020consistency}]
 Given a convex, differentiable function $\psi:\mathbb{R}^N\mapsto \mathbb{R}$. Let $\boldsymbol{z}$ be a random vector taking values in $\mathbb{R}^N$ for which both $\mathbb{E}[\boldsymbol{z}]$ and $\mathbb{E}[\psi(\boldsymbol{z})]$ are finite. If continuous function $g:\mathbb{R}^N\mapsto \mathbb{R}^N$ satisfies that $R(\boldsymbol{s},g(\boldsymbol{s}))$ holds for $\forall \boldsymbol{s}\in \operatorname{domain}(g)$, and $\mathbb{E}[\boldsymbol{z}]\subseteq \operatorname{range}(g)$, then 
\begin{equation*}
    \operatornamewithlimits{argmin}\limits_{\boldsymbol{s}\in \mathbb{R}^N} \mathbb{E}_{\boldsymbol{z}}\left[D_{\psi}(\boldsymbol{z},g(\boldsymbol{s})\right] \subseteq  \left\{\boldsymbol{s}\in \mathbb{R}^N\mid R(\boldsymbol{s},\mathbb{E}_{\boldsymbol{z}}[\boldsymbol{z}])\right\}.
\end{equation*}
\label{theo: property of minimum of Bregman divergence}
\end{Theorem}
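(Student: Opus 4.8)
The plan is to reduce the claim to a pointwise statement about conditional expectations and then invoke the defining inequality of Bregman divergence. First I would fix an arbitrary $\boldsymbol{s}\in\mathbb{R}^N$ and expand $\mathbb{E}_{\boldsymbol{z}}[D_\psi(\boldsymbol{z},g(\boldsymbol{s}))]$ using linearity of expectation: the term $\mathbb{E}[\psi(\boldsymbol{z})]$ does not depend on $\boldsymbol{s}$, and the inner-product term collapses because $\mathbb{E}[\boldsymbol{z}-g(\boldsymbol{s})]=\mathbb{E}[\boldsymbol{z}]-g(\boldsymbol{s})$, so that
\begin{equation*}
\mathbb{E}_{\boldsymbol{z}}[D_\psi(\boldsymbol{z},g(\boldsymbol{s}))] = \mathbb{E}[\psi(\boldsymbol{z})] - \psi(g(\boldsymbol{s})) - \nabla\psi(g(\boldsymbol{s}))^T(\mathbb{E}[\boldsymbol{z}]-g(\boldsymbol{s})) = \mathbb{E}[\psi(\boldsymbol{z})] - \psi(\mathbb{E}[\boldsymbol{z}]) + D_\psi(\mathbb{E}[\boldsymbol{z}], g(\boldsymbol{s})).
\end{equation*}
Since the first two terms are constants independent of $\boldsymbol{s}$, minimizing the left-hand side over $\boldsymbol{s}$ is equivalent to minimizing $D_\psi(\mathbb{E}[\boldsymbol{z}], g(\boldsymbol{s}))$ over $\boldsymbol{s}$. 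This identity is the workhorse of the whole argument.

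Next I would argue that $D_\psi(\mathbb{E}[\boldsymbol{z}], g(\boldsymbol{s}))$ attains its minimum value, namely $0$, and that this minimum is achieved. By strict convexity of $\psi$ we have $D_\psi(\boldsymbol{a},\boldsymbol{b})\ge 0$ with equality if and only if $\boldsymbol{a}=\boldsymbol{b}$. The assumption $\mathbb{E}[\boldsymbol{z}]\in\operatorname{range}(g)$ guarantees there exists at least one $\boldsymbol{s}^\star$ with $g(\boldsymbol{s}^\star)=\mathbb{E}[\boldsymbol{z}]$, so the infimum $0$ is attained. Consequently any minimizer $\boldsymbol{s}$ must satisfy $D_\psi(\mathbb{E}[\boldsymbol{z}], g(\boldsymbol{s}))=0$, which forces $g(\boldsymbol{s})=\mathbb{E}[\boldsymbol{z}]$.

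Finally I would translate $g(\boldsymbol{s})=\mathbb{E}[\boldsymbol{z}]$ into the rank-consistency conclusion. By hypothesis, $R(\boldsymbol{s},g(\boldsymbol{s}))$ holds for every $\boldsymbol{s}$ in the domain of $g$, so for a minimizer we get $R(\boldsymbol{s}, g(\boldsymbol{s})) = R(\boldsymbol{s}, \mathbb{E}[\boldsymbol{z}])$, i.e. $\boldsymbol{s}\in\{\boldsymbol{s}\in\mathbb{R}^N\mid R(\boldsymbol{s},\mathbb{E}_{\boldsymbol{z}}[\boldsymbol{z}])\}$. Taking the union over all minimizers yields the claimed set inclusion. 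I expect the only delicate point to be the measure-theoretic bookkeeping at the start — ensuring the expansion of the expectation is legitimate, which is exactly what the finiteness hypotheses on $\mathbb{E}[\boldsymbol{z}]$ and $\mathbb{E}[\psi(\boldsymbol{z})]$ are there to supply — together with being careful that strict convexity (rather than mere convexity) is what is actually needed for the "equality iff equal arguments" step; if $\psi$ is only convex one would restrict attention to where it is strictly convex or argue via the essential smoothness used in \cite{banerjee2005optimality}. The rest is routine.
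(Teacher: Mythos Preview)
Your proposal is correct and follows the standard route: the decomposition $\mathbb{E}_{\boldsymbol{z}}[D_\psi(\boldsymbol{z},g(\boldsymbol{s}))]=\mathbb{E}[\psi(\boldsymbol{z})]-\psi(\mathbb{E}[\boldsymbol{z}])+D_\psi(\mathbb{E}[\boldsymbol{z}],g(\boldsymbol{s}))$ reduces the problem to minimizing $D_\psi(\mathbb{E}[\boldsymbol{z}],g(\boldsymbol{s}))$, the range hypothesis gives attainment of the value $0$, strict convexity forces $g(\boldsymbol{s})=\mathbb{E}[\boldsymbol{z}]$, and the rank hypothesis on $g$ finishes it. The paper does not supply its own proof of this statement --- it is quoted from \cite{yang2020consistency} --- but your argument is exactly the one underlying that reference and, in particular, its first two steps are precisely the content of Lemma~\ref{lemm: minimizing of Bregman divergence} (Theorem~1 of \cite{banerjee2005optimality}), which the paper records as an auxiliary lemma; your caveat that strict convexity (not mere convexity) is what drives the ``equality iff equal arguments'' step is well placed and matches how the paper itself defines Bregman divergence.
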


    


As mentioned in Section~\ref{subsubsec:multi-class cross-entropy loss}, a label $z_i^j$ is attached to the node $n_i^j$ for a given instance $(u, y)$. \textcolor{black}{When the user $u$ is fixed, $z_i^j$ is a conditional random variable, with its randomness depending on the item $y$. The variables in the $j$-th layer compose the random vector $\boldsymbol{z}^j$}.
As the label $z_{\delta^j(y)}^j$ is set to $1$ and other labels in the $j$-th layer are set to $0$ for the instance $(u,y)$, the expectation of $\boldsymbol{z}^j$ for a given user $u$ is as follows:
\begin{small}
\begin{equation}
    \mathbb{E}[\boldsymbol{z}^j|u]=\left(\sum\limits_{n\in \mathfrak{L}(n_1^j)}\eta_{\pi(n)}, \sum\limits_{n\in \mathfrak{L}(n_2^j)}\eta_{\pi(n)},\ \cdots,\sum\limits_{n\in \mathfrak{L}\left(n_{N_j}^j\right)}\eta_{\pi(n)}\right).
    \label{eq: expectation of layer random vector}
\end{equation}
\end{small}For simplicity, we omit $u$ and use notation $\eta_{\pi(n)}$ to represent $\pi(n)$’s conditional probability given the user $u$.

With the fact that multi-class cross-entropy loss is a Bregman divergence~\cite{yang2020consistency} and the softmax function $g$ satisfies $R(\boldsymbol{s},g(\boldsymbol{s}))$ for any $\boldsymbol{s}\in \mathbb{R}^N$, 
according to  Theorem \ref{theo: property of minimum of Bregman divergence}, when given user $u$, minimizing the conditional risk of multi-class cross-entropy loss of layer $j$, i.e.,
\begin{small}
\begin{equation}
    \mathcal{R}_j(u)=\mathbb{E}_{y\sim \eta_y(u)}\left[\mathcal{L}_j(u,y)\right]
    =
    \mathbb{E}_{y\sim \eta_y(u)} \left[-\log \frac{\exp o_{\delta^j(y)}^j}{\sum_{k=1}^{N_j} \exp o_k^j}\right],
    \label{eq: conditional risk}
\end{equation}
\end{small}aligns the ranks of each component in $\boldsymbol{o}^j$ with those in $\mathbb{E}\left[\boldsymbol{z}^j|u\right]$ (i.e., $R(\boldsymbol{o}^j, \mathbb{E}[\boldsymbol{z}^j|u])$), where  $\boldsymbol{o}^j$ is the preference score vector for the $j$-th layer whose $i$-th component is $o^j_i$. Since the multi-class cross-entropy loss ({Eq. (\ref{eq:multi_cross_entropy_single_layer})}) is optimized per layer, $R(\boldsymbol{o}^j, \mathbb{E}[\boldsymbol{z}^j|u])$ holds for $1\le j\le H$. Therefore, suboptimality under beam search can be identified.

\begin{Proposition}
 The multi-class cross-entropy loss results in the tree model not being top-$k$ retrieval Bayes optimal. 
\label{pro:suboptimality of multi-class cross-entropy loss}
\end{Proposition}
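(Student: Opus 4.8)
The plan is to exhibit a concrete counterexample: a tree $\mathcal{T}$, a conditional distribution $\eta_y(u)$ over items, and a beam size $k$, such that any preference model $\mathcal{M}$ satisfying the rank conditions $R(\boldsymbol{o}^j, \mathbb{E}[\boldsymbol{z}^j|u])$ for all $1 \le j \le H$ — which, by the argument preceding the proposition, is exactly what minimizing the layer-wise multi-class cross-entropy risk enforces — nevertheless produces a retrieved set $\widehat{\mathcal{Y}}(u)$ that is not in $\operatornamewithlimits{argTopk}_{y\in\mathcal{Y}} \eta_y(u)$. Since the conditional risk minimizer only pins down the \emph{ranking} of scores within each layer (not their magnitudes), it suffices to show that the ranking dictated by $\mathbb{E}[\boldsymbol{z}^j|u]$ — i.e., ordering nodes by the total mass $\sum_{n\in\mathfrak{L}(n_i^j)}\eta_{\pi(n)}$ of leaves in their subtrees — can send beam search down the wrong branch.

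The key steps, in order: (1) Fix $B=2$, height $H=2$, so there are four items $y_1,y_2,y_3,y_4$ mapped to the four leaves, with leaves $\{y_1,y_2\}$ under the left child $n_1^1$ of the root and $\{y_3,y_4\}$ under the right child $n_2^1$. Take beam size $k=1$. (2) Choose $\eta$ so that the single most probable item lives in the subtree with \emph{smaller} total mass: e.g.\ $\eta_{y_1}=0.35$, $\eta_{y_2}=0$, $\eta_{y_3}=0.34$, $\eta_{y_4}=0.31$. Then $\mathbb{E}[\boldsymbol{z}^1|u] = (0.35,\ 0.65)$, so rank consistency forces $o_2^1 > o_1^1$, and beam search with $k=1$ selects the right child $n_2^1$ at level $1$. (3) At level $2$, the only expandable nodes are the children of $n_2^1$, namely the leaves for $y_3,y_4$; beam search therefore returns one of $\{y_3,y_4\}$. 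But $\operatornamewithlimits{argTopk}_{y}\eta_y(u) = \{y_1\}$, and $y_1$ was pruned at level $1$. Hence $\widehat{\mathcal{Y}}(u) \notin \operatornamewithlimits{argTopk}_y \eta_y(u)$, so the tree model is not top-$k$ retrieval Bayes optimal, contradicting Definition \ref{def: top-k optimality}. (4) Note this holds regardless of how the remaining freedom in $\mathcal{M}$ is resolved, since the level-$1$ ranking alone already forces the fatal pruning; the rank conditions at level $2$ are consistent with (in fact, irrelevant to) the failure.

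The main obstacle — really the only subtlety — is making sure the counterexample is airtight against the ``$\in$ vs.\ $=$'' caveat in Eq.~(\ref{eq: beam search process}): since ties in the expected-label vector would allow beam search to break ties favorably, I choose $\eta$ with all relevant partial sums \emph{strictly} ordered ($0.35 < 0.65$ at level $1$; $y_3 \ne y_4$ at level $2$), so the $\operatorname{argTopk}$ at each level is a singleton and the rank-consistency constraint is an honest strict inequality with no escape. One should also confirm that such an $\mathcal{M}$ is actually realizable in the DIN function class — but since the statement concerns the Bayes-optimal scores (the population minimizer of $\mathcal{R}_j(u)$ is characterized purely in terms of $R(\boldsymbol{o}^j,\mathbb{E}[\boldsymbol{z}^j|u])$ via Theorem \ref{theo: property of minimum of Bregman divergence}), the proposition is about the risk minimizer per se, and the counterexample on $\eta$ alone establishes the impossibility; no appeal to the expressivity of $f_{\text{DIN}}$ is needed.
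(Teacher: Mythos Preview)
Your proposal is correct and follows essentially the same approach as the paper: both prove the proposition by explicit counterexample, exploiting the mismatch between the sum-aggregation $\sum_{n\in\mathfrak{L}(n_i^j)}\eta_{\pi(n)}$ induced by the cross-entropy minimizer and the max-aggregation needed for top-$k$ correctness under beam search. The paper uses a larger instance ($|\mathcal{Y}|=8$, $H=3$, $k=3$) while your $|\mathcal{Y}|=4$, $H=2$, $k=1$ example is more economical and your explicit attention to strict inequalities (to rule out favorable tie-breaking) is a nice touch the paper leaves implicit.
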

\begin{proof}
See Appendix~\ref{appendix: proof of proposition suboptimality of multi-class cross-entropy loss}.
\end{proof}

\subsection{Rectification Under Beam Search}
\label{sec: rectification loss}
This section introduces our method for mitigating suboptimality under beam search. Our approach involves modifying the label assignment for non-leaf nodes and adapting the loss function accordingly.
Furthermore, we provide the theoretical support for these modifications and propose effective solutions for their practical implementation.

\subsubsection{Label Rectification}
For an observed instance \((u,y)\), the ancestors of the leaf node \(\pi^{-1}(y)\) are not labeled as \(1\) by default. The label \(z_i^j\) is set to \(1\) only when item \(y\) attains the largest conditional probability among the items in \(\pi(\mathfrak{L}(n_i^j))\).
We slightly abuse notation by using \(\pi(\mathfrak{L}(n_i^j))\) to denote the set of items mapped from the leaf nodes in \(\mathfrak{L}(n_i^j)\).
Denote the rectified label by $\bar{z}^j_i$, the formalization of assignment is as follows:
\begin{equation}
    \bar{z}_i^j = \begin{cases}
    1, & n_i^j = \rho^j\left(\pi^{-1}(y)\right) \wedge y=\operatornamewithlimits{argmax}\limits_{y'\in \pi\left(\mathfrak{L}(n_i^j)\right)} \eta_{y'}\\
    0, & \text{else}.
    \end{cases}
    \label{eq: unnormaliz}
\end{equation}
An illustrative example of the label-rectification assignment is provided in Appendix~\ref{appendix: illustration of label rectification}.
Under this rectified assignment, given the user $u$, for each node $n_i^j$, we can calculate the expectation of the corresponding random variable: $$\mathbb{E}_{y\sim\eta_y(u)}[\bar{z}_i^j|u]=\sum_{y\in \mathcal{Y}}\bar{z}_i^j*\eta_y(u)= \max\limits_{n\in \mathfrak{L}(n_i^j)} \eta_{\pi(n)},$$ 
so the expectation of $j$-th layer's random vector is 
\begin{small}
\begin{equation}
\label{eq: the expectation of unnormalized rectified label}
    \mathbb{E}[\bar{\boldsymbol{z}}^j|u]=\left(\max\limits_{n\in \mathfrak{L}(n_1^j)} \eta_{\pi(n)}, \max\limits_{n\in \mathfrak{L}(n_2^j)} \eta_{\pi(n)}, \dots, \max\limits_{n\in \mathfrak{L}(n_{N_j}^j)} \eta_{\pi(n)}\right),
\end{equation}
\end{small}indicating the rectified labels can align the tree with the max-heap in expectation.

Since we regard the recommendation task as a multi-class classification problem, the conditional probabilities for a given user $u$ satisfy the normalization condition $\sum_{y\in \mathcal{Y}} \eta_y(u) = 1$. The current expectations of each layer's random variables corresponding to the rectified label $\bar{z}_i^j$ are not normalized; hence, a normalization term $\alpha^j(u)=\sum_{i=1}^{N_j}\bar{z}_i^j(u)$ should be applied to the unnormalized rectified label, resulting in the following label assignment:
\begin{Definition}[Label Rectification]
For a given instance $(u,y)$, the normalized rectified label $\tilde{z}_i^{j}$ of node $n_i^j$ is defined as follows:
\begin{equation}
    \tilde{z}_i^j = \begin{cases}
    \frac{1}{\alpha^j(u)}, & n_i^j = \rho^j\left(\pi^{-1}(y)\right) \wedge y=\operatornamewithlimits{argmax}\limits_{y'\in \pi\left(\mathfrak{L}(n_i^j)\right)} \eta_{y'}\\
    0, & \text{else}.
    \end{cases}
\label{eq: normalized rectified label}
\end{equation}
\label{def: label rectification}
\end{Definition}

\vspace{-2em}
\subsubsection{The Induced Loss Function}
With label rectification, the target vector for each layer is no longer a one-hot vector; instead, the target label is now a real number $\tilde{z}_i^j$.
To adapt to the label rectification and still utilize the  property of Bregman divergence, the following loss function is proposed:
\begin{equation}
        \widetilde{\mathcal{L}}(\boldsymbol{e}^{(i)}(w), \boldsymbol{o})=-w\log\left(\frac{\exp{o}_i}{\sum_{k=1}^N\exp {o}_k}\right)+ w\log w + w-1.
        \label{eq: modified cross-entropy}
\end{equation}
 Here, $w\in [0, +\infty)$, $\boldsymbol{e}^{(i)}(w),\boldsymbol{o}\in \mathbb{R}^N$,  and  $\boldsymbol{e}^{(i)}(w)=(0,\dots,0,w,0,\dots,0)$ where the $i$-th component is $w$, which will be set to $\tilde{z}^j_i$ as presented in Eq. \eqref{eq: normalized rectified label} to adapt to the label rectification, and other components are $0$. 
 
 Compared to the original multi-class cross-entropy loss, the loss function in {Eq. (\ref{eq: modified cross-entropy})} includes a slight modification by incorporating a weighting factor and a constant addition for a given $w$, 
 which is fully compatible with the rectified label. Importantly, it retains its nature as a Bregman divergence, allowing for theoretical analysis based on Theorem~\ref{theo: property of minimum of Bregman divergence}.
\begin{Proposition}
      The loss function in {Eq. (\ref{eq: modified cross-entropy})} can be expressed as Bregman divergence.
    \label{pro: modified softmax Bregman divergence}
\end{Proposition}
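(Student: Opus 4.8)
The plan is to exhibit an explicit strictly convex differentiable generating function $\psi$ and an invertible link $g$ so that $\widetilde{\mathcal{L}}(\boldsymbol{e}^{(i)}(w),\boldsymbol{o})$ equals $D_{\psi}(\boldsymbol{e}^{(i)}(w), g(\boldsymbol{o}))$. The natural candidate is the same negative-entropy-type potential used for the ordinary cross-entropy, namely $\psi(\boldsymbol{s}) = \sum_{n=1}^{N} s_n \log s_n$ (with the convention $0\log 0 = 0$), together with the softmax link $g(\boldsymbol{o})_n = \exp o_n / \sum_{k=1}^{N} \exp o_k$. The only difference from the one-hot case is that the first argument is now $\boldsymbol{z} = \boldsymbol{e}^{(i)}(w)$, whose nonzero entry is $w$ rather than $1$, so $\boldsymbol{z}$ need not lie on the probability simplex; this is exactly why the extra terms $w\log w + w - 1$ appear.

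First I would write out $D_{\psi}(\boldsymbol{z}, \boldsymbol{q})$ for a general $\boldsymbol{z}$ and $\boldsymbol{q} = g(\boldsymbol{o})$. Using $\nabla\psi(\boldsymbol{q})_n = \log q_n + 1$, one gets
\begin{equation*}
D_{\psi}(\boldsymbol{z}, \boldsymbol{q}) = \sum_{n} z_n \log z_n - \sum_{n} q_n \log q_n - \sum_{n} (\log q_n + 1)(z_n - q_n).
\end{equation*}
Since $\sum_n q_n = 1$ (softmax output) the $-q_n\log q_n$ terms and the $+q_n(\log q_n + 1)$ terms cancel, leaving
\begin{equation*}
D_{\psi}(\boldsymbol{z}, \boldsymbol{q}) = \sum_{n} z_n \log z_n - \sum_{n} z_n \log q_n - \sum_{n} z_n + 1 = \sum_{n} z_n \log\frac{z_n}{q_n} - \sum_{n} z_n + 1,
\end{equation*}
i.e. the generalized (unnormalized) KL divergence. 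Now I substitute $\boldsymbol{z} = \boldsymbol{e}^{(i)}(w)$: only the $i$-th coordinate survives, $\sum_n z_n = w$, $\sum_n z_n\log z_n = w\log w$, and $\sum_n z_n\log q_n = w\log q_i = w\log\big(\exp o_i / \sum_k \exp o_k\big)$. Plugging in gives
\begin{equation*}
D_{\psi}(\boldsymbol{e}^{(i)}(w), g(\boldsymbol{o})) = w\log w - w\log\frac{\exp o_i}{\sum_{k=1}^{N}\exp o_k} - w + 1,
\end{equation*}
which is precisely $\widetilde{\mathcal{L}}(\boldsymbol{e}^{(i)}(w), \boldsymbol{o})$ as defined in Eq.~(\ref{eq: modified cross-entropy}). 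I would also note briefly that $\psi$ is strictly convex and differentiable on the positive orthant (its Hessian is $\operatorname{diag}(1/s_n)$), so $D_{\psi}$ is a bona fide Bregman divergence, and record that $g$ has range inside the relative interior of the simplex, which is what later applications of Theorem~\ref{theo: property of minimum of Bregman divergence} need.

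The computation itself is routine; the only genuine subtlety is bookkeeping around the domain and the convention $0\log 0 = 0$, since $\boldsymbol{e}^{(i)}(w)$ has zero entries and so lies on the boundary of $\operatorname{domain}(\psi)$ — I would remark that $\psi$ extends continuously there and that the Bregman-divergence formula remains valid by the usual limiting argument (the $z_n\log(z_n/q_n)$ term tends to $0$ as $z_n\to 0^+$ because $q_n > 0$). A secondary point worth stating explicitly is the degenerate case $w = 0$, where $\widetilde{\mathcal{L}} = 1$ and $D_{\psi}(\boldsymbol{0}, g(\boldsymbol{o})) = 0 - 0 + 1 = 1$ as well, so the identity holds throughout $w\in[0,+\infty)$. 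That is the whole argument; no part of it is hard, and the main thing to get right is simply making the cancellation $\sum_n q_n = 1$ explicit so the constant terms come out exactly as $+w\log w + w - 1$.
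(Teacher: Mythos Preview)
Your proof is correct and follows exactly the same route as the paper: take $\psi(\boldsymbol{s})=\sum_n s_n\log s_n$ with the softmax link $g$, then expand $D_\psi(\boldsymbol{e}^{(i)}(w),g(\boldsymbol{o}))$ directly and use $\sum_n g(\boldsymbol{o})_n=1$ to simplify. Your arithmetic is in fact cleaner than the paper's appendix version: the constant you obtain, $-w+1$, is the correct Bregman value (consistent with your $w=0$ check giving $D_\psi=1\ge 0$), whereas Eq.~(\ref{eq: modified cross-entropy}) and the paper's own final simplification carry a harmless sign typo $+\,w-1$ in that gradient-irrelevant constant term.
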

\begin{proof}
    See Appendix~\ref{appendix: proof of proposition modified softmax Bregman divergence}.   
\end{proof}

\subsubsection{Theoretical Support}
With the above label rectification method and the induced loss function, for a given instance $(u,y)$, the following loss is optimized in the $j$-th layer: $$\widetilde{\mathcal{L}}_j(u,y) = \widetilde{\mathcal{L}}(\boldsymbol{e}^{(\delta^j(y))}(\tilde{z}^j_{\delta^j(y)}), \boldsymbol{o}^j).$$ 
As established in {Proposition \ref{pro: modified softmax Bregman divergence}}, $\widetilde{\mathcal{L}}$ still owns the property of Bregman divergence.
Notice with the normalization, $\mathbb{E}[\tilde{\boldsymbol{z}}^j|u]\subseteq \operatornamewithlimits{range}(g)$ holds for $1\le j \le H$. Then, for any user $u\in \mathcal{U}$, by the {Theorem \ref{theo: property of minimum of Bregman divergence}}, we have $$\operatornamewithlimits{argmin}\limits_{\boldsymbol{o}^j\in \mathbb{R}^{N_j}}\widetilde{\mathcal{R}}_j(u)\subseteq \left\{\boldsymbol{o}^j\in \mathbb{R}^{N_j}| R(\boldsymbol{o}^j,\mathbb{E}[\tilde{\boldsymbol{z}}^j|u])\right\},$$
where $ \widetilde{\mathcal{R}}_j(u)=\mathbb{E}_{y\sim \eta_y(u)}[\widetilde{\mathcal{L}}_j(u,y)]$ is the conditional risk for user $u$ of $j$-th layer. This indicates that optimizing the modified loss with rectified labels guides the tree model towards being  top-$k$ retrieval Bayes optimal under beam search. Specifically, we have the following theoretical result:

\begin{Proposition}
    For any user $u\in \mathcal{U}$, if the tree model, which consists of a tree $\mathcal{T}$ and a preference model $\mathcal{M}$, satisfies that $R(\boldsymbol{o}^j, \mathbb{E}[\tilde{\boldsymbol{z}}^j|u])$ for all $1\le j \le H$, then the tree model is top-$k$ retrieval Bayes optimal for any beam size $k$.
    \label{prop: optimality of label z under beam search}
\end{Proposition}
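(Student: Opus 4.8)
The plan is to derive the proposition as an immediate consequence of Lemma~\ref{lemma: recursive optimal beam search}, using induction on the layer index $j$ together with the characterization of the rectified-label expectation in Eq.~\eqref{eq: the expectation of unnormalized rectified label}. First I would set up the induction: fix a user $u$ and a beam size $k$. At the root layer ($j=0$) there is only one node, so trivially $\mathcal{B}^0(u) \in \operatornamewithlimits{argTopk}_{i\in[N_0]} \mathbb{E}[\tilde{\boldsymbol{z}}^0_i|u]$. Applying Lemma~\ref{lemma: recursive optimal beam search} repeatedly, the base case propagates upward, giving $\mathcal{B}^j(u) \in \operatornamewithlimits{argTopk}_{i\in[N_j]} \mathbb{E}[\tilde{\boldsymbol{z}}^j_i|u]$ for every $1 \le j \le H$. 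In particular, at the leaf layer $j = H$ we get $\mathcal{B}^H(u) \in \operatornamewithlimits{argTopk}_{i\in[N_H]} \mathbb{E}[\tilde{\boldsymbol{z}}^H_i|u]$.

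Next I would translate this leaf-layer statement into the language of Definition~\ref{def: top-k optimality}. At the leaf layer, each node $n_i^H$ is a leaf, so $\mathfrak{L}(n_i^H) = \{n_i^H\}$ is a singleton; hence, by Eq.~\eqref{eq: the expectation of unnormalized rectified label}, $\mathbb{E}[\bar{z}_i^H|u] = \eta_{\pi(n_i^H)}(u)$, and after normalization $\mathbb{E}[\tilde{z}_i^H|u] = \eta_{\pi(n_i^H)}(u)/\alpha^H(u)$. Since $\alpha^H(u) = \sum_y \eta_y(u) = 1$ at the leaf layer (every item is the unique argmax of its own singleton subtree), in fact $\mathbb{E}[\tilde{z}_i^H|u] = \eta_{\pi(n_i^H)}(u)$. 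The positive scaling does not affect the $\operatornamewithlimits{argTopk}$ anyway, so $\operatornamewithlimits{argTopk}_{i\in[N_H]} \mathbb{E}[\tilde{z}_i^H|u] = \operatornamewithlimits{argTopk}_{i\in[N_H]} \eta_{\pi(n_i^H)}(u)$. Pushing through the bijection $\pi$ between leaves and items, this says exactly that $\widehat{\mathcal{Y}}(u) = \{\pi(n_i^H) : i \in \mathcal{B}^H(u)\} \in \operatornamewithlimits{argTopk}_{y\in\mathcal{Y}} \eta_y(u)$, which is the defining condition of top-$k$ retrieval Bayes optimality. Since $u$ and $k$ were arbitrary, the tree model is top-$k$ retrieval Bayes optimal for every beam size $k$.

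I expect the main obstacle — already absorbed into the cited lemma — to be the inductive step itself, i.e.\ showing that selecting the top-$k$ nodes of layer $j$ by their expected rectified labels guarantees that the children to be expanded at layer $j+1$ still contain a valid top-$k$ set there. The subtlety is that the max-heap property only holds \emph{in expectation} under the rectified labels (Eq.~\eqref{eq: the expectation of unnormalized rectified label} shows $\mathbb{E}[\bar{z}_i^j|u] = \max_{n\in\mathfrak{L}(n_i^j)}\eta_{\pi(n)}$), so one must verify that a top-$k$ set of subtree-maxima at level $j$ has children whose subtree-maxima include a top-$k$ set at level $j+1$; this is a monotonicity/covering argument that relies on every item's subtree-max chain being non-increasing from leaf to root. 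Granting Lemma~\ref{lemma: recursive optimal beam search}, the remaining work here is purely the bookkeeping of the induction and the leaf-layer identification described above, both of which are routine. One minor point to state carefully is the handling of ties (the "$\in$" in the $\operatornamewithlimits{argTopk}$ and in Eq.~\eqref{eq: beam search process}): the conclusion is an existence-of-a-valid-beam statement, consistent with how $\mathcal{B}^j(u)$ is defined, so no additional tie-breaking assumption is needed.
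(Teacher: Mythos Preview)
Your proposal is correct and follows essentially the same route as the paper: induction on the layer index via Lemma~\ref{lemma: recursive optimal beam search}, followed by the observation that at the leaf layer $\mathbb{E}[\tilde{z}_i^H\mid u]=\eta_{\pi(n_i^H)}(u)$, which converts the top-$k$ statement on $\mathcal{B}^H(u)$ into Definition~\ref{def: top-k optimality}. The only cosmetic difference is the base case: the paper anchors the induction at the last layer $j^*-1$ with $N_{j^*-1}\le k$ (where $\mathcal{B}^{j^*-1}(u)=[N_{j^*-1}]$ trivially lies in the argTopk), whereas you start at the root $j=0$; since Lemma~\ref{lemma: recursive optimal beam search} is stated only for $1\le j\le H-1$, you should either note that its proof extends verbatim to $j=0$, or adopt the paper's base case to stay strictly within the lemma's stated range.
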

     
The proof of Proposition~\ref{prop: optimality of label z under beam search} is provided in Appendix
~\ref{appendix: proof of proposition optimality of label z under beam search}.
The above theoretical analysis demonstrates that optimizing the modified loss function with the normalized rectified label can mitigate the issue of suboptimality under beam search of standard multi-class cross-entropy loss. 

\subsubsection{Practical Implementation}
\label{sec: practical implementation}
The above analysis and methodology rely on the conditional probability given user $u$, which is unknown in practice. A direct way would be to estimate the conditional probability using the preference model $\mathcal{M}$. However, since $\mathcal{M}$ is a complicated neural network, estimating the conditional probabilities through it is computationally expensive. Additionally, because $\mathcal{M}$ is randomly initialized, its estimation can be highly inaccurate, leading to convergence issues or the risk of falling into local optima. Considering the above shortcomings, we employ a pre-trained small model to estimate the conditional probability; detailed implementation is provided in the experimental section (Subsection~\ref{subsec: settings}).

In practice, we optimize the loss using stochastic gradient descent. The term $w\log w + w -1$ has zero gradient w.r.t. preference score for a given $w$, and can be disregarded in the practical optimization. Therefore, the loss function (i.e., {Eq. (\ref{eq: modified cross-entropy})}) simplifies to the standard multi-class cross-entropy loss multiplied by $w$. While $w$ should be set as the normalized rectified label $\tilde{z}$  based on our above analysis, we find that this normalization can impede stable training and use the unnormalized rectified label $\bar{z}$ in practice. 

In fact, the unnormalized rectified label is sufficient to mitigate suboptimality, as its expectation still aligns the tree with a max-heap  ({Eq. (\ref{eq: the expectation of unnormalized rectified label})}). The role of normalization is to make the sum of expectations of random variables $\bar{z}^j_i$ of the $j$-th layer equal $1$, to satisfy the constraints in the theoretical analysis.
Compared to the unnormalized label $\bar{z}$, normalization actually only influences the learning rate in the stochastic gradient descent. Note that the normalization coefficients across different layers vary dramatically for a given training instance, but the loss of all layers is aggregated together for optimization, so the normalization will make the optimization hard in practice due to the difficulty to set a suitable learning rate. Therefore, we optimize the modified loss with the original rectified label $\bar{z}^j_i$ without normalization, which can be interpreted as an adaptive adjustment of the learning rate in each layer, and the loss for $j$-th layer is
\begin{small}
\begin{equation}
    \widetilde{\mathcal{L}}_j(u,y)=-\bar{z}^j_{\delta^j(y)}\log\left(\frac{\exp o^j_{\delta^j(y)}}{\sum_{k=1}^{N_j}\exp(o^j_k)}\right),
    \label{eq: rectified loss single layer}
\end{equation}   
\end{small}and the loss of the whole tree is
\begin{small}
\begin{equation}
    \widetilde{\mathcal{L}}(u,y)=\sum_{j=1}^H \widetilde{\mathcal{L}}_j(u,y).
    \label{eq: rectified loss the whole tree}
\end{equation}    
\end{small}

\subsection{Negative Sampling to Speed Up Training}
\label{sec: negative sampling}

For the training instance $(u,y)$, calculating its training loss (i.e., {Eq. (\ref{eq: rectified loss the whole tree})}) requires computing the softmax probability of the positive node at each layer. In practical applications, where the corpus size ranges from millions to billions or even tens of billions, computing softmax probabilities at such scales is extremely inefficient. To promote training efficiency, we employ the sampled softmax technique to estimate the gradient of softmax loss and propose a tree-based sampling method for more accurate gradient estimation.

\subsubsection{Unbiased Estimation of Softmax Loss Gradient}
Sampled softmax is proposed to approximate full softmax during training \cite{ bengio2008adaptive,wu2024effectiveness,chen2025adaptive}. Instead of calculating the training loss of layer $j$ (i.e., {Eq. (\ref{eq:multi_cross_entropy_single_layer})}) over all classes, sampled softmax only considers the positive class and $M$ negative classes. These $M$ negative classes are sampled from all the negative classes according to certain sampling distribution $Q$ with replacement. In the following, we use $\mathcal{I}_M^j$ to denote the set of indices of $M$ negative nodes at layer $j$.
Then, the indices of training nodes in layer $j$ is $\mathcal{I}^j_M \cup \{\delta^j(y)\}$ for a given instance $(u,y)$.
For example, if $\delta^j(y)=2$, $M=5$ and the negative indices set is  $\mathcal{I}^j_M=\{3,4,8,7,4\}$, this indicates that the $2$nd node of layer $j$ is the positive node, the $4$th node of layer $j$ is sampled twice, and the other nodes (indexed at $3,8,7$) are each sampled once.

To reduce the approximation bias, we do not directly use the outputs w.r.t. sampled nodes to approximate the original loss~\cite{bengio2008adaptive}. For $i\in \mathcal{I}^j_M \cup \{\delta^j(y)\}$, a slight adjustment is conducted for each output by 
\begin{equation}
\label{eq:adjusted_output}
    \hat{o}_{i}^j=\left\{
    \begin{aligned}
    &o^j_{i}-\ln(Mq^j_{i})\ \ \ \text{if}\ \ i\neq\delta^j(y), \\
    &o^j_{i}-\ln(1)\ \ \ \ \ \ \ \ \ \text{if}\ \ i=\delta^j(y)
\end{aligned}
\right.
\end{equation}
where $q_{i}^j$ 
denotes the probability of sampling node $n_{i}^j$ from the negative nodes of layer $j$. 
This adjustment guarantees that the gradient of the sampled softmax loss is an unbiased estimator of the true gradient as $M\to \infty$ \cite{bengio2008adaptive}. Then, the loss is calculated over the adjusted outputs, and the training loss {Eq. (\ref{eq: rectified loss single layer})} at layer $j$ can be adjusted to
\begin{small}
\begin{equation}
\label{eq:sampled_softmax_loss_single_layer}
    \widehat{\mathcal{L}}_{j}(u,y)=-\bar{z}^j_{\delta^j(y)}\log \hat{p}^j_{\delta^j(y)}=-\bar{z}^j_{\delta^j(y)}\log \frac{\exp \hat{o}_{\delta^j(y)}^j}{\sum\limits_{k\in \mathcal{I}^j_M \cup \{\delta^j(y)\}} \exp \hat{o}_{k}^j}.
\end{equation}
\end{small}

Previous literature~\cite{bengio2008adaptive,blanc2018adaptive} has proved that proper specified sampling distribution $Q$ used in sampled softmax can lead to an unbiased estimation of the gradient for the original loss, i.e., $\mathbb{E}_{\mathcal{I}^j_M\sim Q}[\frac{\partial \widehat{\mathcal{L}}_j(u,y)}{\partial o_i^j}]=\frac{\partial \widetilde{\mathcal{L}}_j(u, y)}{\partial o_{i}^{j}}$. We state their theorem in a different way so that it can be compatible with our method:

\begin{Theorem}[Theorem 2.1 of ~\cite{blanc2018adaptive}]
\label{theo: unbiased gradient estimation theory}
The gradient of loss {Eq. (\ref{eq:sampled_softmax_loss_single_layer})} 
w.r.t. sampled softmax is an unbiased estimator of the gradient of the loss  {Eq. (\ref{eq: rectified loss single layer})} 
w.r.t. full softmax if and only if $q^j_i\propto \exp{o_i^j}$ holds where $1\le i\le N_j$ and the $i$-th node isn't a positive node at layer $j$. 
\label{theo:unbiased_estimator_of_gradient}
\end{Theorem}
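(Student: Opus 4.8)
The plan is to put both gradients in closed form, take the expectation of the sampled one over the draw $\mathcal{I}^j_M\sim Q$, and reduce the claimed equivalence to a single scalar identity that a strict-convexity argument settles. Abbreviate $t:=\delta^j(y)$, write $p^j_i=\exp o^j_i/\sum_{k=1}^{N_j}\exp o^j_k$ for the full-softmax probabilities, and note that the scalar $\bar z^j_t$ multiplies the entire layer loss, so it factors out of every gradient and may be taken nonzero (otherwise the statement is vacuous at that layer). Differentiating the full-softmax layer loss $\widetilde{\mathcal{L}}_j(u,y)$ gives the standard expression $\partial\widetilde{\mathcal{L}}_j/\partial o^j_i=\bar z^j_t(p^j_i-\mathbf{1}[i=t])$. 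For the sampled-softmax loss $\widehat{\mathcal{L}}_j(u,y)$ I would use the usual sampled-softmax convention that the corrections $\ln(Mq^j_k)$ are treated as constants with respect to $\boldsymbol o^j$ (a stop-gradient, even in the adaptive case), so that $\partial\hat o^j_k/\partial o^j_i=\mathbf{1}[k=i]$; writing $c_i$ for the multiplicity of index $i$ in the sampled multiset $\mathcal{I}^j_M$ (so $c_i=0$ for non-sampled negatives, $c_t:=1$, and $t\notin\mathcal{I}^j_M$) and $\hat Z_j:=\exp o^j_t+\sum_{k\in\mathcal{I}^j_M}\exp\hat o^j_k$, a direct calculation yields $\partial\widehat{\mathcal{L}}_j/\partial o^j_i=\bar z^j_t(c_i\exp\hat o^j_i/\hat Z_j-\mathbf{1}[i=t])$, where $c_i\exp\hat o^j_i$ is read as $0$ whenever $c_i=0$. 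Comparing the two, unbiasedness of the estimator is equivalent to $\mathbb{E}_{\mathcal{I}^j_M\sim Q}[c_i\exp\hat o^j_i/\hat Z_j]=p^j_i$ for all $i\in[N_j]$, whose $i=t$ instance is just $\mathbb{E}[\hat p^j_t]=p^j_t$.

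For the ``if'' direction, suppose $q^j_k\propto\exp o^j_k$ over the negatives, i.e.\ $q^j_k=\exp o^j_k/S$ with $S:=\sum_{k\neq t}\exp o^j_k$. Then for every sampled negative $k$ one gets $\exp\hat o^j_k=\exp o^j_k/(Mq^j_k)=S/M$, a quantity independent of which negative was drawn, so $\hat Z_j=\exp o^j_t+M\cdot(S/M)=\sum_k\exp o^j_k$ is deterministic and equal to the true partition function. Hence $\hat p^j_t=p^j_t$ almost surely, and for $i\neq t$, using $\mathbb{E}[c_i]=Mq^j_i$ for a size-$M$ i.i.d.\ sample, $\mathbb{E}[c_i\exp\hat o^j_i/\hat Z_j]=(Mq^j_i)(S/M)/\sum_k\exp o^j_k=\exp o^j_i/\sum_k\exp o^j_k=p^j_i$. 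Summing over components then gives $\mathbb{E}[\partial\widehat{\mathcal{L}}_j/\partial o^j_i]=\partial\widetilde{\mathcal{L}}_j/\partial o^j_i$, which is exactly unbiasedness.

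For the ``only if'' direction I would isolate the $i=t$ component only. Unbiasedness forces $\mathbb{E}[\hat p^j_t]=p^j_t$; write $\hat p^j_t=a/(a+\sum_{l=1}^M X_l)$ with $a:=\exp o^j_t$ and $X_l:=\exp o^j_{s_l}/(Mq^j_{s_l})$ i.i.d., so that $\mathbb{E}[X_l]=S/M$ and $\mathbb{E}[\sum_l X_l]=S$. Since $x\mapsto a/(a+x)$ is strictly convex on $(0,\infty)$, Jensen's inequality yields $\mathbb{E}[\hat p^j_t]\ge a/(a+S)=p^j_t$, with equality if and only if $\sum_l X_l$ is almost surely constant; as the $X_l$ are i.i.d.\ with finite support, this in turn forces each $X_l$ to be degenerate, i.e.\ $\exp o^j_k/(Mq^j_k)$ to take the same value for all negatives $k$, i.e.\ $q^j_k\propto\exp o^j_k$. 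Requiring unbiasedness to hold for every preference vector $\boldsymbol o^j$ (in particular those whose negative logits are pairwise distinct) then pins this proportionality down as an identity.

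I expect the ``only if'' step to be the main obstacle: in the adaptive setting $Q$ is itself a function of $\boldsymbol o^j$, so ``$q^j_i\propto\exp o^j_i$'' must be interpreted as a functional identity in $\boldsymbol o^j$, and one has to argue carefully that sweeping over enough probe vectors forces the same proportionality constant across all indices simultaneously. A secondary point needing care is fixing the gradient conventions precisely — detaching the $\ln(Mq^j_k)$ corrections and bookkeeping the multiplicities $c_i$ — since the closed-form gradient used in the first step is otherwise ambiguous.
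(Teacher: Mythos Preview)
The paper does not supply its own proof of this statement; it is quoted verbatim as Theorem~2.1 of Blanc and Rendle and merely restated to match the layer-wise notation. Your argument is correct and is essentially the original one: the ``if'' direction works because with $q^j_k=\exp o^j_k/S$ every corrected negative logit satisfies $\exp\hat o^j_k=S/M$, so the sampled partition function collapses deterministically to the true one; the ``only if'' direction follows from strict convexity of $x\mapsto a/(a+x)$ and Jensen applied to the $i=t$ component.

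One remark: your closing worry about having to sweep over probe vectors $\boldsymbol o^j$ is unnecessary. The Jensen step already settles the claim at a single fixed $\boldsymbol o^j$: equality forces the i.i.d.\ sum $\sum_l X_l$ to be almost surely constant, hence (by the variance identity for i.i.d.\ sums) each $X_l$ is degenerate, which is exactly $q^j_k\propto\exp o^j_k$ on the support of $Q$. Any negative index $k$ outside the support is ruled out directly, since then $c_k\equiv 0$ gives $\mathbb{E}[c_k\exp\hat o^j_k/\hat Z_j]=0\neq p^j_k$. So the ``only if'' direction needs no functional-identity argument; both the original theorem and your proof are pointwise in $\boldsymbol o^j$.
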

This theorem indicates that if the sampling probability is proportional to the exponential of the negative node's preference, we can get an unbiased estimation of the training loss gradient. However, it still  incurs linear time complexity w.r.t. the corpus size, similar to calculating the full softmax, because it requires computing the preference score for every node. To address this problem, previous literature ~\cite{blanc2018adaptive, rawat2019sampled} develops kernel-based approximations for inner product models. Nonetheless, their kernel-based methods aren't suitable for our approach, as our preference model is a complicated neural network. 
To reduce the computational complexity while maintaining compatibility with our approach, we propose a tree-based sampling method.

\subsubsection{Tree-based Sampling}
In many studies, uniform sampling is employed due to its simplicity and efficiency, which also applies to our method, where the negative nodes at each layer can be obtained through a uniform distribution.
However, the uniform distribution often diverges significantly from the exact softmax distribution. To better approximate the gradient, we propose a tree-based sampling method that leverages both the tree index and the preference model to generate a distribution closer to exact softmax distribution.

Our proposed tree-based sampling method is implemented by modifying the beam search retrieval process ({Algorithm \ref{alg:beam_search}}). Firstly, during sampling along the tree, only one node is retained at each layer (i.e., we set $k=1$ in {Algorithm \ref{alg:beam_search}}).  Secondly, instead of directly choosing the child node with the highest preference score, we sample one child for expansion, according to local softmax probabilities based on the preference scores of all child nodes under the current node. Concretely, after expanding node $n_i^j$, the expanding probability for node $n_c^{j+1}\in \mathcal{C}(n_i^j)$ is:
\begin{small}
\begin{equation}
\tilde{q}_c^{j+1}=\frac{\exp{o_c^{j+1}}}{\sum\limits_{k\in \left\{\delta(n)\mid n\in \mathcal{C}(n_i^j)\right\}} \exp{o_k^{j+1}} }.
\label{eq: expanding probability}
\end{equation}    
\end{small}Then, by expanding one node per layer, we obtain a sample path from root to leaf; the sampled nodes along this path serve as the negative nodes for their respective layers. This sampling process is repeated until $M$ negative nodes are obtained for each layer. During sampling, the sampling probability for node $n_i^j$ is the product of expanding probabilities along the sample path from the root to itself:
\begin{small}
\begin{equation}
    q_i^j= \prod_{m=0}^j \tilde{q}_{\delta(\rho^m(n_i^j))}^{\mspace{2mu}m}
    \label{eq: sampling probability in tree-based sampling}
\end{equation}  
\end{small}The correctness of the above sampling process can be affirmed by the fact that the sum of sampling probabilities of nodes within the same layer equals $1$:
\begin{Proposition}
\label{prop: the correctness of sampling probability}
    For any layer $j\ (0\le j \le H)$, the sum of  sampling probabilities via tree-based sampling for $j$-th layer's nodes is exactly $1$, i.e.,
    \begin{small}
    $$
     \sum_{i=1}^{N_j} q_i^j=\sum_{i=1}^{N_j} \prod_{m=0}^j \tilde{q}_{\delta(\rho^m(n_i^j))}^{\mspace{2mu}m} = 1.
    $$    
    \end{small}  
\end{Proposition}\noindent The proof of the above proposition is provided in Appendix~\ref{appendix: proof of correctness of sampling probability}. 
Moreover, our proposed sampling method has a lower time complexity, as the number of sampling operations is proportional to the tree height and the sampling probability (i.e., local softmax probability) can be computed efficiently because it only relies on scores of a limited number of child nodes.


{Figure \ref{fig: Tree-based sampling}} illustrates the tree-based sampling method. The sampling process starts from the root node $0$, whose child nodes are node $1$ and node $2$, with expanding probabilities of  $0.3$ and $0.7$, respectively. In this case, node $1$ is sampled for expansion, and the process continues from node $1$, repeating until the leaf node $8$ is sampled.  Finally, the path $0\rightarrow 1 \rightarrow 3 \rightarrow 8$ is sampled, and the corresponding negative nodes are $\{1,3,8\}$. For any node, its sampling probability is the product of the expanding probabilities along the path from the root to itself. For example, in {Figure \ref{fig: Tree-based sampling}}, the sampling probability for the node $3$ is \(0.18 = 0.3 \times 0.6\), and for the node $8$ is \(0.144 = 0.3 \times 0.6 \times 0.8\). Through simple calculation, we can verify that the sum of the probabilities of each node being sampled is exactly $1$ for any layer.
\begin{figure}[ht]
    \centering
    
 \includegraphics[width=\linewidth]{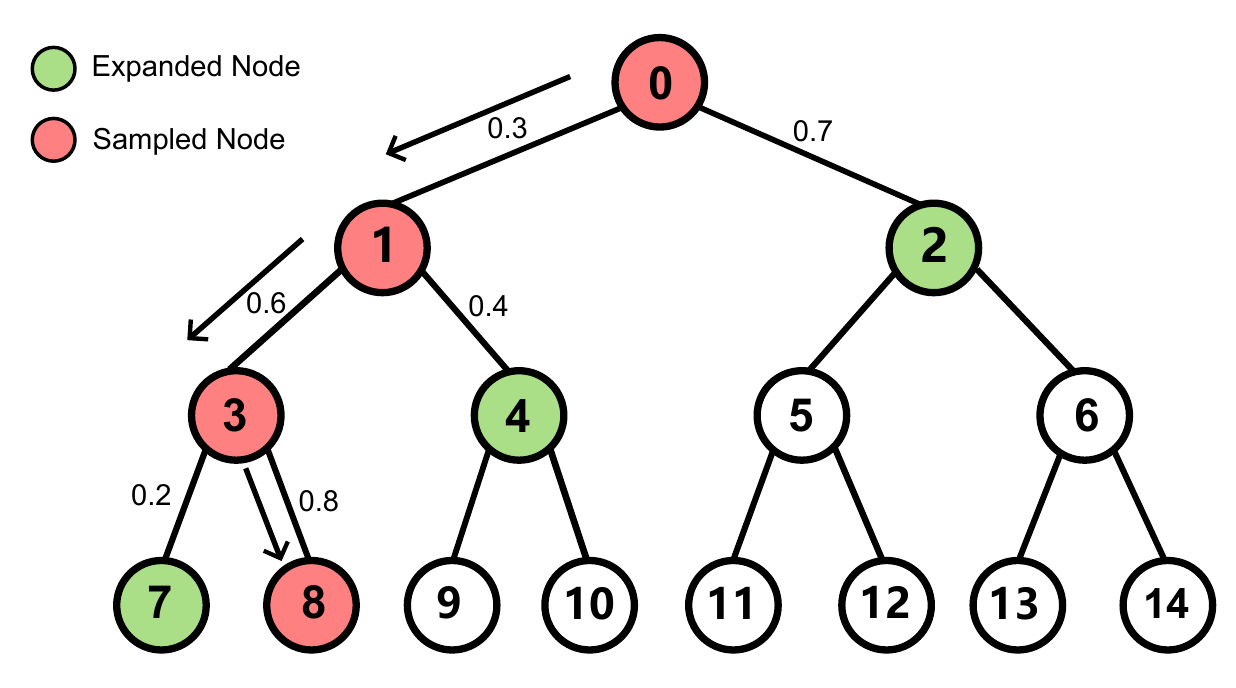}
    \caption{Illustration of tree-based sampling along the tree.  The number beside the edge is the child's expanding probability given the parent.}
    \label{fig: Tree-based sampling}
\end{figure}

\subsubsection{Theoretical Support}
\label{sec: theoretical support of tree-based sampling }
To obtain an unbiased estimation of the gradient for the original training loss, Theorem \ref{theo: unbiased gradient estimation theory} dictates that the sampling probability for each node should be proportional to the exponential of its preference score. Our proposed tree-based sampling method is grounded in this principle. By leveraging the tree index and the preference model during the sampling, it directly utilizes the node's local softmax probability for negative sampling, which promotes the sampling distribution closer to the softmax distribution. Furthermore, when the preference scores 
satisfy certain conditions, the sampling probability of any node in the tree-based sampling will align precisely with that node's softmax probability, as formalized in the following theorem:
\begin{Theorem}
\label{theo: tree-based sampling eq to softmax distribution}
For any node \(n_i^j\) of the tree, if the preference score $o_i^j$ satisfies the following condition:
    \begin{small}
     \begin{equation}
    \label{eq:score_proportional_to_children}
    \exp o_i^j \propto \sum_{k \in \left\{\delta(n)\mid n\in \mathcal{C}(n_i^j)\right\}} \exp o_k^{j+1},
    \end{equation}   
    \end{small}and the proportionality coefficients within the same layer are equal, then the sampling probability of node $n_i^j$ in the tree-based sampling equals its softmax probability in layer $j$, expressed as:
\begin{small}
\begin{equation}
    \label{eq: softmax sampling probability in tree-based sampling}
    q_{i}^j = \frac{\exp o_{i}^j}{\sum_{k=1}^{N_j} \exp o_{k}^j}.
\end{equation}    
\end{small}
\label{theo: sampling theory to unbiased softmax}
\end{Theorem}

Although Theorem~\ref{theo: sampling theory to unbiased softmax} imposes a relatively stringent condition that may not be fully satisfied in practice, it is noteworthy that optimizing the standard multi-class cross-entropy loss can effectively align the score preferences with this condition. We provide the details in Appendix~\ref{appendix: details of condition alignment}.

Within the rectification method under beam search, the tree aligns the max-heap in expectation; then, the optimization will lead to the following results:
\begin{small}
\begin{equation}
\exp o_i^j \propto \max_{k \in \left\{\delta(n)\mid n\in \mathcal{C}(n_i^j)\right\}} \exp o_k^{j+1}.
\label{eq:score_proportional_to_max_children}
\end{equation}    
\end{small}Since 
\begin{equation}
\lambda_i^j\triangleq \frac{\max_{k \in \left\{\delta(n)\mid n\in \mathcal{C}(n_i^j)\right\}} \exp o_k^{j+1}}{\sum_{k \in \left\{\delta(n)\mid n\in \mathcal{C}(n_i^j)\right\}} \exp o_k^{j+1}}  
\label{eq: definition of lambda}
\end{equation} 
is approximately $1$ when $\max_{k \in \left\{\delta(n)\mid n\in \mathcal{C}(n_i^j)\right\}}o_k^{j+1}$ is much higher than the preference scores of other child nodes, the sampling probability remains close to the softmax probability. 
For any node \(n_i^j\), we define its probability bias between its sampling probability $q_i^j$ and softmax probability:
   \begin{small}
       $$\operatorname{bias}_i^j = q_i^j - \frac{\exp o_{i}^j}{\sum_{k=1}^{N_j} \exp o_{k}^j}.$$
   \end{small}
Specifically, we have the following proposition:
\begin{Proposition}
   Suppose the preference score $o_i^j$ satisfies the condition in Eq.~(\ref{eq:score_proportional_to_max_children}), the proportionality coefficients within the same layer are equal,  and $\lambda_i^j\in (1-\varepsilon,1)$, where $\varepsilon$ is a positive real number that is close to $0$. The probability bias of node $n_i^j$ can be controlled as follows:
\begin{small}
    \begin{equation}
  \left|\operatorname{bias}_i^j\right|\le \frac{(j-1)\varepsilon}{1-\varepsilon}.
\end{equation}
\end{small}   
\label{prop: rectification method bounds probability bias} 
\end{Proposition}
\noindent Based on the above proposition, we can further derive the following corollary:
\begin{Corollary}
\label{cor: bound of KL divergence}
    For the $j$-th layer, suppose the preference scores $o_i^j$ \((i\in [N_j])\) are bounded by a constant $B_o$.  The KL divergence between sampling distribution $Q^j$ w.r.t. softmax distribution $P^j$ can be bounded as follows:
    \begin{equation}
        D_{KL}(Q^j||P^j) \le \log \left(1+ B^j_{\varepsilon}N_je^{2B_o}\right),
    \end{equation}
    where $B_\varepsilon^j\triangleq \frac{(j-1)\varepsilon}{1-\varepsilon}$ is the upper bound of the probability bias of nodes in layer $j$.
\end{Corollary}

The proofs of Theorem \ref{theo: sampling theory to unbiased softmax}, Proposition \ref{prop: rectification method bounds probability bias} and Corollary \ref{cor: bound of KL divergence} are presented in Appendix \ref{appendix: proof of theory tree-based sampling eq to softmax distribution}, \ref{appendix: proof of prop bounded probability bias} and  \ref{appendix: proof of corollary bound}, respectively. Theorem \ref{theo: sampling theory to unbiased softmax} provides valuable insights for designing sampling schemes and justifies our tree-based sampling method. Furthermore, Proposition \ref{prop: rectification method bounds probability bias} and Corollary \ref{cor: bound of KL divergence} demonstrate that this sampling method also applies to the rectification method under beam search. It can be observed that when $\varepsilon$ is small, the difference between the sampling distribution and softmax distribution is also small, leading to a more accurate estimation of the full softmax loss gradient.


\subsection{Tree Index Updating}
\label{sec: tree index updating}
Without sufficient domain knowledge, initializing a well-structured tree is challenging. Even if the preference model converges during training, it can only adapt to the current, potentially suboptimal tree. Therefore, the tree index (i.e., the mapping $\pi$) should be updated alongside the optimization of preference model. In this section, we introduce how to update the tree with the fixed preference model.

Initially, we construct the tree using a method similar to TDM and JTM, making similar items organized in close positions on the last layer of the tree.  It's natural to rely on the category information of items to build the initial tree. Firstly, we shuffle all the categories and group items by category ID. Secondly, the items within each category are recursively divided into two equal parts until each partition contains only one item.  In this way, we can construct a near-complete binary tree from the root node to the leaf node. A simple balancing strategy can be used to adjust the tree. Specifically, for leaf nodes not at the deepest level, an internal node is repeatedly inserted between the leaf and its parent until all leaf nodes are positioned at the same depth. We use a binary tree in our experiments. In fact, any k-ary tree can be constructed in such a way by partitioning each cluster into more nearly equal parts at each iteration.

Tree index updating is also based on the training set $S=\{(u_i,y_i)\}_{i=1}^m$.  We partition $S$ into different user-item pair sets, denoted by $\mathcal{A}_{y}=\{(u_i, y_i)| y_i=y\}$, where $\mathcal{A}_{y}$ consists of all user-item pairs for which the item is exactly $y$. Given the old mapping $\pi$ and each user-item pair set $\mathcal{A}_{y}$, we fix the preference model and assign each item to the tree nodes step by step, from the root node to the leaf nodes, to obtain the new mapping $\pi'$. 

All items are initially assigned to the root node, i.e., current level $j=0$. Then, we try to assign them to the nodes at level $j^\prime \triangleq min(H,j+d)$. Here, the hyperparameter $d$ determines the number of layers to skip during the allocation process along the tree,  where the purpose of skipping layers is to accelerate the allocation and, to some extent, prevent items from being assigned to the inappropriate leaf node due to early incorrect allocation.
For an item $y$ initially assigned to the root node, it must be reassigned to one of the nodes at level $j^\prime$. Since we are using a binary tree, there will be $c\le 2^d$ candidate nodes. Without loss of generality, we denote these candidate nodes as $n_1^{j^\prime},n_2^{j^\prime},\dots,n_c^{j^\prime}$. The preference of user $u$ 
for node $n_i^{j^\prime}$ ($1\le i\le c$) is denoted as $o_i^{j^\prime}$ and is calculated by the fixed preference model. We define the matching score between item $y$ and node $n_i^{j^\prime}$ ($1\le i \le c$) as follows:
\begin{equation}
\label{eq:item_node_matching_socre}
\begin{aligned}
    Score(y,n_i^{j^\prime})=\sum_{(u,y)\in \mathcal{A}_y}\log \frac{\exp{o_i^{j^\prime}}}{\sum_{k=1}^c\exp{o_k^{j^\prime}}}.
\end{aligned}
\end{equation}
The above matching score represents the log-likelihood of assigning item \( y \) to node \( n_i^{j'} \), calculated based on \( \mathcal{A}_y \) and the fixed preference model. A higher score indicates a greater suitability for allocating the item to the corresponding node.
Then, we rank the $c$ scores and assign $y$ to the node with the highest matching score. All items are disjointly assigned to the nodes at level $j^\prime$, where the number of items assigned to the node $n_i^{j^\prime}$ equals the number of leaf nodes within the subtree rooted at $n_i^{j^\prime}$. If the number of assigned items exceeds the limitation, $y$ is reassigned to the node with the second-highest matching score, or to subsequent nodes in descending order of matching scores, until the assignment satisfies the limit. In this way, we can assign each item belonging to the root initially to a node at level $j^\prime$. Now, we regard $n_1^{j^\prime},\dots,n_c^{j^\prime}$ as root nodes of $c$ subtrees and repeat the process recursively to assign the items to deeper layers until each leaf node is assigned an item. Finally, we can get the new mapping $\pi'$. Computing the matching score (i.e., {Eq. (\ref{eq:item_node_matching_socre})}) requires a softmax calculation over $c$ candidate nodes. Therefore, $d$  must be kept small to fit the time complexity and computational resource; in our experiments, we set $d=7$.

To illustrate the tree update process, {Figure \ref{fig: Updating Mapping}} presents an example with eight items labeled $a$ to $h$ and a step $d=2$. Initially, all items are at the root node. We need to allocate items to the layer at a distance of $2$ from the root. This layer consists of four nodes, with each node receiving two items. According to the matching scores, item $f$ should ideally be assigned to node $4$, which has the highest matching score. However, since node $4$ already contains two items ($b$ and $c$), item $f$ is reallocated to node $5$, which has the next highest score. Once allocation in the second layer is complete, the process continues to deeper layers. As the leaf nodes are located in the third layer, just one step from the second layer (less than $d$), items are directly assigned to the third layer. Ultimately, each leaf node is allocated one item, completing the tree update.
\begin{figure}[ht]
    \centering
    \includegraphics[width=0.9\linewidth]{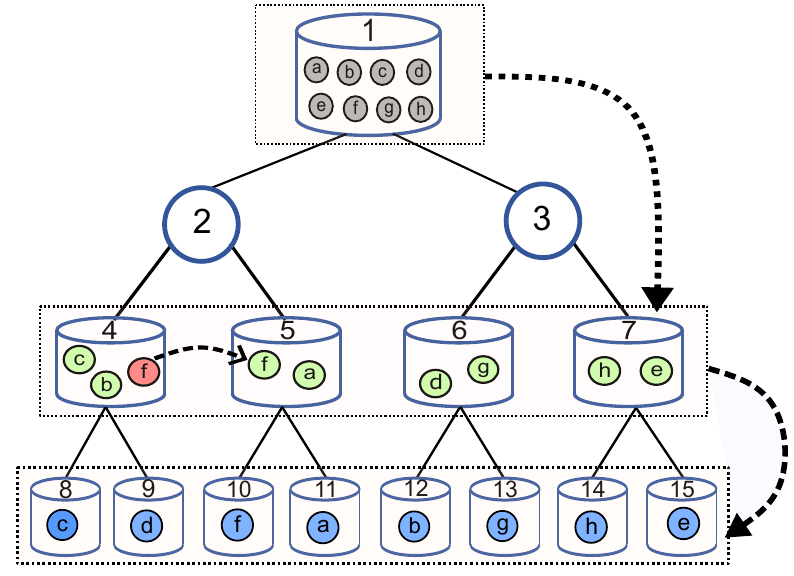}
    \caption{The process of updating the mapping between items and leaf nodes, where $d=2$ in this case.}
    \label{fig: Updating Mapping}
\end{figure}

In fact, our tree learning strategy assigns items to tree nodes in a top-down manner, which can also be viewed as a hierarchical, coarse-to-fine clustering where $d$ controls the granularity. The main difference between ours and the one in JTM is the calculation of matching scores. 
Both JTM and our method compute the log-likelihood as the matching score, but while JTM directly sums up the binary classification probabilities for this calculation, our strategy relies on the softmax-based multi-class probabilities, which is compatible with our training mode.

\section{Generalization Analysis}
\label{sec: generalization analysis}
In this section, we provide a generalization analysis of the deep tree-based retriever.
Our analysis, based on Rademacher complexity, derives a data-dependent generalization error bound for our proposed model. The final bound synthesizes three key aspects: the complexity of the model optimized with our rectified loss, the bias introduced by the sampled softmax technique, and the error from the external probability estimation.

\subsection{Generalization Bound of Modified Loss}
The variant of DIN is used as the preference model, which is formalized in Section~\ref{sec: formalization of DIN}. Suppose the norm of the weight matrix is bounded, we can get such function space for the model:
\begin{equation}
\begin{split}
    \mathcal{M}=\Big\{f:(u,n) \mapsto f_{\text{DIN}}(u,n) \mid \|\boldsymbol{w}_n\|_2\le B_0, \forall n \in \mathcal{N}; \\ \|\boldsymbol{W}_k\|_1 \le B_1, \forall k \in [L];\ \|\boldsymbol{W}_w^{(j)}\| \le B_2, \forall j \in {1,2} \Big\}.
\end{split}
\end{equation}

\begin{Lemma}[Rademacher Complexity of DIN]
\label{Lemma: Rademacher complexity of DIN}
Suppose $\forall u\in\mathcal{U},\ \forall k\in [K],\|\boldsymbol{a}_{k}^{(u)}\|_{2}\leq B_{a}$, then the Rademacher complexity of ${\mathcal{M}}$ can be bounded as follows:
\begin{equation}
    \widehat{\mathcal{R}}_{m}(\mathcal{M})\le \frac{2c_{\phi}^{L-1}B_1^L(B_0+KB_{w_1}+KB_{w_2}\mathcal{\tau})}{\sqrt{m}},
\end{equation}
where $B_{w_1}=c_{\phi}^2B_2^2B^2_a$, $B_{w_2}=c_{\phi}^2B_0B_2^2\left(B_a^2+B_a\right)$, and $\tau=\sqrt{(2B-1)|\mathcal{Y}|}/\sqrt{B-1}$.
\label{lemm: Rademacher complexity of DIN}
\end{Lemma}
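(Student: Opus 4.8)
The plan is to bound the empirical Rademacher complexity
$\widehat{\mathcal{R}}_m(\mathcal{M}) = \frac{1}{m}\mathbb{E}_{\boldsymbol{\sigma}}\bigl[\sup_{f\in\mathcal{M}}\sum_{i=1}^m \sigma_i f(u_i,n_i)\bigr]$
by peeling the DIN network layer by layer from the top, exploiting the Lipschitz property of the activations and the stated operator-norm bounds on the weight matrices. First I would handle the final MLP: since $f_{\text{DIN}}(u,n) = \boldsymbol{W}_L\,\phi_{L-1}\circ\cdots\circ\phi_1(\boldsymbol{x})$ with input $\boldsymbol{x} = [\boldsymbol{z}_1^{(u)};\dots;\boldsymbol{z}_{K'}^{(u)};\boldsymbol{w}_n]$, I would apply the standard contraction/peeling argument for norm-bounded linear layers composed with $c_\phi$-Lipschitz activations satisfying $\phi(0)=0$. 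Each layer peeled off contributes a factor $c_\phi B_1$ (using $\|\boldsymbol{W}_k\|_1\le B_1$), and the top linear layer $\boldsymbol{W}_L$ contributes one more $B_1$, giving an overall factor $c_\phi^{L-1}B_1^L$ times the Rademacher complexity of the class of input vectors $\boldsymbol{x}$. This reduces the problem to controlling $\mathbb{E}_{\boldsymbol{\sigma}}\bigl[\sup \sum_i \sigma_i \langle \boldsymbol{v}, \boldsymbol{x}_i\rangle\bigr]$-type quantities, i.e.\ bounding the norm of the input representation.

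Next I would bound $\|\boldsymbol{x}\|$. The node-embedding block contributes $\|\boldsymbol{w}_n\|_2\le B_0$. For each aggregated window embedding $\boldsymbol{z}_t^{(u)} = \sum_{k\in T_t} w_k^{(u)}\boldsymbol{a}_k^{(u)}$, I need to bound $|w_k^{(u)}|$: from Eq.~(2), $w_k^{(u)} = \phi(\boldsymbol{W}_w^{(2)}\phi(\boldsymbol{W}_w^{(1)}[\boldsymbol{a}_k^{(u)};\boldsymbol{a}_k^{(u)}\odot\boldsymbol{w}_n;\boldsymbol{w}_n]))$, and using $\|\boldsymbol{a}_k^{(u)}\|_2\le B_a$, $\|\boldsymbol{w}_n\|_2\le B_0$, the Lipschitzness of $\phi$ and $\|\boldsymbol{W}_w^{(j)}\|\le B_2$, one obtains a bound roughly of the form $c_\phi^2 B_2^2(B_a + B_a B_0 + \dots)$ — this is where the constants $B_{w_1}=c_\phi^2 B_2^2 B_a^2$ and $B_{w_2}=c_\phi^2 B_0 B_2^2(B_a^2+B_a)$ come from, carefully tracking which terms scale with $\boldsymbol{w}_n$ (hence get the extra $B_0$) versus not. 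Summing over the $k$ items in a window and over $K'$ windows, and noting $K'\cdot(\text{window length})$ relates to $K$, gives a bound on $\|\boldsymbol{x}\|$ of the form $B_0 + K B_{w_1} + K B_{w_2}\cdot(\text{something})$. The factor of $2/\sqrt{m}$ and the $\tau = \sqrt{(2B-1)|\mathcal{Y}|}/\sqrt{B-1}$ term presumably arise from taking a union/supremum over the node embeddings $\boldsymbol{w}_n$ across all nodes $n\in\mathcal{N}$ (there are $|\mathcal{N}| = \sum_{j=0}^H N_j \approx \frac{B}{B-1}|\mathcal{Y}|$-ish many of them, and $\sqrt{(2B-1)|\mathcal{Y}|/(B-1)}$ is the square-root of roughly the total node count), so I would invoke a Massart-type finite-class bound or a Dudley entropy integral over the node-embedding index to produce this $\sqrt{\text{number of nodes}}$ dependence.

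I would then assemble these pieces: the layer-peeling factor $c_\phi^{L-1}B_1^L$ from the MLP, multiplied by the input-norm bound $(B_0 + KB_{w_1} + KB_{w_2}\tau)$, divided by $\sqrt m$, with the leading constant $2$ absorbed from the Rademacher/Massart inequalities, yielding exactly the claimed bound. The main obstacle I anticipate is the bookkeeping in the middle step: DIN is not a plain feedforward net — the attention-style weights $w_k^{(u)}$ depend multiplicatively on $\boldsymbol{w}_n$ through the term $\boldsymbol{a}_k^{(u)}\odot\boldsymbol{w}_n$, so the map $n\mapsto f_{\text{DIN}}(u,n)$ is nonlinear in $\boldsymbol{w}_n$ in a coupled way, and one must be careful that the composition of Lipschitz bounds still yields a Lipschitz-in-$\boldsymbol{w}_n$ function with a controlled constant before the finite-class argument over nodes can be applied; getting the split between $B_{w_1}$ (the $B_0$-free part) and $B_{w_2}$ (the part carrying $B_0$) to match the stated constants, and correctly propagating the $\tau$ factor only through the $B_{w_2}$ term, is the delicate accounting I'd need to do with care.
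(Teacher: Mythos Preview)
Your layer-peeling strategy for the MLP is exactly what the paper does: iteratively apply the $\ell_1$-bound-plus-contraction trick to strip off $\boldsymbol{W}_L,\ldots,\boldsymbol{W}_1$, picking up the factor $2c_\phi^{L-1}B_1^L$, and then split the remaining input $[\boldsymbol{z}_1;\ldots;\boldsymbol{z}_{K'};\boldsymbol{w}_n]$ into a node-embedding piece and the $K$ attention-weighted pieces. Peeling the two-layer attention network to expose the three components $[\boldsymbol{a}_k;\boldsymbol{a}_k\odot\boldsymbol{w}_n;\boldsymbol{w}_n]$ (yielding the $c_\phi^2B_2^2$ prefactor) is also right, and your diagnosis that $B_{w_1}$ comes from the $\boldsymbol{w}_n$-free component while $B_{w_2}$ comes from the two $\boldsymbol{w}_n$-dependent components is correct.

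The genuine gap is in how you propose to obtain $\tau$. It does \emph{not} come from a Massart finite-class bound or an entropy integral over the node index; those would produce a $\sqrt{\log|\mathcal{N}|}$ dependence, not $\sqrt{|\mathcal{N}|}$, and in any case the node embeddings $\{\boldsymbol{w}_n\}$ are continuous parameters in the supremum, so the class is not finite. What the paper actually does for the terms involving $\boldsymbol{w}_{n_i}$ (your ``$B_{w_2}$ part'') is a \emph{sample-partitioning} argument: rewrite $\sum_{i=1}^m(\cdots\boldsymbol{w}_{n_i}\cdots)=\sum_{n\in\mathcal{N}}\sum_{i:n_i=n}(\cdots\boldsymbol{w}_n\cdots)$, so that inside each inner sum the node embedding is a single fixed vector; bound each inner block by $B_0$ times an expression like $\mathbb{E}_\sigma\bigl\|\sum_{i:n_i=n}\sigma_i\boldsymbol{P}^{(u_i)}\bigr\|_F\le C\sqrt{|\{i:n_i=n\}|}$; and finally apply Cauchy--Schwarz over nodes,
\[
\sum_{n\in\mathcal{N}}\sqrt{|\{i:n_i=n\}|}\;\le\;\sqrt{|\mathcal{N}|}\cdot\sqrt{m},
\qquad |\mathcal{N}|\le\frac{2B-1}{B-1}|\mathcal{Y}|,
\]
which is precisely where $\tau=\sqrt{(2B-1)|\mathcal{Y}|/(B-1)}$ comes from. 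This also explains mechanically why $\tau$ multiplies only $B_{w_2}$: the pure-$\boldsymbol{a}_k$ component does not involve any $\boldsymbol{w}_n$, so no partitioning is needed there and it contributes only $B_a^2/\sqrt{m}$. Replace your Massart/Dudley step with this partition-then-Cauchy--Schwarz device and the rest of your outline goes through.
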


The proof of Lemma \ref{lemm: Rademacher complexity of DIN} is provided in Appendix \ref{appendix: proof of lemm Rademacher complexity of DIN}. Building on the complexity  analysis of the DIN, we now derive the generalization bound for the modified loss with rectified labels (i.e., {Eq. (\ref{eq: rectified loss the whole tree})}) in Section~\ref{sec: rectification loss}. Since the overall loss for the tree is the sum of the losses at each layer, we begin by examining the generalization bound for the loss at the $j$-th layer. The result is presented as follows:
\begin{Lemma}
 \label{Lemma: generalization bound of single layer}
Suppose the function in $\mathcal{M}$ is bounded by a constant $B_\mathcal{M}$, the following inequality holds with a probability of at least $1-\delta$:
\begin{small}
   \begin{equation}
    \begin{split}
       \mathbb{E}_{(u,y)\sim \mathbb{P}}\left[\widetilde{\mathcal{L}}_j(u,y)\right] &\le \frac{1}{m} \sum_{i=1}^{m}\widetilde{\mathcal{L}}_j(u_i,y_i)   \\ &+8N_j\frac{c_{\phi}^{L-1}B_1^L(B_0+KB_{w_1}+KB_{w_2}\mathcal{\tau})}{\sqrt{m}}\\
       &+\Big(4\log N_j+8B_{\mathcal{M}}\Big)\sqrt{\frac{2\log\left(4/\delta\right)}{m}}.
    \end{split}
   \end{equation}
\end{small}
\end{Lemma}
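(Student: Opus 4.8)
\textbf{Proof proposal for Lemma~\ref{Lemma: generalization bound of single layer}.}
The plan is to decompose the rectified layer-$j$ loss as a composition of the softmax cross-entropy map with the preference-score vector produced by the DIN model, and then apply a standard Rademacher-complexity symmetrization argument together with the vector-contraction inequality. First I would write $\widetilde{\mathcal{L}}_j(u,y) = \bar z^j_{\delta^j(y)}\,\ell\bigl(\boldsymbol{o}^j(u)\bigr)$, where $\ell(\boldsymbol{o}) = -\log\bigl(\exp o_{\delta^j(y)} / \sum_{k=1}^{N_j}\exp o_k\bigr)$ is the $N_j$-class softmax loss viewed as a function of the score vector $\boldsymbol{o}^j(u) = (f(u,n^j_1),\dots,f(u,n^j_{N_j}))$ with $f\in\mathcal{M}$. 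The rectified weight $\bar z^j_{\delta^j(y)}\in\{0,1\}$ only removes some terms from the empirical average, so after the contraction step it can be absorbed. Since $\|f\|_\infty\le B_{\mathcal{M}}$ on $\mathcal{M}$, the scores lie in $[-B_{\mathcal{M}},B_{\mathcal{M}}]$, so the softmax cross-entropy is bounded: $0\le\ell(\boldsymbol{o})\le \log N_j + 2B_{\mathcal{M}}$ (the log-sum-exp spread over $N_j$ coordinates bounded by $2B_{\mathcal{M}}$ plus $\log N_j$), which controls the range term and yields the $(4\log N_j + 8B_{\mathcal{M}})$ factor in the McDiarmid/bounded-differences deviation piece with confidence $1-\delta$.

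Next I would invoke the classical generalization theorem: with probability at least $1-\delta$, the expected loss is at most the empirical loss plus $2\,\widehat{\mathcal{R}}_m(\widetilde{\mathcal{L}}_j\circ\mathcal{M})$ plus a $O\bigl((\text{range})\sqrt{\log(1/\delta)/m}\bigr)$ term; the constant $4$ in front of $\sqrt{2\log(4/\delta)/m}$ and the splitting of $\delta$ into $\delta/2$ suggests a two-step union bound (one for symmetrization, one for the deviation), which accounts for the $4/\delta$. The crucial step is bounding the Rademacher complexity of the composed class. Because $\ell$ as a function of the score vector $\boldsymbol{o}\in\mathbb{R}^{N_j}$ is Lipschitz — the softmax cross-entropy has gradient $\mathrm{softmax}(\boldsymbol{o}) - \boldsymbol{e}^{(\delta^j(y))}$, whose $\ell_2$-norm is at most $\sqrt{2}$ — I would apply a vector-valued contraction lemma (Maurer's contraction inequality) to peel off $\ell$ at the cost of a factor proportional to its Lipschitz constant, reducing $\widehat{\mathcal{R}}_m(\widetilde{\mathcal{L}}_j\circ\mathcal{M})$ to (a constant times) the Rademacher complexity of the $N_j$-tuple of DIN outputs $\{(f(u,n^j_1),\dots,f(u,n^j_{N_j})) : f\in\mathcal{M}\}$. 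That tuple-complexity is bounded by $N_j$ times the single-output complexity $\widehat{\mathcal{R}}_m(\mathcal{M})$ from Lemma~\ref{lemm: Rademacher complexity of DIN} (via the standard sum-over-coordinates bound, since all $N_j$ coordinates draw from the same function class $\mathcal{M}$ evaluated at different nodes), giving the $8N_j\cdot c_\phi^{L-1}B_1^L(B_0+KB_{w_1}+KB_{w_2}\tau)/\sqrt{m}$ term after folding in the contraction constant and the factor-of-2 from symmetrization.

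I expect the main obstacle to be the contraction/decomposition step for the vector-valued loss: ordinary Talagrand contraction applies to scalar Lipschitz maps composed coordinatewise, but $\ell$ couples all $N_j$ coordinates through the log-sum-exp, so one must use the vector-contraction inequality (Maurer), and track carefully how its dimension-dependent constant interacts with the subsequent $N_j$-fold bound on the tuple complexity — this is where the precise numerical constants ($8N_j$ rather than something larger) come from, and getting them tight requires care. A secondary subtlety is handling the data-dependent (empirical) Rademacher complexity of $\mathcal{M}$ evaluated at $N_j$ distinct nodes simultaneously versus the single-node bound in Lemma~\ref{lemm: Rademacher complexity of DIN}; I would argue that since the node embeddings $\boldsymbol{w}_n$ all satisfy the same norm bound $B_0$, each coordinate's contribution is dominated by the same expression, so the sum over $N_j$ coordinates contributes the clean factor $N_j$. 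The rest — bounding the loss range, applying McDiarmid, and assembling the three terms — is routine.
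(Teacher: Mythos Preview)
Your proposal is correct and follows essentially the same route as the paper: bound the loss range to get the deviation term, apply the standard empirical-Rademacher generalization inequality (the paper cites it as Theorem~26.5(2) of Shalev-Shwartz--Ben-David, which already carries the $4/\delta$ and the factor $4c$ on the range, so no union bound is needed), peel off the softmax loss by a vector-contraction argument, and then invoke Lemma~\ref{lemm: Rademacher complexity of DIN}. The only substantive difference is the contraction tool: instead of Maurer's vector-contraction inequality, the paper uses Lemma~2 of Wan et al.\ (cited as Lemma~\ref{Lemma: logistic rademacher}), which directly gives $\widehat{\mathcal{R}}_m(\ell\circ\mathcal{M})\le 2N_jL\,\widehat{\mathcal{R}}_m(\mathcal{M})$ for an $L$-Lipschitz $\ell:\mathbb{R}^{N_j}\to\mathbb{R}$ with $\ell(0)=0$, taking $L=1$ since each partial derivative of the logistic loss is bounded by $1$; combined with the factor $2$ from the generalization theorem and the factor $2$ already inside the DIN bound, this yields the $8N_j$ constant without any further bookkeeping. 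The paper also handles the $\bar z^j_{\delta^j(y)}\in\{0,1\}$ weight explicitly---writing $\bar z=(z'+1)/2$ with $z'\in\{-1,1\}$ and taking a sup over $z'$ inside the Rademacher expectation---rather than arguing informally that it ``can be absorbed''; your intuition there is right, but this substitution is how it is made precise.
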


\subsection{Bias of Sampled Softmax}
The analysis so far has established a generalization bound for the ideal case of minimizing the full softmax loss.  In practice, however, we adopt sampled softmax to accelerate optimization (see Section \ref{sec: negative sampling}). 
To align theoretical analysis with practice, we incorporate the bias introduced by the sampled softmax loss w.r.t. the full softmax loss into our theoretical result. For an instance $(u,y)$,  the model $\mathcal{M}$ outputs $N_j$ preference scores for nodes at $j$-th layer, forming a logit vector $\boldsymbol{o}^j \in \mathbb{R}^{N_j}$. Assuming the $i$-th node is positive (i.e., $i=\delta^j(y)$), the full softmax loss for layer $j$ is given by: 
$$\ell^j_{softmax}(u,y)=-\log \frac{\exp {o}^j_{i}}{\sum_{k=1}^{N_j}\exp {o}^j_k}.$$ 
A set of $M$ negative node indices, denoted by $\mathcal{I}^j_M$, is sampled from a distribution $Q^j(u, y)$, where $u$ and $y$  specify that the distribution depends on the instance $(u,y)$. 
For simplicity, we omit $u$ and $y$, and use the notation $Q^j$. The adjusted logit vector $\boldsymbol{\hat{o}}^j$ is computed according to $\boldsymbol{o}^j$ and $Q^j$ as specified in {Eq. (\ref{eq:adjusted_output})}. Then, the sampled softmax loss for layer $j$ is calculated as follows:
$$\ell^j_{sampled\text{-}softmax}(u,y,\mathcal{I}^{j}_{M})=-\log \frac{\exp {\hat{o}}^j_{i}}{\sum_{i'\in \mathcal{I}^{j}_{M}\cup\{i\}}\exp{\hat{o}}^j_{i'}}.$$ 
 The bias of the sampled softmax loss w.r.t. the full softmax loss is presented as follows:
\begin{Lemma}
\label{Lemma: bound of sampled softmax}
   For any instance \((u,y)\) and the \(j\)-th layer, the full softmax loss can be bounded by the sum of two terms: the expectation of the sampled softmax loss over \(\mathcal{I}^{j}_{M}\) and the KL divergence between \(Q^j\) w.r.t. \(P^j\). It is expressed as follows:
    \begin{small}
    \begin{equation*}
    \begin{split}
       \ell^j_{softmax}(u,y) \le  \mathbb{E}_{\mathcal{I}^{j}_{M}}\left[\ell^j_{sampled\text{-}softmax}(u,y,\mathcal{I}^{j}_{M})\right] + D_{\text{KL}}(Q^j \| P^j),
    \end{split}
    \end{equation*}
    \end{small}where  $P^j(u,y)$ is the softmax distribution in the layer $j$ for the instance $(u,y)$.
\end{Lemma}

\vspace{-0.8em}
\subsection{Probability Estimation Error}
As described in Section \ref{sec: practical implementation}, we use a small pre-trained model to estimate the conditional probability. To underpin a more comprehensive theoretical analysis, we now examine the error introduced by this probability estimation. For any instance $(u, y)$, let $\hat\eta_y$ and $\eta^*_y$ denote the estimated and true conditional probability, respectively. Since the model is well-pretrained,  we posit that its estimation error is bounded by a small constant: $|\hat\eta_y - \eta^*_y| < \epsilon$. Next, let $\hat{\bar{z}}^j_{\delta^j(y)}$ and ${\bar{z}}^{j*}_{\delta^j(y)}$ denote the rectified labels for node $n^j_{\delta^j(y)}$, computed using $\hat\eta_y$ and $\eta^*_y$, respectively. We now analyze the gap between the expected risks under these two label assignments: 
\begin{small}
\begin{equation*}
    \mathbb{E}_{(u,y)\sim \mathbb{P}}\left[\widetilde{\mathcal{L}}(u,y)\right]\quad\text{and}\quad\mathbb{E}_{(u,y)\sim \mathbb{P}}\left[\widetilde{\mathcal{L}}^*(u,y)\right],
\end{equation*}
\end{small}where $\widetilde{\mathcal{L}}$ uses $\hat{\bar{z}}^j_{\delta^j(y)}$ and $\widetilde{\mathcal{L}}^*$ uses ${\bar{z}}^{j*}_{\delta^j(y)}$. Their gap can be effectively controlled as follows:
\begin{Lemma} 
Suppose that the expected discrepancy in rectified labels across all layers is bounded by $g(\epsilon)$:
\begin{small}
\begin{equation*}\mathbb{E}_{(u,y)\sim \mathbb{P}} \left[\sum_{j=1}^H \left|  \hat{\bar{z}}^j_{\delta^j(y)} - {\bar{z}}^{j*}_{\delta^j(y)} \right|\right] \le g(\epsilon),\end{equation*}
\end{small}where $g$ is nondecreasing, and $g(\epsilon)=0$ if and only if $\epsilon=0$. The following inequality holds:
\begin{small}
\begin{equation*}
    \left|\mathbb{E}_{(u,y)\sim \mathbb{P}}[\widetilde{\mathcal{L}}(u,y)] 
- \mathbb{E}_{(u,y)\sim \mathbb{P}}[\widetilde{\mathcal{L}}^*(u,y)] \right|
\le g(\epsilon)\left(2B_{\mathcal{M}}+\log |\mathcal{Y}|\right).
\end{equation*}
\end{small}
\label{lemm:probability estimation error}
\end{Lemma}

\vspace{-1em}
\subsection{Generalization Bound of DTR}
By combining 
Lemma~\ref{Lemma: generalization bound of single layer} and Lemma~\ref{Lemma: bound of sampled softmax}, then aggregating from the 1st layer to the $H$-th layer, and further applying Lemma~\ref{lemm:probability estimation error}, we derive the following result:
\begin{Theorem}
\label{theo: final bound}
 The following inequality holds with a probability of at least $1-\delta$:
\begin{small}
\begin{equation}
\label{eq: final bound}
\begin{split}
    \mathbb{E}_{(u,y)\sim \mathbb{P}}\left[\widetilde{\mathcal{L}}^*(u,y)\right] &\le \frac{1}{m} \sum_{i=1}^{m}\sum_{j=1}^{H} \widehat{{\mathcal{L}}}_j(u_i,y_i)\\ 
    &+g(\epsilon)(2B_{\mathcal{M}}+\log|\mathcal{Y}|)\\
    &+\sum_{j=1}^H \mathbb{E}_{(u,y)\sim \mathbb{P}} \left[ D_{KL}(Q^j||P^j) \right] \\
    & + 8\tau^2\frac{c_{\phi}^{L-1}B_1^L(B_0+KB_{w_1}+KB_{w_2}\mathcal{\tau})}{\sqrt{m}}\\
    &+\Big(8\log  (\tau)+8HB_{\mathcal{M}}\Big)\sqrt{\frac{2\log\left(4/\delta\right)}{m}}.
\end{split}
\end{equation} 
\end{small}
\end{Theorem}Theorem \ref{theo: final bound} provides a comprehensive theoretical analysis of DTR.
As shown in Eq. \eqref{eq: final bound}, the left-hand side denotes the expected risk under the modified loss with rectified labels based on true conditional probabilities. Minimizing this risk leads to optimal performance under beam search (see Section \ref{sec: rectification loss}).
On the right-hand side, the first term is the empirical loss optimized in practice. 
The second term reflects the error from probability estimation, showing that more accurate estimates yields lower generalization risk.
The third term is the expected KL divergence of the negative sampling distribution w.r.t. the softmax distribution at each layer. 
It indicates that the closer the negative sampling distribution is to the softmax distribution, the lower the generalization risk. 
As shown in Corollary \ref{cor: bound of KL divergence}, our tree-based sampling can effectively control the KL divergence, providing insight into why it performs well from a generalization perspective.
The fourth term captures the influence of model parameters and tree structure. Notably, when the number of items (i.e., $|\mathcal{Y}|$) is fixed, increasing the number of branches $B$ results in a smaller $\tau$, leading to a tighter generalization bound. This observation is consistent with the findings in~\cite{zhang2024generalization}.
Moreover, when the sample size $m$ is sufficiently large, the last two terms tend to vanish, meaning that with enough data, the generalization risk can be effectively controlled by the empirical loss, the estimation error and the KL divergence.

\section{Experiments}
\label{sec: experiments}
We conduct a series of experiments designed to answer the following questions related to the joint optimization framework of preference models and tree indexes: \textbf{RQ-1}: \textit{Is the softmax-based multi-class classification training mode more suitable than the binary classification training mode for the tree model?} \textbf{RQ-2}: \textit{Is the modified loss with rectified labels effective in mitigating the suboptimality of the multi-class cross-entropy loss under beam search?} \textbf{RQ-3}: \textit{Is the tree-based sampling method effective  in mitigating approximation bias?} \textbf{RQ-4}: \textit{Can DTR improve the recommendation efficiency?} \textbf{RQ-5}: \textit{Can DTR learn a reasonable mapping between items and leaf nodes?} 

\subsection{Datasets}
The experiments are conducted on the four real-world datasets\footnote{\url{https://drive.google.com/drive/folders/1ahiLmzU7cGRPXf5qGMqtAChte2eYp9gI}}: Movie Lens 10M (\textbf{Movie}), MIND Small Dev (\textbf{MIND}), Amazon Books (\textbf{Amazon}), and Tmall Click (\textbf{Tmall}). {Movie} is a public film rating dataset, 
{MIND} is  a news recommendation dataset from Microsoft,
{Amazon} contains book purchase and rating data from Amazon platform,
{Tmall} records the shopping behaviors on Alibaba's Tmall marketplace.
We process all datasets into a form of implicit feedback, where interactions between users and items are marked as 1, and non-interactions as 0.
Users with fewer than 15 interactions are filtered out, with the post-filtering data statistics presented in {Table \ref{table:dataset_statistics}}. 
For each dataset, 10\% of users are randomly selected as validation users, 10\% as test users, and remaining users as training users. 
For training users, each interaction sequence is constrained to the most recent 70 interactions, with shorter sequences padded with zeros. The first 69 interactions serve as input, while the 70th is used as the label.
For validation and test users, the first half of their interaction history is used as input, with the second half serving as labels for prediction.

\begin{table}[htbp]
\caption{Statistics of datasets}
\label{table:dataset_statistics}
\centering
\begin{tabular}{ccccc}
\hline
Dataset & \#User & \#Item & \#Interaction & Density \\
\hline
Movie & 69,878 & 10,677 & 10,000,054 & 1.34\% \\
MIND & 36,281 & 7,129 & 5,610,960 & 2.16\% \\
Amazon & 29,980 & 67,402 & 2,218,926 & 0.11\% \\
Tmall & 139,234 & 135,293 & 10,487,585 & 0.05\% \\
\hline
\end{tabular}
\end{table}

\subsection{Baselines}
We compare the proposed DTR with related algorithms in two broad categories: 1) models that use brute-force search without indexing, and 2) models that rely on indexing for efficient retrieval. Note that brute-force methods require a significant amount of time for retrieval, so their results are provided for reference only.
\begin{itemize}[leftmargin=*]
    \item DIN~\cite{zhou2018deep} leverages an attention mechanism to capture user preferences from historical interactions. As it cannot directly enable efficient recommendation, we report its performance using brute-force search as a reference.
    \item YouTubeDNN~\cite{paul2016youtube} computes user-item preference scores via inner product, framing recommendation as a Maximum Inner Product Search (MIPS) over learned embeddings. Results are obtained through brute-force search.
    \item FMLP-Rec~\cite{zhou2022filter} employs stacked filter-enhanced blocks to generate user and item representations; we also report results obtained through brute-force search.
    \item SCANN~\cite{guo2020accelerating} is a competitive quantization-based approach that employs anisotropic vector quantization and score-aware loss to efficiently solve MIPS.
    \item IPNSW~\cite{morozov2018non} is a graph-based method that addresses MIPS efficiently by constructing non-metric similarity graphs.
    \item QINCo~\cite{huijben2024QINco} uses learned implicit codebooks for stepwise residual quantization, producing compact codes with accurate score reconstruction for efficient search.
    \item ROTLEX~\cite{10.1145/3711896.3737112} learns a K-ary tree with query-aware repartitioning; a query encoder routes to top buckets and beam search retrieves candidates efficiently.
    \item JTR~\cite{li2023constructing} optimizes the query encoding and tree index. It uses a flexible tree where each leaf contains multiple items, and applies overlapped clustering for assignment.
    \item PLT~\cite{jasinska2020probabilistic} constructs a label tree and computes item probabilities by multiplying conditional probabilities along root-to-leaf paths.
    \item TDM~\cite{zhu2018tdm} simultaneously learns the preference model and the tree index, performing efficient recommendations with beam search during inference.
    \item JTM~\cite{Zhu2019JointOO} improves TDM by jointly optimizing the index structure and the user preference model.
    \item OTM~\cite{zhuo2020learning} mitigates training-testing discrepancies of tree models under beam search by sampling negatives through beam search and correcting the pseudo-labels of nodes during training.
\end{itemize}

\subsection{Metric}
To evaluate the performance of the retrieval model, we use $Precision@K$, $Recall@K$, and $F\text{-}measure@K$ as the evaluation criteria. Suppose \( \mathcal{P}(u) \) (\( \left|\mathcal{P}(u)\right| = K \)) denotes the set of items retrieved by the model for user \( u \), and \( \mathcal{G}(u) \) represents the set of items that the user has interacted with, i.e., the ground truth associated with user \( u \). For user \( u \), the evaluation metrics $Precision@K$ and $Recall@K$ are calculated as follows:
\begin{small}
\begin{equation*}
\begin{split}
Precision@K(u)=\frac{|\mathcal{P}(u)\cap\mathcal{G}(u)|}K\ ,\\Recall@K(u)=\frac{|\mathcal{P}(u)\cap\mathcal{G}(u)|}{|\mathcal{G}(u)|}\ ,\ \ \
\end{split}
\end{equation*}    
\end{small}and $F\text{-}measure@K$ is calculated as follows:
\begin{small}
\begin{equation*}
    F\text{-}measure@K(u)=\frac{2\cdot Precision@K(u)\cdot Recall@K(u)}{Precision@K(u)+Recall@K(u)}.
\end{equation*}    
\end{small}

\subsection{Settings}
\label{subsec: settings}
\begin{table*}[t]
\centering
\caption{Performance comparison of different algorithms on four datasets under top-$K$ = 20 and 40. \\ \textbf{P}, \textbf{R}, and \textbf{F} denote Precision, Recall, and F-measure, respectively.}
\label{table: results}
\begin{tabular}{c|cccccc|cccccc}
\toprule
\textbf{Algorithm} & \textbf{P@20} & \textbf{R@20} & \textbf{F@20}
  & \textbf{P@40} & \textbf{R@40} & \textbf{F@40} & \textbf{P@20} & \textbf{R@20} & \textbf{F@20}
  & \textbf{P@40} & \textbf{R@40} & \textbf{F@40} \\ \midrule
  & \multicolumn{6}{c|}{Movie} & \multicolumn{6}{c}{MIND} \\
\midrule
\multicolumn{13}{c}{ } \\[-1.5em]
\midrule
DIN         & 0.2132 & 0.1142 & 0.1309 & 0.1957 & 0.2013 & 0.1706  & 0.4174 & 0.1912 & 0.2320 & \textbf{0.3485} & \textbf{0.2930} & \textbf{0.2756} \\
YouTubeDNN  & 0.2043 & 0.1113 & 0.1270 & 0.1874 & 0.1924 & 0.1630  & 0.4025 & 0.1850 & 0.2243 & 0.3357 & 0.2851 & 0.2671 \\
FMLP-Rec    & 0.2310 & 0.1273 & 0.1450 & 0.2022 & 0.2110 & 0.1778  & 0.4112 & 0.1902 & 0.2303 & 0.3363 & 0.2866 & 0.2684 \\ \midrule
SCANN       & 0.1902 & 0.1017 & 0.1164 & 0.1755 & 0.1782 & 0.1511  & 0.3264 & 0.1439 & 0.1766 & 0.2819 & 0.2342 & 0.2213 \\
IPNSW       & 0.2046 & 0.1114 & 0.1271 & 0.1877 & 0.1925 & 0.1631  & 0.4019 & 0.1839 & 0.2233 & 0.3343 & 0.2824 & 0.2652 \\
QINCo  & 0.2010 & 0.1096 & 0.1249 & 0.1841 & 0.1881 & 0.1595  & 0.3886 & 0.1780 & 0.2160 & 0.3277 & 0.2782 & 0.2606 \\
ROTLEX & 0.2143 & 0.1165 & 0.1334& 0.1842& 0.1875& 0.1598& 0.3985 &0.1903 &0.2297 & 0.3178 & 0.2787 & 0.2601 \\
JTR         & 0.2285 & 0.1251 & 0.1428 & 0.1961 & 0.2015 & 0.1708  & 0.3855 & 0.1740 & 0.2117 & 0.3204 & 0.2693 & 0.2529 \\
PLT         & 0.1606 & 0.0859 & 0.0981 & 0.1478 & 0.1489 & 0.1264  & 0.2341 & 0.0927 & 0.1168 & 0.2024 & 0.1532 & 0.1494 \\
TDM         & 0.1901 & 0.1012 & 0.1159 & 0.1741 & 0.1770 & 0.1499  & 0.3156 & 0.1365 & 0.1687 & 0.2706 & 0.2236 & 0.2116 \\
JTM         & 0.2141 & 0.1169 & 0.1335 & 0.1922 & 0.1993 & 0.1682  & 0.3457 & 0.1547 & 0.1892 & 0.2806 & 0.2354 & 0.2213 \\
OTM         & 0.2321 & 0.1290 & 0.1470 & 0.2009 & 0.2086 & 0.1767  & 0.3724 & 0.1609 & 0.1978 & 0.3144 & 0.2523 & 0.2404 \\ \midrule
DTR(U)      & 0.2367 & 0.1231 & 0.1431 & 0.2054 & 0.2032 & 0.1759  & 0.3904 & 0.1804 & 0.2188 & 0.3063 & 0.2621 & 0.2448 \\
DTR(T)      & 0.2511 & 0.1343 & 0.1546 & 0.2135 & 0.2169 & 0.1853  & 0.4044 & 0.1853 & 0.2250 & 0.3318 & 0.2808 & 0.2634 \\
DTR(T-RL)   & \textbf{0.2545} & \textbf{0.1374} & \textbf{0.1580} & \textbf{0.2176} & \textbf{0.2240} & \textbf{0.1905} 
            & \textbf{0.4210} & \textbf{0.1938} & \textbf{0.2350} & \textbf{0.3401} & \textbf{0.2869} & \textbf{0.2695} \\
\midrule
{}  & \multicolumn{6}{c|}{Amazon} & \multicolumn{6}{c}{Tmall} \\
\midrule
\multicolumn{13}{c}{ } \\[-1.5em]
\midrule
DIN        & 0.0706 & 0.0495 & 0.0528 & 0.0605 & 0.0828 & 0.0625  & \textbf{0.0488} & \textbf{0.0350} & \textbf{0.0376} & \textbf{0.0379} & \textbf{0.0531} & \textbf{0.0405} \\
YouTubeDNN & 0.0606 & 0.0399 & 0.0436 & 0.0522 & 0.0682 & 0.0527  & 0.0331 & 0.0211 & 0.0236 & 0.0276 & 0.0349 & 0.0278 \\
FMLP-Rec   & 0.0711 & 0.0500 & 0.0533 & 0.0601 & 0.0825 & 0.0622  & \textbf{0.0485} & \textbf{0.0345} & \textbf{0.0372} & 0.0366 & 0.0509 & 0.0390 \\ \midrule
SCANN      & 0.0543 & 0.0365 & 0.0395 & 0.0462 & 0.0613 & 0.0470  & 0.0300 & 0.0193 & 0.0215 & 0.0252 & 0.0320 & 0.0255 \\
IPNSW      & 0.0587 & 0.0396 & 0.0427 & 0.0503 & 0.0664 & 0.0511  & 0.0316 & 0.0199 & 0.0223 & 0.0252 & 0.0311 & 0.0251 \\
QINCo & 0.0562 &  0.0380 & 0.0411 & 0.0492 & 0.0653 & 0.0502  & 0.0317 & 0.0202 & 0.0225 & 0.0265 & 0.0332 & 0.0266 \\
ROTLEX &0.0633 &0.0438 &0.0471 & 0.0497 &0.0669 & 0.0510 &0.0404 &0.0288 & 0.0311 & 0.0291 &0.0405 &0.0309\\
JTR        & 0.0590 & 0.0399 & 0.0432 & 0.0502 & 0.0672 & 0.0515  & 0.0474 & 0.0340 & 0.0366 & 0.0352 & 0.0494 & 0.0377 \\
PLT        & 0.0238 & 0.0147 & 0.0162 & 0.0201 & 0.0245 & 0.0193  & 0.0167 & 0.0111 & 0.0122 & 0.0133 & 0.0176 & 0.0137 \\
TDM        & 0.0514 & 0.0333 & 0.0365 & 0.0419 & 0.0530 & 0.0416  & 0.0310 & 0.0209 & 0.0229 & 0.0253 & 0.0335 & 0.0262 \\
JTM        & 0.0606 & 0.0411 & 0.0444 & 0.0507 & 0.0674 & 0.0517  & 0.0394 & 0.0275 & 0.0298 & 0.0315 & 0.0427 & 0.0331 \\
OTM        & 0.0704 & 0.0440 & 0.0486 & 0.0584 & 0.0710 & 0.0566  & 0.0459 & 0.0283 & 0.0318 & 0.0361 & 0.0433 & 0.0353 \\ \midrule
DTR(U)     & 0.0709 & 0.0502 & 0.0535 & 0.0594 & 0.0820 & 0.0619  & 0.0442 & 0.0312 & 0.0337 & 0.0356 & 0.0489 & 0.0377 \\
DTR(T)     & 0.0757 & 0.0517 & 0.0557 & 0.0613 & 0.0829 & 0.0631  & 0.0464 & 0.0329 & 0.0355 & 0.0364 & 0.0503 & 0.0385 \\
DTR(T-RL)  & \textbf{0.0777} & \textbf{0.0542} & \textbf{0.0580} & \textbf{0.0626} & \textbf{0.0847} & \textbf{0.0644}
           & \textbf{0.0482} & \textbf{0.0342} & \textbf{0.0369} & \textbf{0.0373} & \textbf{0.0514} & \textbf{0.0395}
\\
\bottomrule
\end{tabular}
\end{table*}

We provide detailed settings of the baseline algorithm and our proposed DTR. 

DIN comprises an embedding layer followed by an MLP with layer sizes [128,64,2], which outputs the probability of a user liking a given item.
YouTubeDNN also includes an embedding layer and an MLP with layer sizes [128,64,24].
FMLP-Rec includes an embedding layer, followed by 4 filter layers, and an MLP with layer sizes [24,128,24].

For the two-stage methods (ScaNN, IPNSW, QINCo, and ROTLEX), which build an index over fixed item embeddings,  we use item embeddings from the pretrained DIN for a fair comparison. The query vector for each user is computed by averaging the item embeddings within their interaction sequence. For ScaNN, we follow the default settings\footnote{\href{https://github.com/google-research/google-research/tree/master/scann}{google-research/scann}}: the number of leaves $K_{vq}$ is set to 2000, the number of sub-spaces is set to 6, each with a dimensionality of 4, and the number of codewords per sub-space $K_{pq}$ is 16; the threshold is set to 0.2. To construct the IPNSW graph index, each node’s maximum degree is set to 16, and the search width is set to 100. For QINCo\footnote{\href{https://github.com/facebookresearch/Qinco/tree/main/qinco_v1}{facebookresearch/Qinco/qinco\_v1}}, we use 8 subquantizers, 4 residual blocks, a codebook size of 256, and a hidden dimension of 256 for the residual blocks. For ROTLEX\footnote{\href{https://github.com/USTCLLM/rotlex}{https://github.com/USTCLLM/rotlex}}, we use the official defaults (MLP width $W=512$, branching factor $k=32$, two layers) and convert each training-set user sequence into a query vector to accommodate its query-aware training design.

JTR consists of a query encoder and a tree  where each leaf node contains multiple items. Following the official settings\footnote{\url{https://github.com/CSHaitao/JTR}} and adapting to the sequential recommendation scenario, we use a pre-trained SASRec as the query encoder. The number of branches is set to 10, with the maximum items per leaf node $\gamma$ set to 100. The tree is initialized by k-means clustering on item embeddings from SASRec and updated using the overlapped clustering technique. 

For tree-based algorithms except JTR, a binary tree is adopted, with DIN or its variants serving as the preference model. The user’s 69 historical interactions are divided into 10 sliding windows, with window sizes of $\left[20, 20, 10, 10, 2, 2, 2, 1, 1, 1\right]$. Except for OTM, these algorithms initialize the tree based on item categories; OTM adopts a tree learned from JTM.  TDM, JTM, and DTR include a process of updating the tree. TDM is configured to include 4 iterations of tree updating, while JTM and DTR are set to include 12 iterations. 

DTR employs two sampling methods: uniform sampling and tree-based sampling, denoted as DTR(U) and DTR(T), respectively. DTR(T) further incorporates the modified loss with rectified labels, referred to as DTR(T-RL). For DTR(T-RL), we utilize the SASRec~\cite{kang2018self} to estimate the conditional probability. For a given user $u$, SASRec encodes his historical sequence into a semantic vector, which is then combined with item embeddings via inner product and softmax to yield conditional probabilities. SASRec is trained on the same trainset as other algorithms. 
When evaluated on the testset, the trained SASRec performs at most 0.66\% lower than DIN in terms of $F\text{-}measure@20$ across four datasets, indicating it is a well-behaved estimator. As SASRec is a two-tower model, the additional computation cost is relatively minimal. In our experiments, the time spent on estimating probabilities accounts for approximately 19.32\% to 23.96\% of the total time across four datasets, which is considered acceptable.

In all algorithms, the dimension of the item embedding is set to 24, and the batch size is set to 100. We employ Adam as the optimizer \cite{kingma2014adam}, with a learning rate initialized at $1.0e$-$3$ and subjected to exponential decay. For SCANN, the number of probed leaves $W$ is set to 1000. For IPNSW and tree-based algorithms, the beam size $k$ is set to 150.
The number of negative samples is set to 70 and 1,000 for DIN and YouTubeDNN, respectively. 
Negative nodes are the sibling nodes of the positive node in PLT, while for other tree-based algorithms, the negative nodes are sampled from each layer.
For TDM and JTM, the number of negative samples per layer is set to 6, 3, 5, and 6 for the {Movie}, {MIND}, {Amazon}, and {Tmall} datasets, respectively. In OTM, since sampled nodes are the candidates generated through beam search, the number of negative samples is set to 300. For JTR and DTR, the number of negative samples is set to 50 and 70, respectively.
All experiments are conducted on a Linux server equipped with a $3.00$ GHz Intel CPU, $300$ GB of main memory, and NVIDIA $2080$  GPUs. The source code of DTR is available at \url{https://github.com/marsh312/DTR}.


\subsection{Comparison with baselines}
The experiment results are shown in {Table \ref{table: results}}, where the best performance of index-based algorithms is highlighted in bold. Additionally, any results from brute-force search algorithms that exceed the best performance of index-based algorithms are also emphasized in bold for reference. Based on the results, we can make the following findings:
\begin{itemize}[leftmargin=*]
    \item The proposed DTR algorithm consistently outperforms both TDM and JTM across all four datasets. When compared to their enhanced version, OTM, DTR remains highly effective: DTR(U) outperforms OTM in most cases, while DTR(T) and DTR(T-RL) consistently surpass OTM. For example, considering $F\text{-}measure@20$, DTR(T-RL) shows improvements of 7.48\%, 18.81\%, 19.34\%, and 16.03\% across the four datasets compared to OTM. This result demonstrates that the softmax-based multi-class classification training mode is better suited for jointly optimizing the preference model and tree structure index than the binary classification training mode of TDM, JTM, and OTM, effectively answering \textbf{RQ-1}. 

    \item The superiority of DTR(T-RL) over DTR(U) is consistently observed across all datasets. For example, in terms of $F\text{-}measure@20$, DTR(T-RL) achieves improvements of 2.20\%, 4.44\%, 4.13\%, and 3.94\%, respectively, across the four datasets. This outcome is in line with the theoretical insights discussed in Sections~\ref{sec: suboptimality of ce loss} and~\ref{sec: rectification loss}. It confirms that the modified loss function with rectified labels effectively mitigates the suboptimality of the multi-class cross-entropy loss when applied under beam search conditions, thus providing an effective answer to \textbf{RQ-2}.

    \item The performance of DTR(T) consistently surpasses that of DTR(U). For instance, when the top-K value is set to 20, DTR(T) shows an increase in $F\text{-}measure$ by 8.04\%, 2.83\%, 4.11\%, and 4.41\% across the four datasets, respectively.  These improvements confirms that our tree-based sampling method better aligns with the requirements of sampled softmax theory, as its sampling probabilities are more closely aligned with the actual softmax probabilities. This result validates the theoretical analysis in Section~\ref{sec: theoretical support of tree-based sampling } (i.e., Theorem~\ref{theo: sampling theory to unbiased softmax}), and effectively addresses \textbf{RQ-3}.
    
    \item  DTR(T-RL) also demonstrates strong performance compared to brute-force search algorithms. It outperforms YouTubeDNN across all datasets, and surpasses DIN and FMLP-Rec on most of them. These results suggest that integrating the index with the learning of the preference model can yield better retrieval than optimizing the model alone. Moreover, the tree index allows the model to exclude some distracting items in certain scenarios, thereby enhancing retrieval performance.

    \item Compared with other index-based retrieval algorithms—including IPNSW (graph-based), SCANN and QINCo (quantization-based), ROTLEX (query-aware tree-based), and JTR (tree-based with a flexible tree and separate query encoder)—DTR(T-RL) consistently achieves superior performance across all four datasets. These results underscore the advantage of our joint optimization strategy, which tightly couples the tree index and the preference model, enabling more accurate and efficient retrieval.
    
\end{itemize}

\subsection{Comparison of Inference Time}
Table~\ref{tab:inference_time} reports the theoretical time complexities of several representative algorithms together with their running times on the testsets of four datasets. The algorithms compared are: the brute force-based DIN, the graph-based IPNSW, the quantization-based SCANN, and two tree‑based methods, JTR and DTR. Note that in JTR each leaf node corresponds to multiple items, so after retrieving the leaf nodes an additional re‑ranking stage is required; in contrast, in DTR each leaf node maps one‑to‑one to an item, meaning that once the leaf node is retrieved, the item retrieval is complete. The table shows that DTR achieves the lowest inference time on three datasets, and even on Tmall it requires only about an extra 21.3\% of the inference time compared to the best SCANN -- far outperforming the other methods. This demonstrates that DTR can indeed improve retrieval efficiency, thus answering \textbf{RQ‑4}.
\begin{table}[htbp]
\caption{Complexity and Inference Time (in seconds) of Representative Algorithms}
\label{tab:inference_time}
\renewcommand{\arraystretch}{1.2}
\centering
\resizebox{0.49\textwidth}{!}{
\begin{tabular}{c|c|cccc}
\hline
Algorithm & Complexity & Movie & MIND & Amazon & Tmall \\ \hline
DIN & $O(|\mathcal{Y}|)$ & 81.32 & 38.93 & 217.21 & 2104.05 \\
IPNSW & $O(\log |\mathcal{Y}|)$ & 16.84 & 10.98 & 11.15 & 53.45 \\
SCANN & $O(K_{vq} + K_{pq} + \frac{W |\mathcal{Y}|}{K_{vq}})$ & 20.39 & 18.64 & 4.13 & \textbf{15.49} \\
JTR & $O(k \log \frac{|\mathcal{Y}|}{\gamma} + k\gamma)$ & 11.95 & 9.24 & 15.87 & 82.79 \\
DTR & $O(k\log |\mathcal{Y}|)$ & \textbf{5.58} & \textbf{3.40} & \textbf{3.53} & 18.79 \\
\hline
\end{tabular}
}
\end{table}

\subsection{Effectiveness of Tree Index Updating}
To demonstrate the effectiveness of DTR’s tree-index updating (Section~\ref{sec: tree index updating}), we compare three DTR variants against structurally similar, update-capable tree-based algorithms. As PLT and OTM don't involve tree updating and JTM enhances TDM with a preference-aligned tree update strategy, we focus on JTM versus the DTR variants.
\begin{figure}[htbp]
  \centering \subfloat{\includegraphics[width=0.8\linewidth]{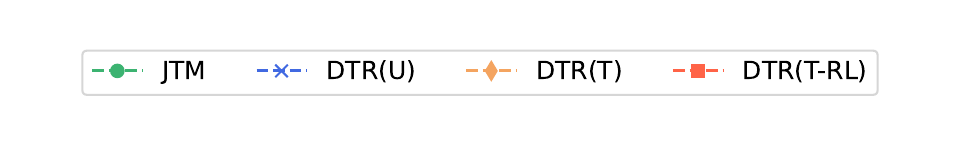}}\\
  \vspace{-0.5cm}
  \setcounter{subfigure}{0} 
  \subfloat[Movie]{\includegraphics[width=0.243\textwidth]{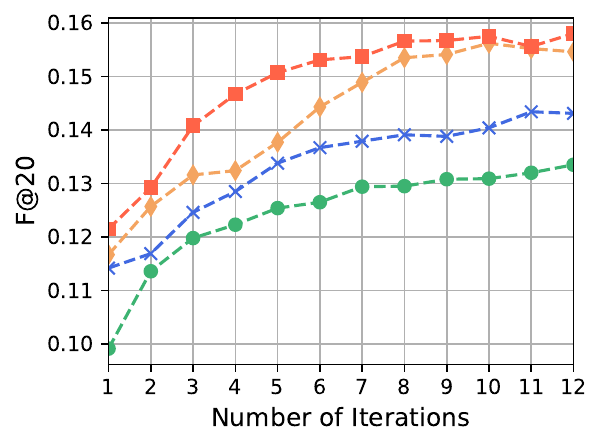}}
  \subfloat[MIND]{\includegraphics[width=0.243\textwidth]{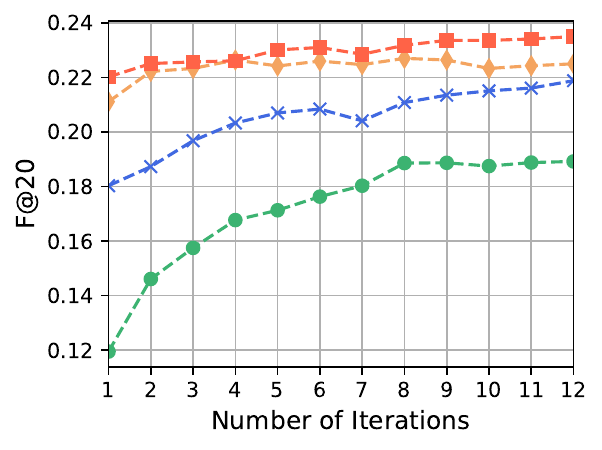}}\\
  \vspace{-0.25cm}
  \subfloat[Amazon]{\includegraphics[width=0.243\textwidth]{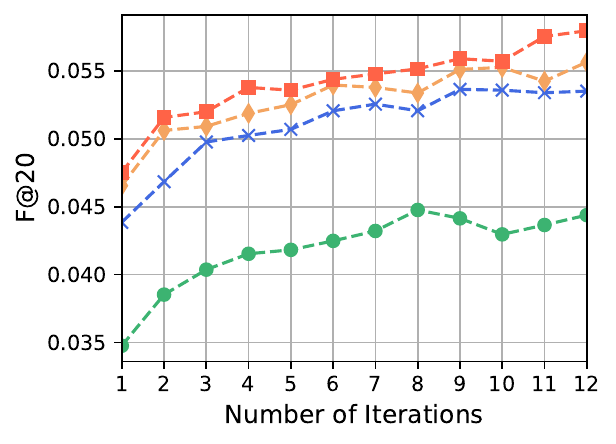}}
  \subfloat[Tmall]{\includegraphics[width=0.243\textwidth]{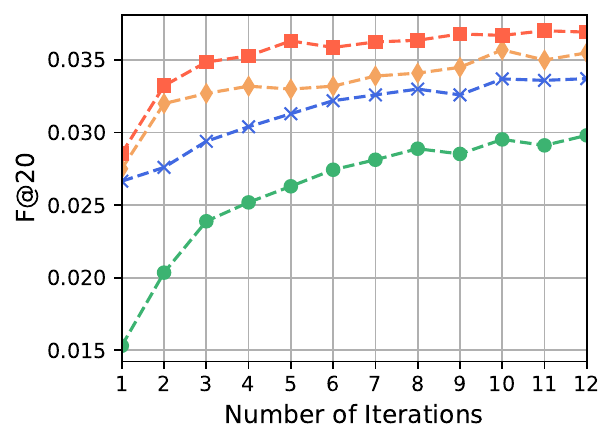}}\\
  \vspace{-0.1cm}
  \caption{The variation of $F\text{-}measure@20$ with tree updating for JTM and tree DTR variants across four datasets.}
  \label{fig:tree_update_fig}
\end{figure}

We show how $F\text{-}measure@20$ changes with tree updates across four datasets, as presented in {Figure \ref{fig:tree_update_fig}}. From these results, the following findings can be made:
\begin{itemize}[leftmargin=*]
    \item The three variants of DTR consistently show an initial increase followed by a gradual convergence across four different datasets. This phenomenon suggests that the tree update strategy utilized by DTR is effective in establishing a reasonable mapping between items and leaf nodes. Consequently, the index structure evolves alongside the preference model, leading to improved retrieval performance. This result convincingly addresses \textbf{RQ-5}.
    \item The three variants of DTR consistently exhibit superior performance, significantly outperforming JTM across four datasets. This result further illustrates the superiority of the softmax-based multi-class classification training mode employed by DTR over the binary classification training mode used by JTM, thereby answering \textbf{RQ-1} once again.
    \item The advantage of DTR(T-RL) over DTR(T) is also evident throughout the tree updating process. This further demonstrates that the modified loss function, incorporating the rectified label, can effectively enhance the performance of the tree model, aligning with our theoretical analysis and also answering \textbf{RQ-2}.
    \item The variants of DTR based on tree-based sampling consistently outperforms DTR(U). This result indicates that tree-based sampling can estimate the softmax gradient with greater accuracy compared to uniform sampling. It validates the proposed sampling method's rationality and echoes the theoretical conclusions underpinning tree-based sampling, also addressing \textbf{RQ-3}.
\end{itemize}

\subsection{Sensitivity Analysis}
We analyze the impact of the number of negative samples, the number of tree branches, and the embedding dimension on the performance of three DTR variants on the MIND and Movie datasets. Unless otherwise specified, all experiments in the following subsections adopt a default configuration: the tree is initialized based on item categories, the number of negative samples is set to 70, the number of tree branches to 2, and the embedding dimension to 24.

\subsubsection{Sensitivity w.r.t. Number of Negative Samples}

We investigate how the number of negative samples affects performance by varying it from $10$ to $90$ in increments of $10$. Results are shown in Figure~\ref{fig:vary_sampling_num}. 
The results show that DTR(U) is competitive when the number of negatives is relatively small. 
However, as the negative sample size becomes relatively large, the two tree-based sampling variants consistently outperform DTR(U). It indicates that while uniform sampling is effective, tree-based sampling becomes significantly advantageous with larger sample sizes.
Moreover, DTR(T-RL) outperforms DTR(T), which validates the effectiveness of the modified loss with rectified labels again. As the number of negative samples increases, the $F$-$measure@20$ of three DTR variants all show a gradual improvement and eventually converge. When the number of negative samples reaches $50$ or more, the performance of all three DTR variants stabilizes, indicating that DTR's robustness to  the number of negative samples. 
\begin{figure}[htbp]
  \centering \subfloat{\includegraphics[width=0.8\linewidth]{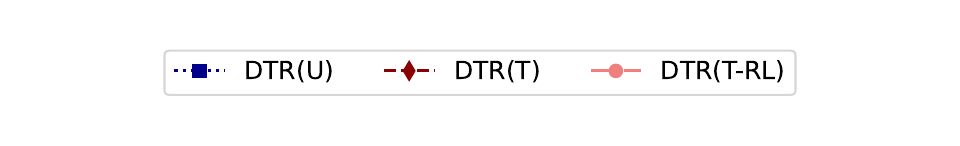}}\\
  \vspace{-0.6cm}
  \setcounter{subfigure}{0} 
  \subfloat[Movie]{\includegraphics[width=0.243\textwidth]{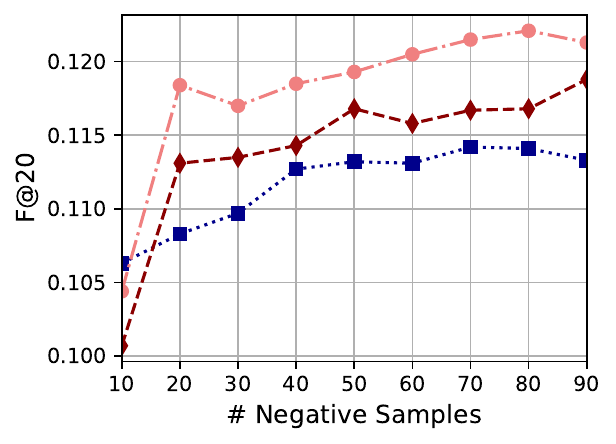}}
  \subfloat[MIND]{\includegraphics[width=0.243\textwidth]{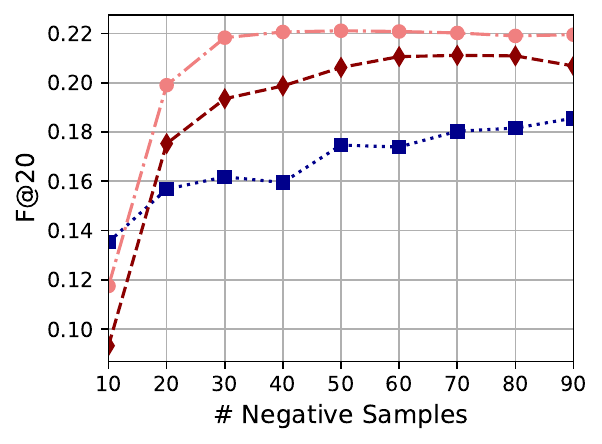}}
  \vspace{-0.1cm}
  \caption{The $F\text{-}measure@20$ of DTR with varying numbers of negative samples on MIND and Movie.}
  \label{fig:vary_sampling_num}
\end{figure}

\subsubsection{Sensitivity w.r.t Branch Number}
We explore the impact of the number of tree branches on the performance of DTR by varying the branch number from 2 to 10 in increments of 2. To ensure a fair comparison, all trees are randomly initialized, and the number of negative samples is adaptively adjusted to maintain a consistent total across the entire tree for different branch settings. The results are shown in Figure~\ref{fig:vary_branch_num}. The results indicate that the $F$-$measure@20$ of all three DTR variants generally improves as the number of tree branches increases. Among them, DTR(U) demonstrates the most substantial performance gain with more branches. These findings suggest that increasing the branch number can effectively enhance DTR's performance, which aligns with the theoretical insight presented in Theorem~\ref{theo: final bound}.

\begin{figure}[htbp]
  \centering \subfloat{\includegraphics[width=0.8\linewidth]{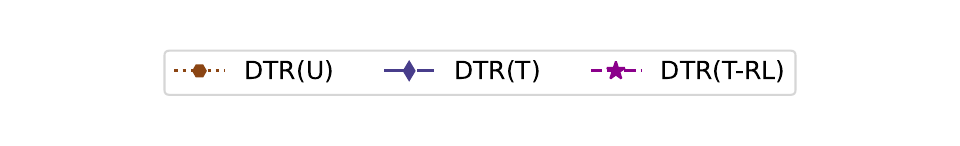}}\\
  \vspace{-0.6cm}
  \setcounter{subfigure}{0} 
  \subfloat[Movie]{\includegraphics[width=0.243\textwidth]{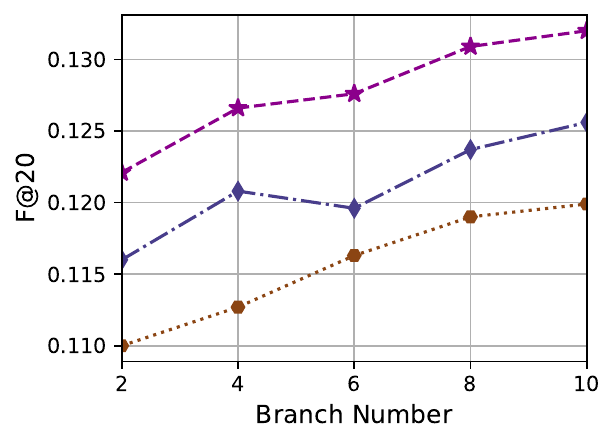}}
  \subfloat[MIND]{\includegraphics[width=0.243\textwidth]{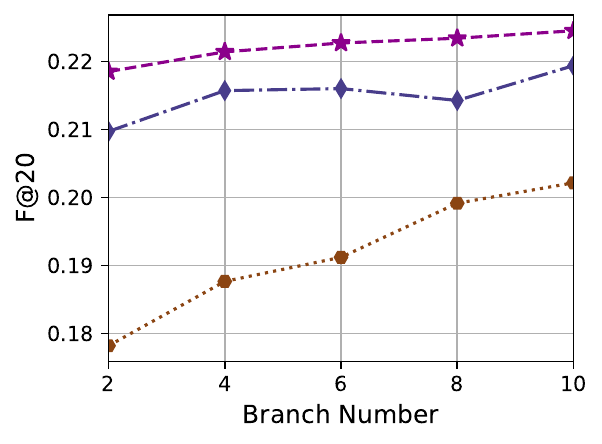}}
  \vspace{-0.1cm}
  \caption{The $F\text{-}measure@20$ of DTR with varying numbers of tree branch on MIND and Movie.}
  \label{fig:vary_branch_num}
\end{figure}

\subsubsection{Sensitivity w.r.t Embedding Dimension}
We examine the impact of the embedding dimension by varying it from 12 to 32 in increments of 4. The results are presented in Figure~\ref{fig:vary_embedding_size}. As illustrated, the performance of DTR is relatively low when the embedding dimension is small, but improves as the dimension increases. After reaching 24, the performance tends to stabilize, with further increases bringing only marginal gains and even causing slight performance degradation.
\begin{figure}[htbp]
  \centering \subfloat{\includegraphics[width=0.8\linewidth]{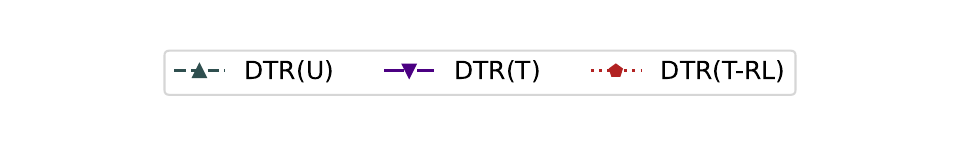}}\\
  \vspace{-0.6cm}
  \setcounter{subfigure}{0} 
  \subfloat[Movie]{\includegraphics[width=0.243\textwidth]{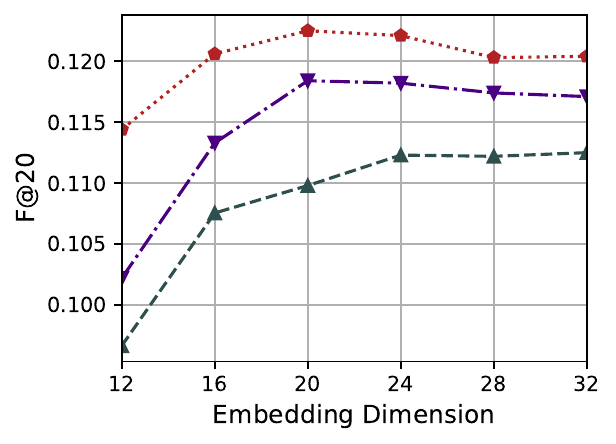}}
  \subfloat[MIND]{\includegraphics[width=0.243\textwidth]{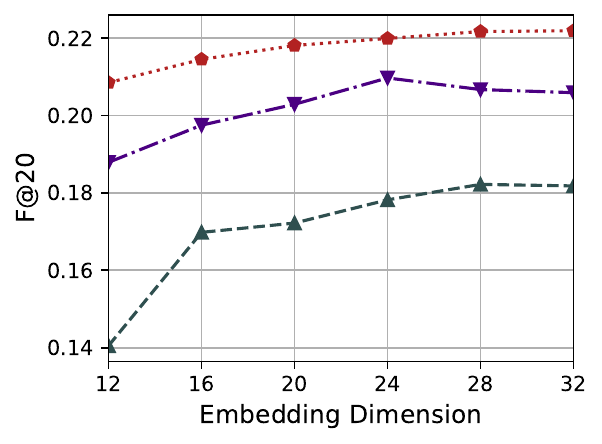}}
  \\
  \vspace{-0.1cm}
  \caption{The $F\text{-}measure@20$ of DTR with varying embedding dimension on
   MIND and Movie.}
  \label{fig:vary_embedding_size}
\end{figure}

\section{Conclusion}
\label{sec: conclusion}
The existing tree-based recommendation systems treat the preference model training task as independent binary classification problems for each node. Their training mode suffers from the gap between training and prediction due to the lack ofhttps://www.overleaf.com/project competition among nodes at the same layer. To address this problem, we develop a layer-wise training mode that frames the training task as a softmax-based multi-class classification problem and propose a tree learning method compatible with this training mode. Our theoretical analysis reveals the suboptimality of this training mode under beam search, promoting us to propose a rectification method. Given the significant time cost of calculating softmax probabilities with large-scale datasets, we develop a tree-based sampling method guided by the sampled softmax theory to estimate the loss gradient accurately, thus accelerating the training process.  Finally, we provide a generalization analysis of the deep tree-based retriever, demonstrating its great generalization capability. Experimental results validate the effectiveness of our proposed methods.

\section*{Acknowledgments}
The work was supported by grants from the National Natural Science Foundation of China (No. 62022077).

\bibliographystyle{IEEEtran}
\bibliography{reference}

\clearpage
\newpage
\begin{appendices}
\section{Notations}
\label{sec: notations}
Some important and common notations in this paper are summarized in Table~\ref{tab: notations}.
\vspace{-0.32cm}
\begin{table}[ht]
    \centering
    \caption{Some important and common notations in this paper.}
    \vspace{-0.3cm}
    \scalebox{0.83}{
    \begin{tabular}{c|c}
    \toprule
        Notations & Descriptions\\ \midrule
        $\mathcal{U},\mathcal{Y}$ & The user space and the item set.\\ 
        \midrule
        $\mathcal{T}, \mathcal{M}$ & The tree and the preference model.\\
        \midrule
        $\mathcal{N}, \mathcal{N}^j$ & The tree node set and the set of $j$-th level's nodes.\\
        \midrule
        $B,H$ & The branch number of $\mathcal{T}$ and the tree height.\\
        \midrule
        $K$ & The length of historical interaction sequence.\\
        \midrule
        $L$ & The number of hidden layers in the neural network.\\
        \midrule
        $\eta_y$ & The conditional probability of the item $y$.\\
        \midrule
        $n, n_i^j$ & The tree node and the $i$-th node in the $j$-th level. \\
        \midrule
        $N_j$ & The number of nodes in level $j$.\\
        \midrule
        $N$ & The number of classes in muti-class classification.\\ \midrule
        $\rho^l(n)$ & The ancestor of node $n$ in level $l$.\\
        \midrule
        $\pi$ & The bijective mapping maps leaf nodes to items.\\
        \midrule
        $\delta^j(y)$ & The index of $j$-th ancestor of $\pi^{-1}(y)$ within its level.\\
        \midrule
        $\mathcal{C}(n)$ &  The set of node $n$'s child nodes.\\
        \midrule
        $\mathfrak{L}(n)$ &The set of leaf nodes within the subtree rooted at $n$.\\
        \midrule
        $o_i^j,\hat{o}_i^j$ & The preference score and adjusted preference score for node $n_i^j$.\\
        \midrule
        $\mathcal{B}^j$ & The index set of selected nodes in layer $j$.\\
        \midrule
        $z_i^j, \bar{z}_i^j, \tilde{z}_i^j$ & The label, rectified label, and normalized rectified label for node $n_i^j$. \\
        \midrule
        $\mathcal{L}_j$ & The multi-class cross-entropy loss function of layer $j$. \\
        \midrule$\widetilde{\mathcal{L}}_j$ & The modified loss function with rectified labels of layer $j$.\\
        \midrule
        $\widehat{\mathcal{L}}_j$ & The loss function $\widetilde{\mathcal{L}}_j$ incorporating sampled softmax.\\
        \midrule
        $M$ & The number of negative samples in sampled softmax.\\ \midrule
        $\mathcal{I}_M^j$& The set of indices of $M$ negative nodes at layer $j$.\\  \midrule
        $\mathcal{R}_j, \widetilde{\mathcal{R}}_j$&The conditional risk of loss $\mathcal{L}_j$ and $\widetilde{\mathcal{L}}_j$.\\
        \midrule$\widehat{\mathcal{R}}_m$&The empirical Rademacher complexity for a sample size $m$.\\
        \midrule
        $Q^j, P^j$ & The sampling distribution and softmax distribution in layer $j$.\\
        \midrule
        $q_i^j$ & The negative sampling probability for node $n_i^j$.\\
        \bottomrule
        $p_i^j$ & The softmax probability for node $n_i^j$ based on the preference scores.\\
        \bottomrule 
    \end{tabular}
    }
    \label{tab: notations}
\end{table}

\section{Overview of the DTR Learning Process}
\label{appendix: overview of learning process of DTR}
A pseudocode overview of the DTR learning process is given in Algorithm~\ref{alg: learning process of DTR}.

\section{Proofs of Lemma, proposition, and Theorem}
\subsection{Proof of Proposition \ref{pro:suboptimality of multi-class cross-entropy loss}}
\label{appendix: proof of proposition suboptimality of multi-class cross-entropy loss}
Our proof relies on the fact that the multi-class cross-entropy loss (i.e., Eq. \eqref{eq:multi_cross_entropy_single_layer}) is actually a Bregman divergence: by taking $\psi(\boldsymbol{s})=\sum_{n=1}^N {s}_n\log {s}_n$ and $g(\boldsymbol{o})_n={\exp{o}_n}/{\sum_{k=1}^N\exp{o}_k}$, the multi-class cross-entropy loss  
\begin{equation*}
    \mathcal{L}(\boldsymbol{e}^{(i)},\boldsymbol{o})=-\log\frac{\exp{o}_i}{\sum_{n=1}^N\exp{o}_n}
\end{equation*}    
can be expressed as Bregman divergence:
\begin{equation*}
    \mathcal{L}(\boldsymbol{e}^{(i)},\boldsymbol{o}) = D_{\psi}(\boldsymbol{e}^{(i)},g(\boldsymbol{o})).
\end{equation*}
Here,  $\boldsymbol{o}, \boldsymbol{e}^{(i)} \in \mathbb{R}^N$, and $\boldsymbol{e}^{(i)}$ is a one-hot vector where the $i$-th component is $1$ and other components are $0$.
\begin{proof}
Since $\mathcal{L}_j(u,y)$ is a Bregman divergence and the softmax function $g$ satisfies $R(\boldsymbol{s},g(\boldsymbol{s}))$ for any $\boldsymbol{s}\in \mathbb{R}^N$, 
according to  Theorem \ref{theo: property of minimum of Bregman divergence}, for a given user $u$, we have
\begin{equation*}
    \operatornamewithlimits{argmin}\limits_{\boldsymbol{o}^j\in \mathbb{R}^{N_j}} \mathcal{R}_j(u) \subseteq  \left\{\boldsymbol{o}^j\in \mathbb{R}^{N_j}\mid R(\boldsymbol{o}^j,\mathbb{E}[\boldsymbol{z}^j|u])\right\}
\end{equation*}
holds for $1\le j\le H$. That is, optimization aligns the ranking of node scores in each layer with the expectation of the nodes' label. As a result, during beam search, the top-$k$ nodes selected at each layer are determined by the expected values $\mathbb{E}[z^j_i|u]$, which results in suboptimal results under beam search. 

We give a non-optimal example. Considering the item set $\mathcal{Y}$ with $|\mathcal{Y}|=8$, we directly map $i$-th item to $i$-th leaf node, and construct the corresponding binary tree with height $H=3$.   Given the user $u_0$, the conditional probability vector is $\boldsymbol{\eta}(u_0)=(0.21,\ 0,\ 0.12,\ 0.18,\ 0.19,\ 0,\ 0.16,\ 0.14)$, where the $i$-th component is the conditional probability of $i$-th item. According to Eq. \eqref{eq: expectation of layer random vector}, the expectation vector is $(0.21, 0.3, 0.19, 0.3)$ and $(0.51, 0.49)$ for $2$nd layer and $1$st layer, respectively.
    
Let beam size $k=3$. Beam search expands nodes according to top-$3$ of $\mathbb{E}[z^j_i|u]$ at each layer. Therefore, the expanded nodes in $2$nd layer are $\{1^{(2)},\ 2^{(2)},\ 4^{(2)}\}$, and in $3$rd layer are $\{1^{(3)},\ 4^{(3)},\ 7^{(3)}\}$, which is the final retrieved result. However, the optimal result is $\{1^{(3)},\ 4^{(3)},\ 5^{(3)}\}$ under beam search with beam size $k=3$.  
\end{proof}

\subsection{Proof of Proposition \ref{pro: modified softmax Bregman divergence}}
\label{appendix: proof of proposition modified softmax Bregman divergence}
\begin{proof}
     By taking $\psi(\boldsymbol{x})=\sum_{n=1}^N {x}_n\log {x}_n$ and $g(\boldsymbol{o})_n=\exp {o}_n/ \sum_{k=1}^N\exp {o}_k$, we have
     \begin{small}
   \begin{align*}
    D_{\psi}&\left(\boldsymbol{e}^{(i)}(w),g(\boldsymbol{o})\right)\\
    \displaybreak[1]
    &= \psi\left(\boldsymbol{e}^{(i)}(w)\right)-\psi\left(g(\boldsymbol{o})\right)-\nabla\psi(g(\boldsymbol{o}))^T\left(\boldsymbol{e}^{(i)}(w)-g(\boldsymbol{o})\right)\\
    &= w\log w-\sum_{n=1}^N g(\boldsymbol{o})_n\log g(\boldsymbol{o})_n \\
    &\quad- \Big< \left(\log g(\boldsymbol{o})_1+1 ,\dots,\log g(\boldsymbol{o})_i + 1, \dots,\log g(\boldsymbol{o})_n + 1\right), \\
    &\quad\quad\quad\left(-g(\boldsymbol{o})_1,\dots,w-g(\boldsymbol{o})_i,\dots,-g(\boldsymbol{o})_n\right)\Big>\\ 
    \displaybreak[1]
    &=w\log w -\sum_{n=1}^N g(\boldsymbol{o})_n\log g(\boldsymbol{o})_n \\
    &\quad- (-\sum_{n=1}^N g(\boldsymbol{o})_n\log g(\boldsymbol{o})_n -1+w\log g(\boldsymbol{o})_i+w) \\
    &=-w\log g(\boldsymbol{o})_i + w\log w + w-1 
    = \widetilde{\mathcal{L}}(\boldsymbol{e}^{(i)}(w),\boldsymbol{o}) . 
    \end{align*}  
     \end{small}
\end{proof}


\subsection{Proof of Proposition~\ref{prop: optimality of label z under beam search}}
\label{appendix: proof of proposition optimality of label z under beam search}
We begin with a recursive lemma and then apply it iteratively to prove the proposition.
\begin{Lemma}
   Given a user $u\in \mathcal{U}$, if the tree model, which consists of a tree $\mathcal{T}$ and a preference model $\mathcal{M}$, satisfies $R(\boldsymbol{o}^j, \mathbb{E}[\tilde{\boldsymbol{z}}^j|u])$ for all $1\le j \le H$, then
    \begin{align*}
        \mathcal{B}^j(u) \in \operatornamewithlimits{argTopk}\limits_{i\in \left[N_j\right]}\mathbb{E}[\tilde{\boldsymbol{z}}^j_i|u] \Longrightarrow \mathcal{B}^{j+1}(u) \in \operatornamewithlimits{argTopk}\limits_{i\in \left[N_{j+1}\right]} \mathbb{E}[\tilde{\boldsymbol{z}}^{j+1}_i|u]
    \end{align*}
    holds for any beam size $k$  within the range $1\le j \le H-1$.
    \label{lemma: recursive optimal beam search}
\end{Lemma}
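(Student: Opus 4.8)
The plan is to prove the implication by tracking how the beam-search candidate set at level $j$ expands into level $j{+}1$, and showing the max-heap structure of $\mathbb{E}[\tilde{\boldsymbol{z}}^j|u]$ (recorded in Eq. \eqref{eq: the expectation of unnormalized rectified label}) forces the top-$k$ at level $j{+}1$ to survive the restriction to the expanded candidates. Write $m(n) = \max_{\ell \in \mathfrak{L}(n)} \eta_{\pi(\ell)}$ for a node $n$; up to the common normalization $1/\alpha^j(u)$ (which is the same constant across a level and hence irrelevant to $\operatornamewithlimits{argTopk}$), the $i$-th component of $\mathbb{E}[\tilde{\boldsymbol{z}}^j|u]$ is proportional to $m(n_i^j)$. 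The crucial monotonicity fact is that $m(n) = \max_{c \in \mathcal{C}(n)} m(c)$, i.e. a parent's value is the max of its children's — this is exactly the max-heap property the rectified labels were designed to enforce in expectation.

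First I would set up notation: let $\widetilde{\mathcal{B}}^{j+1}(u)$ be the index set of children of nodes in $\mathcal{B}^j(u)$, as in Eq. \eqref{eq: beam search process}, so that $\mathcal{B}^{j+1}(u) \in \operatornamewithlimits{argTopk}_{i \in \widetilde{\mathcal{B}}^{j+1}(u)} o_i^{j+1}$. Because $R(\boldsymbol{o}^{j+1}, \mathbb{E}[\tilde{\boldsymbol{z}}^{j+1}|u])$ holds by hypothesis, selecting the top-$k$ of $\widetilde{\mathcal{B}}^{j+1}(u)$ by $o^{j+1}$ is the same as selecting the top-$k$ by $\mathbb{E}[\tilde z^{j+1}_i|u]$, equivalently by $m(n_i^{j+1})$ (ties handled by the ``$\in$'' convention, which I would mention explicitly). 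So it suffices to show: the top-$k$ values of $m(\cdot)$ over all nodes at level $j{+}1$ can be realized by nodes whose parents lie in $\mathcal{B}^j(u)$, and that $\mathcal{B}^j(u)$ already being a top-$k$ set of $m(\cdot)$ at level $j$ guarantees this.

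The key step: take any node $n^{j+1}$ realizing one of the $k$ largest values of $m(\cdot)$ at level $j{+}1$; its parent $p = \rho^j(n^{j+1})$ satisfies $m(p) = \max_{c \in \mathcal{C}(p)} m(c) \ge m(n^{j+1})$. I would argue that the multiset $\{m(p) : p = \text{parent of some top-}k\text{ node at level }j{+}1\}$ dominates, entrywise after sorting, the $k$-th largest value of $m(\cdot)$ at level $j{+}1$, so these parents are themselves a top-$k$ set at level $j$ — hence can be taken to be $\mathcal{B}^j(u)$ (again using the tie convention to pick the right top-$k$ representative when there is freedom). Then the children of $\mathcal{B}^j(u)$ include a full top-$k$ collection at level $j{+}1$, so $\operatornamewithlimits{argTopk}_{i \in \widetilde{\mathcal{B}}^{j+1}(u)} m(n_i^{j+1})$ meets $\operatornamewithlimits{argTopk}_{i \in [N_{j+1}]} m(n_i^{j+1})$, giving the conclusion. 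The edge case $|\widetilde{\mathcal{B}}^{j+1}(u)| < k$ is handled by the stated convention that then all nodes at that level are selected, and the argument degenerates harmlessly.

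The main obstacle I anticipate is the careful bookkeeping around ties: because $\operatornamewithlimits{argTopk}$ is set-valued, I cannot simply say ``the parents are the top-$k$'' — I must show that *some* valid choice of $\mathcal{B}^j(u)$ (one consistent with the $\in$ in Eq. \eqref{eq: beam search process}) has the property that its children contain a valid top-$k$ of level $j{+}1$, and simultaneously that the child-selection $\mathcal{B}^{j+1}(u)$ guaranteed by the recursion lands in the global $\operatornamewithlimits{argTopk}$. A clean way to handle this is a pigeonhole/exchange argument: if some top-$k$ node at level $j{+}1$ had a parent outside $\mathcal{B}^j(u)$, that parent would have $m$-value at least as large as some node inside $\mathcal{B}^j(u)$, contradicting (or allowing a swap preserving) the top-$k$ choice at level $j$. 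Making this exchange argument fully rigorous while respecting both layers' tie conventions is the delicate part; everything else is the one-line max-heap identity $m(p) = \max_{c} m(c)$ plus the rank-consistency hypothesis.
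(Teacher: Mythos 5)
Your overall strategy --- the max-heap identity $m(n)=\max_{c\in\mathcal{C}(n)}m(c)$ (writing $m(n)$ for $\max_{n'\in\mathfrak{L}(n)}\eta_{\pi(n')}$ as you do) combined with rank consistency and an exchange-style comparison --- is the same engine the paper uses; the paper runs it as a proof by contradiction, extracting a witness pair $i_1\in\mathcal{B}^{j+1}(u)$ lying in no global top-$k$ set and $i_2$ lying in some global top-$k$ set but unselected, deducing that $i_2$'s parent must sit outside $\mathcal{B}^j(u)$, and then using the best child of each $s\in\mathcal{B}^j(u)$ to produce $k$ candidates whose values all dominate $m(n^{j+1}_{i_2})$, a contradiction. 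However, your plan as written has a directional error that is a genuine gap, not mere bookkeeping. You propose to show that the parents of some global top-$k$ set at level $j{+}1$ form a top-$k$ set at level $j$ and ``hence can be taken to be $\mathcal{B}^j(u)$.'' But $\mathcal{B}^j(u)$ is fixed by the hypothesis of the implication: it is an arbitrary member of $\operatorname{argTopk}_{i\in[N_j]}\mathbb{E}[\tilde{z}^j_i\,|\,u]$ produced by the beam search, and the lemma must hold for every such choice. You do not get to re-select it, and ``allowing a swap preserving the top-$k$ choice at level $j$'' is precisely the move that is unavailable. With ties this matters: a top-$k$ node at level $j{+}1$ can genuinely have its parent outside the given $\mathcal{B}^j(u)$ (the parent merely ties with some member of $\mathcal{B}^j(u)$), so no contradiction arises at level $j$ and your argument stalls. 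A secondary defect: if two top-$k$ nodes at level $j{+}1$ are siblings their parents coincide, so the parent collection has fewer than $k$ elements and cannot literally be ``a top-$k$ set at level $j$.''

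The fix --- and the content of the paper's proof that your sketch stops short of --- is to reason about the given $\mathcal{B}^j(u)=\{s_1,\dots,s_k\}$ directly. By the max-heap identity each $s_t$ has a child $c_t$ with $m(n^{j+1}_{c_t})=m(n^j_{s_t})$, and these $k$ children all lie in the candidate set $\widetilde{\mathcal{B}}^{j+1}(u)$. Meanwhile any level-$(j{+}1)$ node $n^{j+1}_{i_2}$ whose parent lies outside $\mathcal{B}^j(u)$ satisfies $m(n^{j+1}_{i_2})\le m(\rho^j(n^{j+1}_{i_2}))\le\min_t m(n^j_{s_t})=\min_t m(n^{j+1}_{c_t})$, where the middle inequality uses that $\mathcal{B}^j(u)$ is a top-$k$ set at level $j$. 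Hence the candidate set already contains $k$ values at least as large as that of any non-candidate; rank consistency at level $j{+}1$ transfers this ordering to the scores $o^{j+1}$, so whatever top-$k$ the beam search selects from the candidates dominates every unselected node, candidate or not, and is therefore a global $\operatorname{argTopk}$. Until the ``swap'' step is replaced by this comparison (with ties handled through the equality $m(n^{j+1}_{c_t})=m(n^j_{s_t})$ rather than through re-choosing $\mathcal{B}^j(u)$), the proof is not complete.
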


\begin{proof}[Proof of Lemma~\ref{lemma: recursive optimal beam search}]
    We complete the proof by contradiction. When $N_{j+1}\le k$, $\operatornamewithlimits{argTopk}\limits_{i\in [N_{j+1}]}\mathbb{E}[\tilde{{z}}^{j+1}_i|u]=\{[N_{j+1}]\} = \mathcal{B}^{j+1}(u)$, so we only need to consider the case where $N_{j+1}>k$. 
    
    Denote $\operatornamewithlimits{argTopk}\limits_{i\in [N_{j+1}]}\mathbb{E}[\tilde{{z}}^{j+1}_i|u]$ as $\{\mathcal{D}^{j+1}_1,\mathcal{D}^{j+1}_2,\dots,\mathcal{D}^{j+1}_r\}$ without loss of generality, where $r$ is a natural number and each $\mathcal{D}_i^{j+1}$ is a set of $k$ elements. Suppose $\mathcal{B}^{j+1}(u)\notin \operatornamewithlimits{argTopk}\limits_{i\in [N_{j+1}]} \mathbb{E}[\tilde{{z}}^{j+1}_i|u]$, we can make the following derivation:
    \begin{align*}
        \mathcal{B}&^{j+1}(u)\notin \operatornamewithlimits{argTopk}\limits_{i\in [N_{j+1}]} \mathbb{E}[\tilde{{z}}^{j+1}_i|u]\\
        &\Longrightarrow \mathcal{B}^{j+1}(u) \neq \mathcal{D}^{j+1}_i(1\le i \le r)\\
        &\Longrightarrow \exists i_1,i_2 \in \left[N_{j+1}\right]\ s.t.\ i_1 \in \mathcal{B}^{j+1}(u) \wedge i_1\notin \mathcal{D}^{j+1}_i \\
        &\quad \quad \quad \quad \quad \quad \quad \quad \wedge i_2 \notin \mathcal{B}^{j+1}(u) \wedge i_2 \in \mathcal{D}_{i_0}^{j+1}\left(i_0 \in [r]\right)\\
        &\Longrightarrow \mathbb{E}[\tilde{{z}}^{j+1}_{i_1}|u] < \mathbb{E}[\tilde{{z}}^{j+1}_{i_2}|u] \wedge o^{j+1}_{i_1} \ge o^{j+1}_{i_2}\\
        &\Longrightarrow i_1 \in \widetilde{\mathcal{B}}^{j+1}(u) \wedge i_2 \notin \widetilde{\mathcal{B}}^{j+1}(u)\\
        &\Longrightarrow \delta(\rho^j(n_{i_1}^{j+1})) \in \mathcal{B}^j(u) \wedge \delta(\rho^j(n_{i_2}^{j+1})) \notin \mathcal{B}^j(u).
    \end{align*}
    
    Let's denote $\mathcal{B}^j(u)$ as $\{{s_1}, {s_2}, \dots, {s_k}\}$, where the elements satisfy $o^j_{s_1} \ge o^j_{s_2} \ge \dots \ge o^j_{s_k}$. Notice that $R(\boldsymbol{o}^j, \mathbb{E}[\tilde{\boldsymbol{z}}^j|u])$ and $\mathbb{E}[\tilde{{z}}^j_{s_i}|u]=\max\limits_{n'\in \mathfrak{L}(n_{s_1}^j)}\eta_{\pi(n')}$, we have: 
    \begin{equation}
     \max\limits_{n'\in \mathfrak{L}(n_{s_1}^j)}\eta_{\pi(n')}\ge \max\limits_{n'\in \mathfrak{L}(n_{s_2}^j)}\eta_{\pi(n')} \ge \dots \ge\max\limits_{n'\in \mathfrak{L}(n_{s_k}^j)}\eta_{\pi(n')}.
     \label{eq: rank of max probability }
    \end{equation}

     For each node $n_{s_i}^j$, we then consider its child node  $n^{j+1}_{c_i}\in \mathcal{C}(n_{s_i}^j)$ where $c_i=\operatornamewithlimits{argmax}\limits_{i\in \{\delta(n)\mid n \in \mathcal{C}(n^j_{s_i})\}} \mathbb{E}[\tilde{{z}}^{j+1}_{i}|u]$, and the following equation  \begin{equation}
         \max\limits_{n'\in \mathfrak{L}(n^{j+1}_{c_i})} \eta_{\pi(n')}=\max\limits_{n'\in \mathfrak{L}(n^j_{s_i})} \eta_{\pi(n')}
     \label{eq: probability inheriting}
     \end{equation}
     holds for $1\le i\le k$. Combine the {Eq. (\ref{eq: rank of max probability })} and {Eq. (\ref{eq: probability inheriting})}, we can get:
     \begin{small}
         \begin{equation*}
         \max\limits_{n'\in \mathfrak{L}(n^{j+1}_{c_1})} \eta_{\pi(n')}\ge \max\limits_{n'\in \mathfrak{L}(n^{j+1}_{c_2})} \eta_{\pi(n')}\ge \cdots \ge \max\limits_{n'\in \mathfrak{L}(n^{j+1}_{c_k})} \eta_{\pi(n')}.
     \end{equation*}
     \end{small}
    Then, for the $(j\text{+}1)$-th layer, with $R(\boldsymbol{o}^{j+1}, \mathbb{E}[\tilde{\boldsymbol{z}}^{j+1}|u])$ and $\mathbb{E}[\tilde{{z}}^{j+1}_{c_i}|u]=\max\limits_{n'\in \mathfrak{L}(n_{c_i}^{j+1})}\eta_{\pi(n')}$, we have $o^{j+1}_{c_1}\ge o^{j+1}_{c_2}\ge \dots \ge o^{j+1}_{c_k}$, which means that the rank of $o^{j+1}_{c_k}$ in the components of $\boldsymbol{o}^{j+1}$ is not lower than $k$. And because $i_1 \in \mathcal{B}^{j+1}(u)$, we can obtain the following results:
    \begin{small}
    \begin{align}
         o^{j+1}_{i_1} \ge o^{j+1}_{c_k}&\Longrightarrow \mathbb{E}[\tilde{{z}}^{j+1}_{i_1}|u] \ge \mathbb{E}[\tilde{{z}}^{j+1}_{c_k}|u] \Rightarrow \mathbb{E}[\tilde{{z}}^{j+1}_{i_2}|u] >\mathbb{E}[\tilde{{z}}^{j+1}_{c_k}|u]\notag\\
         &\Longrightarrow \max\limits_{n'\in \mathfrak{L}(n_{i_2}^{j+1})} \eta_{\pi(n')} > \max\limits_{n' \in \mathfrak{L}(n_{c_k}^{j+1})} \eta_{\pi(n')}.  
    \label{eq: max probability  i_2 gt c_k}
    \end{align}
    \end{small}
    
    Note that $i_{\rho}\triangleq\delta(\rho^j(n_{i_2}^{j+1}))\notin \mathcal{B}^j(u)$, $s_k\in\mathcal{B}^j(u)$, and  $ \mathcal{B}^j(u)\in \operatornamewithlimits{argTopk}\limits_{i\in [N_j]}\mathbb{E}[\tilde{{z}}^{j}_i|u]$, we can have $\mathbb{E}[\tilde{{z}}^j_{s_k}|u]\ge \mathbb{E}[\tilde{{z}}^j_{i_{\rho}}|u]$. Therefore, we have
    \begin{small}
    \begin{equation}
    \begin{split}
        \max\limits_{n'\in \mathfrak{L}(n^j_{s_k})} \eta_{\pi(n')}   \ge\max\limits_{n'\in \mathfrak{L}(n^j_{i_{\rho}})} \eta_{\pi(n')}\ge \max\limits_{n'\in \mathfrak{L}(n^{j+1}_{i_2})} \eta_{\pi(n')},
    \end{split}
    \label{eq: probability s_k ge i_2}
    \end{equation} 
    \end{small}where the last inequality holds as the $\mathfrak{L}(n^{j+1}_{i_2})\subseteq \mathfrak{L}(n^j_{i_{\rho}})$.
    Combine the {Eq. (\ref{eq: probability inheriting})} and {Eq. (\ref{eq: probability s_k ge i_2})}, we can get:
    \begin{equation}
    \max\limits_{n'\in \mathfrak{L}(n^{j+1}_{c_k})} \eta_{\pi(n')}\ge \max\limits_{n'\in \mathfrak{L}(n^{j+1}_{i_2})} \eta_{\pi(n')}. 
    \label{eq:  probability c_k ge i_2}
    \end{equation}
    {Eq. (\ref{eq: max probability  i_2 gt c_k})} and {Eq. (\ref{eq: probability c_k ge i_2})} induces the contradiction, so the proposition holds.
\end{proof}

\begin{proof}
  Let's consider the retrieval process along the tree for a given user $u$ and a given beam size $k$.  

  Let $j^*=\operatornamewithlimits{argmin}\limits_{\{j:N_j>k\}} N_j$. When  $1 \le j \le j^*-1$, $\mathcal{B}^j(u)=\{[N_j]\} \in \operatornamewithlimits{argTopk}\limits_{i \in [N_j]} \mathbb{E}[\tilde{{z}}^j_i|u]$. When $j\ge j^*$, by recursively applying {Lemma \ref{lemma: recursive optimal beam search}} from $j^*$ to $H-1$, we ultimately obtain $\mathcal{B}^H(u)\in \operatornamewithlimits{argTopk}\limits_{i\in [c(H)]}\mathbb{E}[\tilde{{z}}^H_i|u]$.
  Notice that $\mathbb{E}[\tilde{{z}}_i^H|u]=\eta_{\pi(n_i^H)}$, that is to say $\mathcal{B}^H(u) \in \operatornamewithlimits{argTopk}\limits_{i\in [N_H]}\eta_{\pi(n_i^H)}$,  meaning the final retrieved item set $\widehat{\mathcal{Y}}=\{\pi(n_i^H)|i\in \mathcal{B}^H(u)\} \in \operatornamewithlimits{argTopk}\limits_{y\in \mathcal{Y}} \eta_{y}(u)$.  The above analysis can be performed on any $u$ and $k$, so the tree model is top-$k$ retrieval Bayes optimal.
\end{proof}

\subsection{Proof of Proposition \ref{prop: the correctness of sampling probability}}
\label{appendix: proof of correctness of sampling probability}
\begin{proof}
    We use induction to prove this proposition. Assume $\sum_{i=1}^{N_j} q_i^j=1$, where $q_i^j$ is the sampling probability of node $n_i^j$ in the tree-based sampling, as presented in Eq. \eqref{eq: sampling probability in tree-based sampling}.

    For the base case, when $j=0$, the 0th layer contains only the root node $n_1^0$, which is sampled with probability $1$, and accordingly, the sum of sampling probabilities is exactly $1$. Suppose the assumption holds for $j$, we now need to prove that it also holds for $j+1$. For the layer $j+1$, the sum of sampling probabilities for nodes within this layer is calculated as follows:
    \begin{equation}
    \begin{split}
       \sum_{i=1}^{N_{j+1}}q_i^{j+1}&=\sum_{i^\prime=1}^{N_j} \sum_{i\in \{\delta(n) | n\in \mathcal{C}(n_{i^\prime}^j)\}}q_i^{j+1}\\
        &= \sum_{i^\prime=1}^{N_j} \sum_{i\in \{\delta(n) | n\in \mathcal{C}(n_{i^\prime}^j)\}}q_{i^{\prime}}^{j}\tilde{q}_i^{j+1}\\
        &=\sum_{i^\prime=1}^{N_j}q_{i^{\prime}}^{j} \sum_{i\in \{\delta(n) | n\in \mathcal{C}(n_{i^\prime}^j)\}}\tilde{q}_i^{j+1}\\
        &\underset{(a)}{=}\sum_{i^\prime=1}^{N_j}q_{i^{\prime}}^{j} = 1,
    \end{split}
    \end{equation}
    where (a) holds as the sum of local softmax probabilities of all child nodes of one node is $1$.
\end{proof}

\subsection{Proof of Theorem \ref{theo: tree-based sampling eq to softmax distribution}}
\label{appendix: proof of theory tree-based sampling eq to softmax distribution}
\begin{proof}
We use induction to prove this theorem. Assume $q_i^j= \frac{\exp{o_i^j}}{\sum_{k=1}^{N_j} \exp{o_{k}^{j}}}$, where $q_i^j$ denotes the sampling probability of node $n_i^j$ in the tree-based sampling.

For the base case, when $j = 0$, the 0th layer contains only the root node $n_1^0$, which is sampled with probability $1$, equaling its softmax probability. Suppose the assumption holds for all nodes in layer $j$, we now need to prove that it also holds for all nodes in layer $j+1$. For any $n_i^j$'s child node $n_c^{j+1}$ ($1\le c \le N_{j+1}$) in layer $j+1$, its sampling probability is calculated as follows:
\begin{align}
\label{eq:sampling_probability_of_child}
q_c^{j+1} &= q_i^{j} \cdot \frac{\exp{o_c^{j+1}}}{\sum\limits_{k\in \left\{\delta(n)\mid n\in \mathcal{C}(n_i^j)\right\}} \exp{o_k^{j+1}} } \\ 
&= \frac{\exp{o_i^j}}{\sum_{k=1}^{N_j} \exp{o_k^j}} \cdot \frac{\exp{o_c^{j+1}}}{ \sum\limits_{k\in \left\{\delta(n)\mid n\in \mathcal{C}(n_i^j)\right\}} \exp{o_k^{j+1}} } \notag
\end{align}
According to {Eq. (\ref{eq:score_proportional_to_children})}, and with the condition that the proportionality coefficients within the same layer are equal, we have
\begin{align}
\label{eq:nodes_score_propto_relation}
\sum_{k=1}^{N_j}  \exp{o_k^j} &\propto
\sum_{k=1}^{N_j} \sum\limits_{k'\in \left\{\delta(n)\mid n\in \mathcal{C}(n_k^j)\right\}} \exp{o_{k'}^{j+1}} \\
&= \sum_{k=1}^{N_{j+1}} \exp{o_{k}^{j+1}}. \notag
\end{align}
Therefore, by substituting equations {Eq. (\ref{eq:score_proportional_to_children})} and {Eq. (\ref{eq:nodes_score_propto_relation})} into {Eq. (\ref{eq:sampling_probability_of_child})}, we can obtain:
\begin{equation*}
q_c^{j+1} = \frac{\exp{o_{c}^{j+1}}}{\sum_{k=1} ^{N_{j+1}} \exp{o_{k}^{j+1}}}.
\end{equation*}
That is, when sampling to the \((j \text{+} 1)\)-th layer in the tree-based sampling, the probability of sampling the corresponding node is equivalent to its softmax probability in layer \(j \text{+} 1\). As \(j\) increases, when reaching the last layer where the leaf nodes are located, this conclusion also holds. 
\end{proof}

\subsection{Proof of Proposition \ref{prop: rectification method bounds probability bias}}
\label{appendix: proof of prop bounded probability bias}
\begin{proof}[Proof]
    For node $n_i^j$, let the indices of its ancestors from the root to itself be denoted as $\left(\rho^0(i), \rho^1(i),\dots, \rho^{j-1}(i), \rho^j(i) \right)$, where $\rho^l(i)=\delta(\rho^l(n_i^j))$ and $\rho^j(i)$ is exactly $i$. Denote the softmax probability $ \frac{\exp{o_i^j}}{\sum_{k=1}^{N_j} \exp{o_{k}^{j}}}$ as $p_i^j$, then we have  $\operatorname{bias}_i^j=q_i^j-p_i^j$.  Since the sampling probability is calculated as Eq. \eqref{eq: sampling probability in tree-based sampling}, we derive the following recurrence relation: 
    $$\begin{aligned}
    \operatorname{bias}^{j}_i &=q^{j-1}_{\rho^{j-1}(i)}\tilde{q}_{i}^{j}-p_{i}^{j} \\
    &=(q^{j-1}_{\rho^{j-1}(i)}-p_{\rho^{j-1}(i)}^{j-1})\tilde{q}_{i}^{j}+p_{\rho^{j-1}(i)}^{j-1}\tilde{q}^{j}_i-p_{i}^{j} \\
    &\underset{(a)}{=}(q_{\rho^{j-1}(i)}^{j-1}-p_{\rho^{j-1}(i)}^{j-1})\tilde{q}_{i}^{j}\\
    &\quad\quad+\frac{\lambda_{\rho^{j-1}(i)}^{j-1}\exp o_{i}^{j}}{\sum_{k=1}^{N_{j}}\lambda_{\rho^{j-1}(k)}^{j-1}\exp o_{k}^{j}}-\frac{\exp o_{i}^{j}}{\sum_{k=1}^{N_{j}}\exp o_{k}^{j}} \\
    &=\tilde{q}_{i}^{j}\operatorname{bias}_{\rho^{j-1}(i)}^{j-1}+I(\rho^{j}(i),j), \\
    \end{aligned}
    $$where (a) results from combining Eq. \eqref{eq: definition of lambda}, Eq. \eqref{eq:score_proportional_to_max_children}, and analogous techniques used in Eq. \eqref{eq:nodes_score_propto_relation}, and $I(\rho^j(i),j)\triangleq\frac{\lambda_{\rho^{j-1}(i)}^{j-1}\exp o_{\rho^j(i)}^{j}}{\sum_{k=1}^{N_{j}}\lambda_{\rho^{j-1}(k)}^{j-1}\exp o_{k}^{j}}-\frac{\exp o_{\rho^j(i)}^{j}}{\sum_{k=1}^{N_{j}}\exp o_{k}^{j}}$.
    
    Notably, $ \operatorname{bias}^0_1 = 0 $ and $\operatorname{bias}^1_i = 0$ for \( i \in [N_1] \). For \( j \ge 2 \), we obtain:
    $$\operatorname{bias}_{i}^{j}=\sum_{k=2}^{j-1}I(\rho^{k}(i),k)\cdot\prod_{t=k+1}^{j}\tilde{q}_{\rho^t(c)}^{ t}+I(\rho^{j}(i),j).$$
    Given that $\lambda_{i}^{j}\in(1-\varepsilon,1)$ holds for any $i\in N_j$ and $j\in [H]$,  it follows that $$|I(\rho^{k}(i),k)|\leq\frac{\varepsilon}{1-\varepsilon}p_{\rho^k (i)}^{k}\leq\frac{\varepsilon}{1-\varepsilon}.$$
    Thus, we conclude 
    \begin{equation*}
        \vert\operatorname{bias}_{i}^{j}\vert\leq\frac{(j-1)\varepsilon}{1-\varepsilon}.
    \end{equation*}
\end{proof}

\subsection{Proof of Corollary \ref{cor: bound of KL divergence}}
\label{appendix: proof of corollary bound}
\begin{proof}[Proof]
    For $\forall i \in [N_j]$, $|o_i^j|\le B_o$, then we can obtain:
    $$
        p_i^j = \frac{\exp o_i^j}{\sum_{k=1}^{N_j} \exp o_k^j}\ge \frac{\exp (-B_o)}{N_j\exp B_o}= \frac{1}{N_j\exp 2B_o}.
    $$
    Then, we have 
    \begin{equation}
    \begin{split}
        D_{KL}(Q^j||P^j)&=\sum_{i=1}^{N_j} q_i^j\log \frac{q_i^j}{p_i^j}\\ 
        &= \sum_{i=1}^{N_j} q_i^j \log \left( 1+\frac{\operatorname{bias}_i^j}{p_i^j}\right)\\
        &\le \sum_{i=1}^{N_j} q_i^j \log \left( 1+{\operatorname{bias}_i^j}N_j e^{2B_o}\right)\\
        &\le \sum_{i=1}^{N_j} q_i^j \log \left( 1+ B_\varepsilon^j N_j e^{2B_o}\right)\\
        &=\log \left( 1+ B_\varepsilon^j N_j e^{2B_o}\right),
    \end{split}
    \end{equation}
    where $B_\varepsilon^j \triangleq \frac{(j-1)\varepsilon}{1-\varepsilon}$ and $\operatorname{bias}_i^j \le B_\varepsilon^j$ as presented in Proposition \ref{prop: rectification method bounds probability bias}.
\end{proof}

\subsection{Proof of Lemma \ref{lemm: Rademacher complexity of DIN}}
\label{appendix: proof of lemm Rademacher complexity of DIN}
\begin{proof}[Proof]
For any vectors $\boldsymbol{v},\boldsymbol{u}_i\in\mathbb{R}^d,\|\boldsymbol{v}\|_1 \leq B_v$, notice the following inequality:  
    \begin{equation}
    \label{eq:inequlity_of_vector}
    \sup_{\boldsymbol{v}}\sum_i\boldsymbol{v}^{\mathsf{T}}\boldsymbol{u}_i\leq B_v\max_{j\in[d]}\left|\sum_i\boldsymbol{e}_j\boldsymbol{u}_i\right|\leq\sum_iB_v\max_{\substack{j\in[d]\\ s\in\{-1,1\}}}s\boldsymbol{e}_j\boldsymbol{u}_i.
    \end{equation}
By applying the {Eq. (\ref{eq:inequlity_of_vector})}, we can get:
\begin{align}
\label{eq: main_rademacher_DIN}
  \widehat{\mathcal{R}}_m&(\mathcal{M}) = \mathbb{E}_{\sigma}\left[\sup\limits_{f\in \mathcal{M}}\frac{1}{m}\sum_{i=1}^{m} \sigma_i f(u_i,n_i)\right] \notag\\
  &=\mathbb{E}_{\sigma} \Bigg[\sup\limits_{\boldsymbol{z}, \boldsymbol{w}_n, \{\boldsymbol{W}_k\}^L_{k=1}}\frac{1}{m}\sum_{i=1}^{m}\sigma_i\Big<\boldsymbol{W}_L, \phi_{L-1}\circ \notag\\
  &\quad \quad \phi_{L-2}\circ\dots\circ\phi_{1}\left(\boldsymbol{z}^{(u_i)}_1;\boldsymbol{z}^{(u_i)}_2;\dots;\boldsymbol{z}^{(u_i)}_{K'};\boldsymbol{w}_{n_i}\right)\Big>\Bigg]\notag\\
  &\underset{(a)}{\le} \|\boldsymbol{W}_L\|_1 \mathbb{E}_{\sigma} \Bigg[\sup\limits_{\substack{s\in\{-1,1\},j\in[d_{L-1}]\\ \boldsymbol{z}, \boldsymbol{w}_n,\{\boldsymbol{W}_k\}_{k=1}^{L-1}}}\frac{1}{m}\sum_{i=1}^{m}s\sigma_i\Big<\boldsymbol{e}_j, \phi_{L-1}\circ \notag\\
  \displaybreak[1]
  &\quad \quad \phi_{L-2}\circ\dots\circ \phi_{1}\left(\boldsymbol{z}^{(u_i)}_1;\boldsymbol{z}^{(u_i)}_2;\dots;\boldsymbol{z}^{(u_i)}_{K'};\boldsymbol{w}_{n_i}\right)\Big>\Bigg] \notag\\
  &\underset{(b)}{\le} c_{\phi}\|\boldsymbol{W}_L\|_1 \mathbb{E}_{\sigma} \Bigg[\sup\limits_{\substack{s\in \{-1,1\},j\in[d_{L-1}]\notag\\ 
  \boldsymbol{z}, \boldsymbol{w}_n,\{\boldsymbol{W}_k\}_{k=1}^{L-1}}}\frac{1}{m}\sum_{i=1}^{m}s\sigma_i\Big<\boldsymbol{e}_j, \boldsymbol{W}_{L-1}\cdot \notag\\
  &\quad \quad \phi_{L-2}\circ\dots\circ \phi_{1}\left(\boldsymbol{z}^{(u_i)}_1;\boldsymbol{z}^{(u_i)}_2;\dots;\boldsymbol{z}^{(u_i)}_{K'};\boldsymbol{w}_{n_i}\right)\Big>\Bigg] \notag\\
  &\underset{(c)}{\le} c_{\phi} \|\boldsymbol{W}_L\|_1 \|\boldsymbol{W}_{L-2}\|_1 \mathbb{E}_{\sigma} \Bigg[\sup\limits_{\substack{s\in \{-1,1\},j\in[d_{L-2}] \\
  \boldsymbol{z}, \boldsymbol{w}_n,\{\boldsymbol{W}_k\}_{k=1}^{L-1}}}\frac{1}{m}\sum_{i=1}^{m}s\sigma_i \notag\\
  \displaybreak[1]
   &\quad \quad \Big<\boldsymbol{e}_j,\phi_{L-2}\circ\dots\circ \phi_{1}\left(\boldsymbol{z}^{(u_i)}_1;\boldsymbol{z}^{(u_i)}_2;\dots;\boldsymbol{z}^{(u_i)}_{K'};\boldsymbol{w}_{n_i}\right)\Big>\Bigg] \notag\\
   &\underset{(d)}{\le} 2c_{\phi}^{L-1}\prod_{k=1}^{L}\|\boldsymbol{W}_k\|_1 \mathbb{E}_{\sigma}\Bigg[\sup\limits_{\substack{\boldsymbol{z}, \boldsymbol{w}_n, \notag\\ j\in[(K'+1)d]}}\frac{1}{m}\sum_{i=1}^m \sigma_i\Big<\boldsymbol{e}_j,  \notag\\
   \displaybreak[1]
   &\quad \quad \left(\boldsymbol{z}^{(u_i)}_1;\boldsymbol{z}^{(u_i)}_2;\dots;\boldsymbol{z}^{(u_i)}_{K'};\boldsymbol{w}_{n_i}\right)\Big>\Bigg] \notag\\
   &\le 2c_{\phi}^{L-1} \prod_{k=1}^{L}\|\boldsymbol{W}_k\|_1 \Bigg(\underbrace{\mathbb{E}_{\sigma}\Big[\sup_{j\in[d],\boldsymbol{w}_{n}}\frac{1}{m}\sum_{i=1}^m \sigma_{i}\left<\boldsymbol{e}_{j},\boldsymbol{w}_{n_i}\right>\Big]}_{I_1}  \notag\\  
   &\quad + \sum_{t=1}^{K^{\prime}}\ \underbrace{\mathbb{E}_{\sigma}\Big[\sup_{j\in [d],\boldsymbol{z}_{t}}\frac{1}{m}\sum_{i=1}^{m}\sigma_{i}\left<\boldsymbol{e}_{j},\boldsymbol{z}_{t}^{(u_i)}\right>\Big]}_{I_2} \Bigg)
\end{align}
where (a) use {Eq. (\ref{eq:inequlity_of_vector})} as $\boldsymbol{W}_L\in \mathbb{R}^{1\times d_{L-1}}$ is actually a vector; (b) holds since $\phi$ is applied element-wise, we can bring $\boldsymbol{e}^T_j$ inside the function and the use of contraction inequality \cite{hahn1994probability}; (c) use {Eq. (\ref{eq:inequlity_of_vector})} again as $\boldsymbol{e}_{j}^\mathsf{T} \boldsymbol{W}_{L-1}$ is still a vector; (d) holds by applying {Eq. (\ref{eq:inequlity_of_vector})} recursively and utilizing the fact that $\widehat{\mathcal{R}}_m(\mathcal{F}\cup -\mathcal{F})\le 2 \widehat{\mathcal{R}}_m(\mathcal{F})$.

As the term $I_1$, using Cauchy-Schwarz inequality and Jensen inequality, we have:
    \begin{align}
    \label{eq: rademacher of I1}
    I_1& \le \frac{1}{m}\mathbb{E}_{\sigma}\left[ \left\|\sum_{i=1}^m\sigma_i\boldsymbol{w}_{n_i}\right\|_2\right] \le \frac{1}{m}\left(\mathbb{E}_{\sigma}\left[\left\|\sum_{i=1}^m\sigma_i\boldsymbol{w}_{n_i}\right\|_2^2\right]\right)^{1/2} \notag\\
    & =\frac{1}{m}\left(\sum_{i=1}^{m}\left\|\boldsymbol{w}_{n_i}\right\|^2\right)^{1/2} \le \frac{B_0}{\sqrt{m}}.
    \end{align}
As the term $I_2$, we have:
\begin{align}
\label{eq: rademacher of I2}
I_2&=\frac{1}{m}\mathbb{E}_{\sigma}\left[\sup_{j\in[d],\boldsymbol{w}}\sum_{i=1}^{m}\sigma_i\left\langle\boldsymbol{e}_{j},\sum_{k\in T_t}w_k^{(u_i)}\boldsymbol{a}_{k}^{(u_i)}\right\rangle\right]  \notag\\
&\leq\frac{1}{m}\sum_{k\in T_t}\mathbb{E}_{\sigma}\left[\sup_{j\in[d],\boldsymbol{w}}\sum_{i=1}^m\sigma_i\left\langle\boldsymbol{e}_{j},w_k^{(u_i)}\boldsymbol{a}_{k}^{(u_{i})}\right\rangle\right] \notag\\
&=\frac{1}{m}\sum_{k\in T_t}\mathbb{E}_{\sigma}\Bigg[\sup_{j\in[d],\boldsymbol{w}}\sum_{i=1}^m\sigma_i\phi\Big(\boldsymbol{W}_w^{(2)}\phi\Big(\boldsymbol{W}_w^{(1)} \notag\\
&\quad \quad \quad  \left[\boldsymbol{a}_k^{(u_i)};\boldsymbol{a}_k^{(u_i)}\odot\boldsymbol{w}_{n_i};\boldsymbol{w}_{n_i}\right]\Big)\Big)\left\langle\boldsymbol{e}_j,\boldsymbol{a}_k^{(u_i)}\right\rangle\Bigg] \notag\\
&\leq \frac{1}{m}c_\phi^2\left\|\boldsymbol{W}_w^{(1)}\right\|_1\left\|\boldsymbol{W}_w^{(2)}\right\|_1\sum_{k\in T_t}\mathbb{E}_{\sigma}\Bigg[\sup_{\substack{\boldsymbol{w},j\in[d]\\j^{\prime}\in[3d]}}\sum_{i=1}^{m}\sigma_i\notag\\
&\quad \quad \left\langle\boldsymbol{e}_{j^{\prime}},\left[\boldsymbol{a}_k^{(u_i)};\boldsymbol{a}_k^{(u_i)}\odot\boldsymbol{w}_{n_i};\boldsymbol{w}_{n_i}\right]\right\rangle\left\langle\boldsymbol{e}_j,\boldsymbol{a}_{k}^{(u_i)}\right\rangle\Bigg].
\displaybreak[1]
\end{align}    
Furthermore, we can get
\begin{small}
\begin{equation}
\begin{aligned}
\label{eq: rademacher of I3_I4_I5}
&\mathbb{E}_{\sigma}\Bigg[\sup_{\substack{\boldsymbol{w},j\in[d]\\j^{\prime}\in[3d]}}\sum_{i=1}^m\sigma_i \left\langle \boldsymbol{e}_{j^{\prime}},\left[\boldsymbol{a}_{k}^{(u_i)};\boldsymbol{a}_{k}^{(u_i)}\odot\boldsymbol{w}_{n_i};\boldsymbol{w}_{n_i}\right]\right\rangle\left\langle\boldsymbol{e}_{j},\boldsymbol{a}_{k}^{(u_i)}\right\rangle\Bigg] \\
&\quad\leq\mathbb{E}_{\sigma}\left[\sup_{j\in[d],j'\in[d]}\sum_{i=1}^{m}\sigma_i\left\langle\boldsymbol{e}_{j^{\prime}},\boldsymbol{a}_k^{(u_i)}\right\rangle\left\langle\boldsymbol{e}_j,\boldsymbol{a}_k^{(u_i)}\right\rangle\right] \\
&\quad\quad +\mathbb{E}_{\sigma}\left[\sup_{j\in[d],j^{\prime}\in[d],\boldsymbol{w}}\sum_{i=1}^{m}\sigma_i\left\langle\boldsymbol{e}_{j^{\prime}},\boldsymbol{a}_k^{(u_i)}\odot\boldsymbol{w}_{n_i}\right\rangle\left\langle\boldsymbol{e}_j,\boldsymbol{a}_k^{(u_i)}\right\rangle\right] \\
&\quad\quad +\mathbb{E}_{\sigma}\left[\sup_{j\in[d],j^{\prime}\in[d],\boldsymbol{w}}\sum_{i=1}^m\sigma_i\left<\boldsymbol{e}_{j^{\prime}},\boldsymbol{w}_{n_i}\right>\left<\boldsymbol{e}_j,\boldsymbol{a}_{k}^{(u_i)}\right>\right] \\
&=I_{3}+I_{4}+I_{5}.
\end{aligned}
\end{equation}
\end{small}
As the term $I_3$, notice that
\begin{small}
\begin{equation*}
\begin{aligned}
&\sum_{i=1}^m\sigma_{i}\left<\boldsymbol{e}_{j^{\prime}},\boldsymbol{a}_{k}^{(u_i)}\right>\left<\boldsymbol{e}_{j},\boldsymbol{a}_{k}^{(u_i)}\right> =\sum_{i=1}^m\sigma_{i}\boldsymbol{e}_{j^{\prime}}^{\mathsf{T}}\boldsymbol{P}_{a}^{(u_i)}\boldsymbol{e}_{j}\\
&=\sum_{i=1}^m \sigma_{i}\mathrm{Tr}\left(\boldsymbol{e}_{j}\boldsymbol{e}_{j^{\prime}}^{\mathsf{T}}\boldsymbol{P}_{a}^{(u_i)}\right) =\mathrm{Tr}\left(\boldsymbol{e}_{j}\boldsymbol{e}_{j^{\prime}}^{\mathsf{T}}\left(\sum_{i=1}^m\sigma_{i}\boldsymbol{P}_{a}^{(u_i)}\right)\right)\\
&=\left\langle \boldsymbol{e}_{j}\boldsymbol{e}_{j^{\prime}}^{\mathsf{T}},\sum_{i=1}^m\sigma_{i}\boldsymbol{P}_{a}^{(u_i)}\right\rangle_{F},
\end{aligned}
\end{equation*}
\end{small}
where $\boldsymbol{P}_{a}^{(u_i)}=\boldsymbol{a}_{k}^{(u_i)}{\boldsymbol{a}_{k}^{(u_i)}}^{\mathsf{T}}$. Then, we can get
\begin{equation}
    \begin{aligned}
    \label{eq: rademacher of I3}
I_3& =\mathbb{E}_\sigma\left[\sup_{j\in[d],j^{\prime}\in[d]}\sum_{i=1}^m\sigma_i\left\langle\boldsymbol{e}_{j^{\prime}},\boldsymbol{a}_k^{(u_i)}\right\rangle\left\langle\boldsymbol{e}_j,\boldsymbol{a}_k^{(i)}\right\rangle\right]  \\
&=\mathbb{E}_{\sigma}\left[\sup_{j\in[d],j^{\prime}\in[d]}\left\langle \boldsymbol{e}_{j}\boldsymbol{e}_{j^{\prime}}^{\mathsf{T}},\sum_{i=1}^m\sigma_{i}\boldsymbol{P}_{a}^{(u_i)}\right\rangle_{F}\right] \\
&\leq\mathbb{E}_{\sigma}\left[\left\|\sum_{i=1}^m\sigma_i\boldsymbol{P}_a^{(u_i)}\right\|_F\right]=\sqrt{\sum_{i=1}^m\left\|\boldsymbol{P}_{a}^{(u_i)}\right\|_{F}^{2}}\leq\sqrt{m B_{a}^{4}}.
\end{aligned}
\end{equation}

As the term $I_4$, use the same analysis technique as for $I_3$, we can get

\begin{align}
&I_{4}=\mathbb{E}_{\sigma}\left[\sup_{j\in[d],j^{\prime}\in[d],\boldsymbol{w}}\sum_{i=1}^m\sigma_{i}\left\langle\boldsymbol{e}_{j^{\prime}},\boldsymbol{a}_{k}^{(u_i)}\odot\boldsymbol{w}_{n_i}\right\rangle\left\langle\boldsymbol{e}_{j},\boldsymbol{a}_{k}^{(u_i)}\right\rangle\right] \notag \\
&=\mathbb{E}_{\sigma}\left[\sup_{j\in[d],j^{\prime}\in[d],\boldsymbol{w}}\sum_{i=1}^m\sigma_i\left\langle\boldsymbol{e}_{j^{\prime}}\odot\boldsymbol{w}_{n_i},\boldsymbol{a}_k^{(u_i)}\right\rangle\left\langle\boldsymbol{e}_{j},\boldsymbol{a}_k^{(u_i)}\right\rangle\right] \notag \\
&=\mathbb{E}_{\sigma}\left[\sup_{j\in[d],j^{\prime}\in[d],\boldsymbol{w}}\sum_{i=1}^m\sigma_{i}\left\langle \boldsymbol{e}_{j}\boldsymbol{e}_{j^{\prime}}^{\mathsf{T}}\odot\boldsymbol{w}_{n_i}^{\mathsf{T}},\boldsymbol{P}_{a}^{(u_i)}\right\rangle_{F}\right] \notag \\
&=\mathbb{E}_{\sigma}\left[\sup_{j\in[d],j^{\prime}\in[d],\boldsymbol{w}}\sum_{n\in \mathcal{N}}\sum_{i:n_i=n}\sigma_{i}\left\langle \boldsymbol{e}_{j}\boldsymbol{e}_{j^{\prime}}^{\mathsf{T}}\odot\boldsymbol{w}_{n}^{\mathsf{T}},\boldsymbol{P}_{a}^{(u_i)}\right\rangle_{F}\right] \notag \\
\displaybreak[1]
&\le \sum_{n\in\mathcal{N}}\mathbb{E}_{\sigma}\left[\sup_{j\in[d],j^{\prime}\in[d]}\left\langle \boldsymbol{e}_{j}\boldsymbol{e}_{j^{\prime}}^{\mathsf{T}}\odot\boldsymbol{w}_{n}^{\mathsf{T}},\sum_{i:n_i=n}\sigma_{i}\boldsymbol{P}_{a}^{(u_i)}\right\rangle_{F}\right] \notag \\
&\leq B_0 \sum_{n\in \mathcal{N}} \mathbb{E}_{\sigma}\left[\left\|\sum_{i:n_i=n}\sigma_i \boldsymbol{P}_{a}^{(u_i)}\right\|_{F}\right] \notag \\
&\leq B_0 B_a^2\sum_{n\in \mathcal{N}}\sqrt{\left|\{i:n_i=n\}\right|}\le B_0B_a^2\sqrt{\frac{2B-1}{B-1}|\mathcal{Y}|m},
\label{eq: rademacher of I4}
\end{align}
where the last inequality holds as $\sum_{n\in \mathcal{N}}\left|\{i:n_i=n\}\right|=m$, $|\mathcal{N}|=\sum_{j=0}^{\lceil\log_B |\mathcal{Y}|\rceil-1}B^j+|\mathcal{Y}|\le \frac{2B-1}{B-1}|\mathcal{Y}|$ for the $B$-ary tree, and use the Cauchy-Schwarz inequality.

As the term $I_5$, use the same technique as for $I_4$, we have
\begin{small}
\begin{align}
\label{eq: rademacher of I5}
I_{5}& =\mathbb{E}_{\sigma}\left[\sup_{j\in[d],j^{\prime}\in[d],\boldsymbol{w}}\sum_{i=1}^m \sigma_i\left<\boldsymbol{e}_{j^{\prime}},\boldsymbol{w}_{n_i}\right>\left<\boldsymbol{e}_{j},\boldsymbol{a}_{k}^{(u_i)}\right>\right]  \notag\\
&=\mathbb{E}_{\sigma}\left[\sup_{j\in[d],j^{\prime}\in[d],\boldsymbol{w}}\sum_{n\in \mathcal{N}}\sum_{i:n_i=n}\sigma_i\left<\boldsymbol{e}_{j^{\prime}},\boldsymbol{w}_{n_i}\right>\left<\boldsymbol{e}_{j},\boldsymbol{a}_{k}^{(u_i)}\right>\right]  \notag\\
\displaybreak[1]
& \leq \sum_{n\in\mathcal{N}}\mathbb{E}_{\sigma}\left[\sup_{j\in[d],j^{\prime}\in[d]}\left<\boldsymbol{e}_{j^{\prime}},\boldsymbol{w}_{n}\right>\left<\boldsymbol{e}_{j},\sum_{i:n_i=n} \sigma_i\boldsymbol{a}_{k}^{(u_i)}\right>\right]\notag\\
&\leq B_{0}\sum_{n\in\mathcal{N}}\mathbb{E}_{\sigma}\left[\left\|\sum_{i:n_i=n}\sigma_{i}\boldsymbol{a}_{k}^{(u_i)}\right\|_{2}\right]\leq B_{0}B_{a}\sqrt{\frac{2B-1}{B-1}|\mathcal{Y}|m}.
\end{align}
\end{small}

Combine equations {Eq. (\ref{eq: main_rademacher_DIN})}$\sim${Eq. (\ref{eq: rademacher of I5})}, and notice the fact that $\sum_{t=1}^{K'}T_t=K$, we have
\begin{equation*}
\begin{split}
&\widehat{\mathcal{R}}_m(\mathcal{M})=\mathbb{E}_{\sigma}\left[\sup_{f\in\mathcal{M}}\frac{1}{m}\sum_{i=1}^m\sigma_{i}f\left(u_{i},n_i\right)\right] \\
&\leq 2c_{\phi}^{L-1}\prod_{k=1}^{L}\|\boldsymbol{W}_{k}\|_{1}\Bigg[\frac{B_0}{\sqrt{m}}+c_{\phi}^{2}\left\|\boldsymbol{W}_{w}^{(1)}\right\|_{1}\left\|\boldsymbol{W}_{w}^{(2)}\right\|_{1}K\cdot\\
&\quad \left(\frac{B_a^2}{\sqrt{m}}+\frac{B_0B^2_a}{\sqrt{m}}\sqrt{\frac{2B-1}{B-1}|\mathcal{Y}|}+\frac{B_0B_a}{\sqrt{m}}\sqrt{\frac{2B-1}{B-1}|\mathcal{Y}|}\right)\Bigg]\\
&\le \frac{2c_{\phi}^{L-1}B_1^L(B_0+KB_{w_1}+KB_{w_2}\mathcal{\tau})}{\sqrt{m}},
\end{split}
\end{equation*}
where $B_{w_1}=c_{\phi}^2B_2^2B^2_a, B_{w_2}=c_{\phi}^2B_0B_2^2\left(B_a^2+B_a\right),\tau=\sqrt{(2B-1)|\mathcal{Y}|}/\sqrt{B-1}$.
\end{proof}

\subsection{Proof of Lemma ~\ref{Lemma: generalization bound of single layer}}
\label{appendix: proof of theory generalization bound of single layer}
\begin{proof}
Given the sample set  $S=\{u_i,y_i\}_{i=1}^{m}$,  we consider the loss function space  $\mathcal{F}_{\ell}^{j}=\{(u,y)\rightarrow f_{\ell}^j(u,y)\}$, where function $f_{\ell}^j(u,y)=\widetilde{\mathcal{L}}_j(u,y)=-\bar{z}^j_{\delta^j(y)}\log \frac{\exp o^j_{\delta^j(y)}(u)}{\sum_{k=1}^{N_j}\exp o^j_{k}(u)}$ and $o_k^j(u)=f_{\text{DIN}}(u,n_k^j)$. Due to $\left| f_{\text{DIN}}\right|\le B_{\mathcal{M}}$, we can get $\left| f_{\ell}^j\right|\le 2B_{\mathcal{M}}+\log N_j$. Then, by the {Lemma \ref{lemma: Rademacher complexity lemma}}, we have:
\begin{small}
\begin{equation}
\label{eq: generalization bound of j-th layer}
    \begin{split}
       \mathbb{E}_{(u,y)\sim \mathbb{P}}\left[\widetilde{\mathcal{L}}_j(u,y)\right] &\le \frac{1}{m} \sum_{i=1}^{m}\widetilde{\mathcal{L}}_j(u_i,y_i) + 2\widehat{\mathcal{R}}_m(\mathcal{F}_\ell^j, S) \\
       &+\Big(4\log N_j+8B_{\mathcal{M}}\Big)\sqrt{\frac{2\log\left(4/\delta\right)}{m}}.
    \end{split}
\end{equation} 
\end{small}For the empirical Rachemader complexity of $\mathcal{F}_{\ell}^j$, we have
\begin{small}
\begin{align}
&\widehat{\mathcal{R}}_m(\mathcal{F}_{\ell}^j,S)=\mathbb{E}_{\sigma}\left[\sup_{f_{\ell}^j\in \mathcal{F}_{\ell}^j}\frac{1}{m}\sum_{i=1}^{m}\sigma_i f_{\ell}^j(u_i,y_i)\right] \notag\\
&=\mathbb{E}_{\sigma}\left[\sup_{\substack{f_{\text{DIN}}\in \mathcal{M}\\\bar{z}\in \{0,1\}}}\frac{1}{m}\sum_{i=1}^m \sigma_i\bar{z}^j_{\delta^j(y_i)}\log \frac{\sum_{k=1}^{N_j}\exp o^j_k(u_i)}{\exp o_{\delta^j(y_i)}^j(u_i)}\right] \notag \\
\displaybreak[1]
&=\mathbb{E}_{\sigma}\left[\sup_{\substack{f_{\text{DIN}}\in\mathcal{M}\\z^{\prime}\triangleq2\bar{z}-1\in \{-1,1\}}}\frac{1}{m}\sum_{i=1}^m \sigma_i\frac{z^{\prime}+1}{2}\log \frac{\sum_{k=1}^{N_j}\exp o^j_k(u_i)}{\exp o_{\delta^j(y_i)}^j(u_i)}\right]\notag\\
&\le\mathbb{E}_{\sigma}\left[\sup_{\substack{f_{\text{DIN}}\in\mathcal{M}}}\frac{1}{m}\sum_{i=1}^m \sigma_i * -\log \frac{\exp o_{\delta^j(y_i)}^j(u_i)}{\sum_{k=1}^{N_j}\exp o^j_k(u_i)}\right] \notag\\
&= \widehat{\mathcal{R}}_m(\ell\circ\mathcal{M}, S), 
\end{align}
\label{eq: transformation of Rademacher complexity}
\end{small}where the last equality holds  because 
\begin{small}
    $$ -\log \frac{\exp o_{\delta^j(y_i)}^j(u_i)}{\sum\limits_{k=1}\limits^{N_j}\exp o^j_k(u_i)}=-\sum_{n=1}^{N_j}\mathbb{I}(n=\delta^j(y_i))\cdot \log \frac{\exp o^j_n(u_i)}{\sum\limits_{k=1}\limits^{N_j}\exp o_k^j(u_i)}$$ 
\end{small}is a composition of logistic loss $\ell_{y}(\boldsymbol{o})=-\sum_{i}y_i\log \frac{\exp \boldsymbol{o}_i}{\sum_{j} \exp \boldsymbol{o}_j}$ and $f_{\text{DIN}}$. 
Since the partial derivative of $\ell_{y}(\boldsymbol{o})$ w.r.t. each component is bounded by $1$, the logistic loss function $\ell$ is $1$-Lipschitz~\cite{wan2013regularization}.  For the $j$-th layer, where the number of classes is $N_j$, by {Lemma \ref{Lemma: logistic rademacher}}, we have 
\begin{equation}
\label{eq: using lemma to transfer the Rademacher complexity}
\widehat{\mathcal{R}}_m(\ell\circ\mathcal{M}, S) \le 2N_j\widehat{\mathcal{R}}_m(\mathcal{M}, S).
\end{equation}

Combine the {Eq. (\ref{eq: generalization bound of j-th layer})}, {Eq. (\ref{eq: transformation of Rademacher complexity})} and {Eq. (\ref{eq: using lemma to transfer the Rademacher complexity})}, and utilize the {Lemma \ref{Lemma: Rademacher complexity of DIN}}, we can obtain the desired result.
\end{proof}

\subsection{Proof of Lemma \ref{Lemma: bound of sampled softmax}}
\begin{proof}
The $i$-th node is positive among  $N_j$ nodes in the $j$-th layer, and the sampling distribution is $Q^j$, with the sampling probability $q_{i'}^j$ for node $n_{i'}^j$. Therefore, according to {Eq. (\ref{eq:adjusted_output})}, for $i^{\prime}\in \mathcal{I}^j_M\cup \{i\}$, the adjusted logit is calculated as follows:
\begin{equation}
\label{eq:adjusted_output_lemma}
    {\hat{o}}^j_{i^{\prime}}=\left\{
    \begin{aligned}
    &{{o}}^j_{i}-\ln(Mq^j_{i'})\ \ \  \text{if}\ \ i^{\prime}\neq i \\
    &{{o}}^j_{i}-\ln(1) \quad \quad  \ \ \text{if}\ \ i^{\prime} = i
    \end{aligned}
    \right.
\end{equation}
Then, we can have:
\begin{small}
\begin{align*}
&\ell^j_{softmax}(u,y) - \mathbb{E}_{\mathcal{I}^{j}_{M}}\left[\ell^j_{sampled\text{-}softmax}(u,y,\mathcal{I}^{j}_{M})\right]\\
    &\quad=-\log \frac{\exp {o}^j_{i}}{\sum_{k=1}^{N_j}\exp {o}^j_k}- {\mathbb{E}}_{\mathcal{I}'_{M}}\Big[-\log \frac{\exp {\hat{o}}_{i}^j}{\sum\limits_{i'\in \mathcal{I}^{j}_{M}\cup\{i\}}\exp {\hat{o}}_{i'}^j}\Big]\\ &\quad=\mathbb{E}_{\mathcal{I}^{j}_{M}}\left[\log \Big(\sum_{k=1}^{N_j}\exp {o}^j_k\Big)-\log \left(\exp {o}^j_i+\sum_{i^{\prime}\in\mathcal{I}^{j}_{M}}\frac{\exp {o}^j_{i^{\prime}}}{Mq_{i^{\prime}}^j}\right)\right]\\
    &\quad=\mathbb{E}_{\mathcal{I}^{j}_{M}}\left[-\log \left(p_i^j +\frac{1}{M}\sum_{i^{\prime}\in \mathcal{I}^{j}_{M}}\frac{p_{i^{\prime}}^j}{q_{i^{\prime}}^j}\right)\right]\\
    \displaybreak[1]
   &\quad=\mathbb{E}_{\mathcal{I}^{j}_{M}}\left[\log\frac{M}{\sum_{i^{\prime}\in\mathcal{I}^{j}_M}\Big(\frac{q_{i^{\prime}}^j}{p_{i^{\prime}}^j+p_i^jq_{i^{\prime}}^j}\Big)^{-1}}  \right]\\
   &\quad\underset{(a)}{\le} \mathbb{E}_{\mathcal{I}^{j}_{M}}\left[\log\left({\prod_{i^{\prime}\in\mathcal{I}^{j}_M}\frac{q_{i^{\prime}}^j}{p_{i^{\prime}}^j+p_i^jq_{i^{\prime}}^j}}  \right)^{\frac{1}{M}}\right]\\
   &\quad=\frac{1}{M}\sum_{i'\in \mathcal{I}^{j}_{M}}\mathbb{E}_{i'\sim Q^j}\left[\log \frac{q_{i^{\prime}}^j}{p_{i^{\prime}}^j+p_i^jq_{i^{\prime}}^j}\right] \\
   &\quad\le \frac{1}{M}\sum_{i'\in \mathcal{I}^{j}_{M}}\mathbb{E}_{i'\sim Q^j}\left[\log \left(\frac{q_{i^{\prime}}^j}{p_{i^{\prime}}^j}\right)\right]
   =D_{\text{KL}}(Q^j || P^j),
\end{align*}
\end{small}where (a) uses Harmonic-Geometric inequality. 
\end{proof}

\subsection{Proof of Lemma~\ref{lemm:probability estimation error}}
\begin{proof}
As the expected discrepancy in rectified labels across all layers is bounded by $g(\epsilon)$:
$$
\mathbb{E}_{(u,y)\sim \mathbb{P}} \left[\sum_{j=1}^H \left|  \hat{\bar{z}}^j_{\delta^j(y)} - {\bar{z}}^{j*}_{\delta^j(y)} \right|\right] \le g(\epsilon),$$
the number of node for each layer $N_j\le |\mathcal{Y}|$, and $o^j_k\le B_\mathcal{M}$ (see Appendix~\ref{appendix: proof of theory generalization bound of single layer}), the gap between the expected risks under estimated labels and true rectified labels can be bounded as follows:
\begin{align*}
&\left|\mathbb{E}_{(u,y)\sim \mathbb{P}}[\widetilde{\mathcal{L}}(u,y)] 
- \mathbb{E}_{(u,y)\sim \mathbb{P}}[\widetilde{\mathcal{L}}^*(u,y)] \right| \\
&= \left|\mathbb{E}_{(u,y)\sim \mathbb{P}} \left[\sum_{j=1}^H\left(\hat{\bar{z}}^{j*}_{\delta^j(y)} 
- \hat{\bar{z}}^j_{\delta^j(y)}\right)\log\left(\frac{\exp o^j_{\delta^j(y)}}
{\sum_{k=1}^{N_j}\exp(o^j_k)}\right)\right] \right| \\
&\le \mathbb{E}_{(u,y)\sim \mathbb{P}} \left|\left(2B_{\mathcal{M}}+\log |\mathcal{Y}|\right)
\sum_{j=1}^H\left(\hat{\bar{z}}^{j*}_{\delta^j(y)} - \hat{\bar{z}}^j_{\delta^j(y)}\right)\right| \\
&\le g(\epsilon)\left(2B_{\mathcal{M}}+\log |\mathcal{Y}|\right)
\end{align*}   
\end{proof}

\subsection{Proof of Theorem~\ref{theo: final bound}}
\begin{proof}
    By Lemma \ref{Lemma: bound of sampled softmax}, we have
    \begin{equation}
    \label{eq: cor1}
    \begin{aligned}
        & \mathbb{E}_{(u,y) \sim \mathbb{P}}\left[\widetilde{\mathcal{L}}_j(u,y)\right] - \mathbb{E}_{(u,y), \mathcal{I}_M^{j}}\left[\widehat{\mathcal{L}}_j(u,y)\right] \\
        & \leq \mathbb{E}_{(u,y)\sim \mathbb{P}} \left[ D_{KL}(Q^j||P^j) \right] ,
    \end{aligned}
    \end{equation}
    For convenience, we omit the subscripts of the expectations and use \(\mathbb{E}\) to denote the expectation taken over \(u\), \(y\), and \(\mathcal{I}_M^{j}\). From Lemma \ref{lemma: Rademacher complexity lemma}, we have:

\begin{equation}
\label{eq: generalization bound of j-th layer with sampled softamx}
    \begin{split}
       \mathbb{E}\left[\widehat{\mathcal{L}}_j(u,y)\right] &\le \frac{1}{m} \sum_{i=1}^{m}\widehat{\mathcal{L}}_j(u_i,y_i) + 2 \widehat{\mathcal{R}}_m(\mathcal{ \hat{F}}_\ell^j, S) \\
       &+\Big(4\log N_j+8B_{\mathcal{M}}\Big)\sqrt{\frac{2\log\left(4/\delta\right)}{m}}.
    \end{split}
\end{equation} 
where 
\begin{small}
\begin{equation}
\label{eq: cor2}
\begin{aligned}
& \widehat{\mathcal{R}}_m(\mathcal{ \hat{F}}_\ell^j, S) = \mathbb{E}_{\sigma}\left[\sup_{\hat{f}_{\ell}^j\in \mathcal{\hat{F}}_{\ell}^j}\frac{1}{m}\sum_{i=1}^{m}\sigma_i \hat{f}_{\ell}^j(u_i,y_i)\right]\\
& = \mathbb{E}_{\sigma}\left[\sup_{\substack{f_{\text{DIN}}\in \mathcal{M}\\\bar{z}\in \{0,1\}}}\frac{1}{m}\sum_{i=1}^m \sigma_i\bar{z}^j_{\delta^j(y_i)}\log \frac{ 
\sum_{k\in \mathcal{I}^{\prime}_{M}\cup\{{\delta^j(y_i)}\}} \exp o^j_k(u_i)
}{\exp o_{\delta^j(y_i)}^j(u_i)}\right]. 
\end{aligned}
\end{equation}
\end{small}Using the same analysis method with Eq. \eqref{eq: transformation of Rademacher complexity} and Eq. \eqref{eq: using lemma to transfer the Rademacher complexity} in the proof of Lemma~\ref{Lemma: generalization bound of single layer}, we can get
\begin{equation}
\label{eq: cor3}
    \widehat{\mathcal{R}}_m(\mathcal{ \hat{F}}_\ell^j, S) \le 2(|\mathcal{I}_M^{j}| + 1)\widehat{\mathcal{R}}_m(\mathcal{M}, S) \le 2N_j\widehat{\mathcal{R}}_m(\mathcal{M}, S).
\end{equation}
Combining Eq. \eqref{eq: cor1} $\sim$ Eq. \eqref{eq: cor3} and Lemma \ref{Lemma: Rademacher complexity of DIN}, then aggregating from the 1st layer to the $H$-th layer, and further applying Lemma~\ref{lemm:probability estimation error}, we obtain the desired result.
\end{proof}

\begin{algorithm*}[htbp]
\caption{\textsc{Model Optimization and Tree Index Updating for DTR}}\label{alg:dtr_overview}
\textbf{Input:} Training set $S=\{(u_i,y_i)\}_{i=1}^m$, initial tree $\mathcal{T}^{(0)}$ with mapping $\pi^{(0)}$, initial preference model $\mathcal{M}_{\theta^{(0)}}$; the number of negatives $M$; update stride $d$; probability estimator $\hat{\eta}$; max iterations $R$\\
\textbf{Output:} Learned model $\mathcal{M}$ and tree $\mathcal{T}$\\
\begin{algorithmic}[1]
\FOR{$t=0,1,\dots,R-1$}
  \STATE \emph{// (A) Optimize model $\mathcal{M}_{\theta^{(t)}}$ with fixed tree $\mathcal{T}^{(t)}$ } 
  \WHILE{not converged}
    \STATE Draw a minibatch $\mathrm{MB}\subset S$;
    \FOR{each $(u,y)\in\mathrm{MB}$}
      \FOR{$j=1$ to $H$}
        \STATE $i^+=\delta^j(y)$; ~ \emph{// the index of ancestor node at level $j$}
        \STATE  $w_j=\bar z^{(u,y)}_{j,i^+}\leftarrow \mathbf{I}\!\left[y=\operatornamewithlimits{argmax}\limits_{y'\in \pi(\mathfrak{L}(n^j_{i^+}))}\hat{\eta}_{y'}(u)\right]$; ~\emph{// label rectification}
        \STATE Sample $M$ negatives $I^j_M$ by expanding  nodes using local softmax $q^{j+1}_{c}\propto\exp(o^{j+1}_c)$; ~\emph{// tree-based sampling}
        \STATE  For $i\in I^j_M\cup\{i^+\}$,~ $\hat o^j_i \!\leftarrow\! o^j_i-\log\!\big(\mathbf{I}[i=i^+]+M\,q^j_i\,\mathbf{I}[i\neq i^+]\big)$; ~\emph{// sampled-softmax correction}
        \STATE  $\widehat {\mathcal{L}}_j(u,y) \!=\! -\,w_j\log\frac{\exp(\hat o^j_{i^+})}{\sum_{i\in I^j_M\cup\{i^+\}}\exp(\hat o^j_i)}$; ~\emph{// rectification-weighted sampled softmax loss}
      \ENDFOR
      \STATE $\widehat{\mathcal{L}}(u,y) =\sum_{j=1}^{H}\widehat{\mathcal{L}}_j(u,y)$;
    \ENDFOR
    \STATE $\theta^{(t+1)} \leftarrow \theta^{(t)} - \epsilon\, \nabla_\theta\sum_{(u,y)\in MB}\widehat{\mathcal{L}}_j(u,y)$; ~\emph{// update model parameters}
  \ENDWHILE
  \STATE \emph{// (B) Update tree index $\mathcal{T}^{(t)}$ with fixed $\mathcal{M}_{\theta^{(t+1)}}$}
  \STATE Assign all items to root; $j\leftarrow 0$;
  \WHILE{$j < H$}
    \STATE $j' \leftarrow \min(H,\, j+d)$;
    \FOR{each node $n$ at level $j$}
      \STATE Let $\mathcal{Y}_n$ be the set of items assigned to node $n$;
      \STATE Let $\mathcal{C}(n,j')$ be  the set of descendants of $n$ at level $j'$;
      \FOR{each $y\in \mathcal{Y}_n$}
      \STATE  
        For $c\in\mathcal{C}(n,j')$, $\mathrm{Score}(y,c)\leftarrow\sum_{(u,y)\in A_y}\log\frac{\exp(o^{j'}_{c}(u))}{\sum_{c'\in\mathcal{C}(n,j')}\exp(o^{j'}_{c'}(u))};$~ \emph{// matching score between item $y$ and node $c$}
      \STATE Assign $y$ to $c^*=\operatornamewithlimits{argmax}_{c}\mathrm{Score}(y,c)$;
    \ENDFOR 
    \ENDFOR
    \STATE $j\leftarrow j'$;
  \ENDWHILE
\ENDFOR
\STATE \textbf{return} $\mathcal{M}_{\theta}$ and updated $\mathcal{T}$.
\end{algorithmic}
\label{alg: learning process of DTR}
\end{algorithm*}

\section{Auxiliary Lemmas}
\begin{Lemma}[Theorem 26.5(2) of \cite{shalev2014understanding}]
\label{lemma: Rademacher complexity lemma}
    If the magnitude of loss function $l$ is bounded above by $c$, with probability greater than $1-\delta$ for all $h\in \mathcal{H}$, we have 
    \begin{small}
        \begin{equation*}
    \begin{split}
   \mathbb{E}_{(x,y)\sim\mathcal{D}}[\ell(h(x),y)] &\le \frac{1}{m}\sum_{i=1}^{m}\ell\left(h\left(x_{i}\right),y_{i}\right) \\&+ 2\widehat{\mathcal{R}}_{m}(\ell\circ\mathcal{H},S)+4c\sqrt{\frac{2\ln(4/\delta)}{m}}   
    \end{split}
    \end{equation*}
    \end{small}
where $\ell \circ \mathcal{H}=\{\ell(h(x),y)\mid (x,y)\in \mathcal{X}\times \mathcal{Y}, h\in \mathcal{H}\}$.
\end{Lemma}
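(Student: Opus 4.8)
This is the classical data-dependent Rademacher generalization bound (as indicated, it is quoted from Theorem~26.5(2) of \cite{shalev2014understanding}), so in the paper it may simply be invoked; if one wanted to prove it in place, the plan is the standard three-step argument: a bounded-differences concentration for the worst-case generalization gap, a symmetrization step, and a second concentration that converts the expected Rademacher complexity into its empirical counterpart, finished by a union bound.

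First I would introduce the one-sided uniform deviation $\Phi(S)=\sup_{h\in\mathcal{H}}\big(\mathbb{E}_{(x,y)\sim\mathcal{D}}[\ell(h(x),y)]-\tfrac{1}{m}\sum_{i=1}^{m}\ell(h(x_i),y_i)\big)$, and note that since $|\ell|\le c$, replacing a single example $(x_i,y_i)$ changes the empirical average inside $\Phi$ by at most $2c/m$, hence changes $\Phi$ by at most $2c/m$. McDiarmid's inequality then gives, with probability at least $1-\delta/2$, $\Phi(S)\le\mathbb{E}_S[\Phi(S)]+c\sqrt{2\ln(4/\delta)/m}$ up to the constant fixed by the bounded-difference parameter. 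Because for every fixed $h$ the quantity $\mathbb{E}_{\mathcal{D}}[\ell(h(x),y)]-\tfrac{1}{m}\sum_i\ell(h(x_i),y_i)$ is at most $\Phi(S)$, it remains only to bound $\mathbb{E}_S[\Phi(S)]$.

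Second I would symmetrize: introduce an independent ghost sample $S'=\{(x_i',y_i')\}_{i=1}^{m}$ so that $\mathbb{E}_{\mathcal{D}}[\ell(h(x),y)]=\mathbb{E}_{S'}[\tfrac{1}{m}\sum_i\ell(h(x_i'),y_i')]$, apply Jensen to pull the supremum inside the $S'$-expectation, and insert Rademacher signs $\sigma_i$, which is legitimate because exchanging $(x_i,y_i)$ with $(x_i',y_i')$ leaves the joint law invariant. Bounding the supremum of the difference by the sum of the two suprema yields $\mathbb{E}_S[\Phi(S)]\le 2\,\mathbb{E}_{S}\big[\mathbb{E}_{\sigma}\big[\sup_{h}\tfrac{1}{m}\sum_i\sigma_i\ell(h(x_i),y_i)\big]\big]=2\,\mathbb{E}_S[\widehat{\mathcal{R}}_m(\ell\circ\mathcal{H},S)]$. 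Since $S\mapsto\widehat{\mathcal{R}}_m(\ell\circ\mathcal{H},S)$ again has bounded differences $2c/m$, a second McDiarmid gives, with probability at least $1-\delta/2$, $\mathbb{E}_S[\widehat{\mathcal{R}}_m(\ell\circ\mathcal{H},S)]\le\widehat{\mathcal{R}}_m(\ell\circ\mathcal{H},S)+c\sqrt{2\ln(4/\delta)/m}$.

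Finally I would union-bound the two events, each of probability at least $1-\delta/2$, chain the three inequalities, and conclude that with probability at least $1-\delta$, for all $h\in\mathcal{H}$, $\mathbb{E}_{\mathcal{D}}[\ell(h(x),y)]\le\tfrac{1}{m}\sum_i\ell(h(x_i),y_i)+2\widehat{\mathcal{R}}_m(\ell\circ\mathcal{H},S)+4c\sqrt{2\ln(4/\delta)/m}$. The main obstacle is bookkeeping rather than conceptual: one must track the bounded-difference constant carefully (whether $|\ell|\le c$ forces $2c/m$ or $c/m$) and split the confidence budget $\delta$ consistently between the two concentration steps, since these choices are exactly what pin down the numerical constants $2$ and $4$ and the $\ln(4/\delta)$ term in the stated bound; the symmetrization inequality and the two applications of McDiarmid are otherwise routine.
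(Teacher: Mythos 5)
The paper does not prove this lemma at all---it is stated as an auxiliary result quoted verbatim from Theorem~26.5(2) of the cited textbook---so your recognition that it "may simply be invoked" matches the paper's treatment exactly. Your sketch (McDiarmid on the uniform deviation, ghost-sample symmetrization to bound its expectation by twice the expected Rademacher complexity, a second McDiarmid to pass to the empirical Rademacher complexity, and a union bound over the two $\delta/2$ events) is the standard textbook proof and is correct, with the constants $4$ and $\ln(4/\delta)$ pinned down exactly as you describe.
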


\begin{Lemma}[Lemma 2 of \cite{wan2013regularization}]
\label{Lemma: logistic rademacher}
Let $\mathcal{F}$ be class of real functions and $\mathcal{H}=[\mathcal{F}_j]^k_{j=1}$ be a k-dimensional function class. If $\mathcal{A}:\mathbb{R}^k\rightarrow \mathbb{R}$ is a Lipschitz function with constant $L$ and satisfies $\mathcal{A}(0)=0$, then 
$$\widehat{\mathcal{R}}_m(\mathcal{A}\circ \mathcal{H})\le 2kL\widehat{\mathcal{R}}_m(\mathcal{F}).$$ 
\end{Lemma}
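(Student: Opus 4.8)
The plan is to decompose the population risk of the rectified loss $\widetilde{\mathcal{L}}=\sum_{j=1}^H\widetilde{\mathcal{L}}_j$ over tree levels and, for each level $j$, chain three ingredients already available in the excerpt: (i) Lemma~\ref{Lemma: bound of sampled softmax}, which trades the full softmax loss for the \emph{expected} sampled softmax loss at the cost of a $D_{KL}(Q^j\|P^j)$ penalty; (ii) the Rademacher uniform-convergence bound Lemma~\ref{lemma: Rademacher complexity lemma} applied to the sampled softmax loss class at level $j$; and (iii) a vector-contraction reduction of that loss class's empirical Rademacher complexity to $\widehat{\mathcal{R}}_m(\mathcal{M})$, which Lemma~\ref{lemm: Rademacher complexity of DIN} then bounds explicitly. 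Summing the resulting per-level inequalities over $j=1,\dots,H$ and using $\sum_{j=1}^H N_j\le|\mathcal{N}|\le\frac{2B-1}{B-1}|\mathcal{Y}|=\tau^2$ collapses the level-wise quantities into the four terms on the right of \eqref{eq: final bound}.

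Concretely, I would first fix a level $j$ and bound $\mathbb{E}_{(u,y)\sim\mathbb{P}}[\widetilde{\mathcal{L}}_j(u,y)]$. Since $\bar{z}^j_{\delta^j(y)}\in\{0,1\}$, the inequality of Lemma~\ref{Lemma: bound of sampled softmax} — stated there for the unweighted softmax / sampled softmax losses — passes verbatim to the weighted losses $\widetilde{\mathcal{L}}_j$ and $\widehat{\mathcal{L}}_j$ (the case $\bar z^j_{\delta^j(y)}=0$ being trivial), giving $\widetilde{\mathcal{L}}_j(u,y)\le\mathbb{E}_{\mathcal{I}^j_M}[\widehat{\mathcal{L}}_j(u,y)]+D_{KL}(Q^j\|P^j)$ pointwise, hence the KL term after $\mathbb{E}_{(u,y)}$. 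Next I would treat $\widehat{\mathcal{F}}_\ell^j=\{(u,y)\mapsto\widehat{\mathcal{L}}_j(u,y)\}$ as a loss class of magnitude at most $2B_{\mathcal{M}}+\log N_j$ and apply Lemma~\ref{lemma: Rademacher complexity lemma}: $\mathbb{E}[\widehat{\mathcal{L}}_j]$ is then bounded by its empirical average $\frac1m\sum_i\widehat{\mathcal{L}}_j(u_i,y_i)$, plus $2\widehat{\mathcal{R}}_m(\widehat{\mathcal{F}}_\ell^j,S)$, plus a sub-Gaussian term of order $(\log N_j+B_{\mathcal{M}})\sqrt{\log(1/\delta)/m}$. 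Finally, exactly as in the proof of Lemma~\ref{Lemma: generalization bound of single layer}, I would drop the $\{0,1\}$ weight, write the sampled softmax loss as the composition of a $1$-Lipschitz multiclass-logistic-type map — Lipschitz in the logits and vanishing at the origin — with the vector of at most $|\mathcal{I}^j_M|+1\le N_j$ active scores $f_{\text{DIN}}(u,\cdot)$, and invoke the vector-contraction Lemma~\ref{Lemma: logistic rademacher} to obtain $\widehat{\mathcal{R}}_m(\widehat{\mathcal{F}}_\ell^j,S)\le 2N_j\,\widehat{\mathcal{R}}_m(\mathcal{M},S)$; Lemma~\ref{lemm: Rademacher complexity of DIN} then replaces $\widehat{\mathcal{R}}_m(\mathcal{M})$ by $2c_\phi^{L-1}B_1^L(B_0+KB_{w_1}+KB_{w_2}\tau)/\sqrt{m}$.

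The aggregation is then bookkeeping: sum over $j=1,\dots,H$. The empirical pieces give $\frac1m\sum_i\sum_j\widehat{\mathcal{L}}_j(u_i,y_i)$; the KL pieces give $\sum_j\mathbb{E}_{(u,y)}[D_{KL}(Q^j\|P^j)]$; the Rademacher pieces give $\sum_j 4N_j\cdot\frac{2c_\phi^{L-1}B_1^L(\cdots)}{\sqrt m}\le\frac{8\tau^2 c_\phi^{L-1}B_1^L(B_0+KB_{w_1}+KB_{w_2}\tau)}{\sqrt m}$ via $\sum_j N_j\le\tau^2$; and the remaining sub-Gaussian pieces, bounding each $\log N_j$ by $2\log\tau$ and absorbing constants, collect into the last term $(8\log\tau+8HB_{\mathcal{M}})\sqrt{2\log(4/\delta)/m}$. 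To reach overall confidence $1-\delta$ I would either union-bound the $H$ per-level events or, more cleanly, apply Lemma~\ref{lemma: Rademacher complexity lemma} a single time to the aggregate loss $\widehat{\mathcal{L}}=\sum_j\widehat{\mathcal{L}}_j$, whose magnitude and (sub-additive) Rademacher complexity are the sums of the level-wise bounds.

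The hard part will be step (iii): setting up the vector contraction for the \emph{sampled} softmax loss. Two subtleties need care. First, I must verify that, for a fixed realization of the sampled index set, the map from the active logits to $-\log(\exp o_i/\sum_k\exp o_k)$ is $1$-Lipschitz and vanishes at the origin, and that the per-node logit adjustments $o^j_i\mapsto o^j_i-\ln(Mq^j_i)$ — being shifts that do not involve the score variable over which one contracts — neither violate the Lipschitz bound nor enlarge the underlying class $\mathcal{M}$, so Lemma~\ref{Lemma: logistic rademacher} applies with constant $1$ and dimension $|\mathcal{I}^j_M|+1$. Second, because $\mathcal{I}^j_M$ is random, the uniform-convergence step in (ii) must be run conditionally on the sampling realization — treating $\widehat{\mathcal{L}}_j$ there as an ordinary loss of $f_{\text{DIN}}$ — with the expectation over $\mathcal{I}^j_M$ taken afterwards and folded into the risk and empirical-loss terms; this is precisely why the KL correction of Lemma~\ref{Lemma: bound of sampled softmax} is the right bridge between the sampled objective actually optimized and the full-softmax objective whose expected value appears on the left of \eqref{eq: final bound}. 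Everything else is assembling inequalities already supplied in the excerpt.
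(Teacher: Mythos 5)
Your proposal does not prove the statement it was asked to prove. The statement is the vector-contraction inequality for empirical Rademacher complexity — that for a $k$-dimensional function class $\mathcal{H}=[\mathcal{F}_j]_{j=1}^k$ and an $L$-Lipschitz map $\mathcal{A}$ with $\mathcal{A}(0)=0$ one has $\widehat{\mathcal{R}}_m(\mathcal{A}\circ\mathcal{H})\le 2kL\,\widehat{\mathcal{R}}_m(\mathcal{F})$. What you have written is instead a proof sketch of Theorem~\ref{theo: final bound}: you decompose the rectified loss over tree levels, chain Lemma~\ref{Lemma: bound of sampled softmax}, Lemma~\ref{lemma: Rademacher complexity lemma}, and Lemma~\ref{lemm: Rademacher complexity of DIN}, and you \emph{invoke} Lemma~\ref{Lemma: logistic rademacher} as a black box ("invoke the vector-contraction Lemma~\ref{Lemma: logistic rademacher} to obtain $\widehat{\mathcal{R}}_m(\widehat{\mathcal{F}}_\ell^j,S)\le 2N_j\widehat{\mathcal{R}}_m(\mathcal{M},S)$"). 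Nowhere do you establish the contraction inequality itself, so as a proof of the stated lemma the attempt is vacuous. Note also that the paper itself does not prove this lemma; it imports it verbatim as Lemma~2 of the cited reference and lists it among the auxiliary lemmas.

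If you wanted to actually prove the lemma, the argument is of a different character entirely: starting from the definition $\widehat{\mathcal{R}}_m(\mathcal{A}\circ\mathcal{H})=\mathbb{E}_\sigma\bigl[\sup_{h}\frac{1}{m}\sum_i\sigma_i\,\mathcal{A}(h_1(x_i),\dots,h_k(x_i))\bigr]$, one uses the Lipschitz property and $\mathcal{A}(0)=0$ to compare the Rademacher process of $\mathcal{A}\circ\mathcal{H}$ with that of the coordinates, reducing via the scalar Ledoux--Talagrand comparison/contraction principle applied coordinate by coordinate; the factor $k$ arises from summing over the $k$ coordinates and the factor $2$ from passing through the absolute-value form of the contraction inequality. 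None of these steps — nor any substitute for them — appears in your proposal, so the gap is not a technical detail but the absence of the entire required argument.
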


\begin{Lemma}[Theorem 1 of \cite{banerjee2005optimality}]
    Let $\psi:\mathbb{R}^N\mapsto \mathbb{R}$ be a strictly convex differentiable function, and $D_\psi:\mathbb{R}^N\times\mathbb{R}^N\mapsto \mathbb{R}$ is the Bregman divergence induced by $\psi$. Let $Y$ be an arbitrary random variable taking values in $\mathbb{R}^N$ for which both $\mathbb{E}[Y]$ and $\mathbb{E}[\psi(Y)]$ are finite, we have
    $$\operatornamewithlimits{argmin}\limits_{\boldsymbol{s}\in \mathbb{R}^N} \mathbb{E}_{Y}[D_\psi(Y,\boldsymbol{s})]=\mathbb{E}[Y].$$
\label{lemm: minimizing of Bregman divergence}
\end{Lemma}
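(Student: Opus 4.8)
The plan is to prove the lemma via a Bregman analogue of the bias–variance (Pythagorean) decomposition. Write $\boldsymbol{\mu} = \mathbb{E}[Y]$; by hypothesis this is a well-defined finite vector, and since $\psi$ is differentiable (hence finite, with finite gradient, at $\boldsymbol{\mu}$) the quantity $\mathbb{E}_Y[D_\psi(Y,\boldsymbol{\mu})] = \mathbb{E}[\psi(Y)] - \psi(\boldsymbol{\mu}) - \nabla\psi(\boldsymbol{\mu})^T(\boldsymbol{\mu} - \boldsymbol{\mu}) = \mathbb{E}[\psi(Y)] - \psi(\boldsymbol{\mu})$ is finite. The first step is simply to expand, for an arbitrary $\boldsymbol{s}\in\mathbb{R}^N$,
$$\mathbb{E}_Y[D_\psi(Y,\boldsymbol{s})] = \mathbb{E}[\psi(Y)] - \psi(\boldsymbol{s}) - \nabla\psi(\boldsymbol{s})^T(\boldsymbol{\mu} - \boldsymbol{s}),$$
using linearity of expectation and the fact that $\psi(\boldsymbol{s})$ and $\nabla\psi(\boldsymbol{s})$ do not depend on $Y$.

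The second step is the key algebraic identity obtained by subtracting the value at $\boldsymbol{\mu}$:
$$\mathbb{E}_Y[D_\psi(Y,\boldsymbol{s})] - \mathbb{E}_Y[D_\psi(Y,\boldsymbol{\mu})] = \psi(\boldsymbol{\mu}) - \psi(\boldsymbol{s}) - \nabla\psi(\boldsymbol{s})^T(\boldsymbol{\mu} - \boldsymbol{s}) = D_\psi(\boldsymbol{\mu},\boldsymbol{s}).$$
Hence $\mathbb{E}_Y[D_\psi(Y,\boldsymbol{s})] = \mathbb{E}_Y[D_\psi(Y,\boldsymbol{\mu})] + D_\psi(\boldsymbol{\mu},\boldsymbol{s})$, where the first summand on the right is a finite constant independent of $\boldsymbol{s}$. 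The third step invokes the elementary positivity of the Bregman divergence: convexity of $\psi$ and its first-order characterization give $D_\psi(\boldsymbol{\mu},\boldsymbol{s}) \ge 0$ for every $\boldsymbol{s}$, and strict convexity upgrades this to $D_\psi(\boldsymbol{\mu},\boldsymbol{s}) > 0$ whenever $\boldsymbol{s}\neq\boldsymbol{\mu}$. Therefore the right-hand side is minimized — and uniquely so — at $\boldsymbol{s}=\boldsymbol{\mu}$, which yields $\operatornamewithlimits{argmin}_{\boldsymbol{s}\in\mathbb{R}^N}\mathbb{E}_Y[D_\psi(Y,\boldsymbol{s})] = \mathbb{E}[Y]$.

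There is no deep obstacle here; the two points requiring care are (i) justifying that every expectation appearing is finite, so that the subtraction in the second step is legitimate — this is precisely the role of the hypotheses $\mathbb{E}[Y]<\infty$ and $\mathbb{E}[\psi(Y)]<\infty$, which also ensure $\mathbb{E}_Y[D_\psi(Y,\boldsymbol{s})]$ is well-defined (possibly $+\infty$) for every $\boldsymbol{s}$ — and (ii) the equality case in the third step, where the strict convexity of $\psi$ is essential to obtain uniqueness of the minimizer rather than merely optimality of $\boldsymbol{\mu}$.
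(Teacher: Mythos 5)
Your proof is correct. The paper itself does not prove this lemma --- it is stated as an auxiliary result imported verbatim from the cited reference (Theorem 1 of \cite{banerjee2005optimality}) --- and your argument is precisely the standard one given there: the pointwise decomposition $\mathbb{E}_Y[D_\psi(Y,\boldsymbol{s})] = \mathbb{E}_Y[D_\psi(Y,\boldsymbol{\mu})] + D_\psi(\boldsymbol{\mu},\boldsymbol{s})$ followed by nonnegativity of the Bregman divergence and strict convexity for uniqueness, with the finiteness hypotheses on $\mathbb{E}[Y]$ and $\mathbb{E}[\psi(Y)]$ used exactly where you use them.
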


\section{Details of Cross-Entropy Alignment to the Score Proportionality Condition}
\label{appendix: details of condition alignment}
When optimized with standard multi-class softmax loss, according to Eq. \eqref{eq: expectation of layer random vector}, for the original label $z_i^j$, we have  
$$\mathbb{E}[{z}^j_i|u]=\sum_{k \in \left\{\delta(n)\mid n\in \mathcal{C}(n_i^j)\right\}} \mathbb{E}[{z}^{j+1}_k|u].$$ Then, by Lemma \ref{lemm: minimizing of Bregman divergence}, as we are optimizing a loss function that can be expressed as Bregman divergence, the optimization leads to the preference score satisfying $g(\boldsymbol{o}^j)=\mathbb{E}[{\boldsymbol{z}}^j|u]$, where $g$ is the softmax function. Consequently, we obtain: 
\begin{small}
\begin{equation*}
\begin{split}
     \frac{\exp o^j_i}{\sum_{m=1}^{N_j}\exp o^j_m} &= \sum_{k \in \left\{\delta(n)\mid n\in \mathcal{C}(n_i^j)\right\}} \frac{\exp o^{j+1}_k}{\sum_{m=1}^{N_{j+1}}\exp o^{j+1}_m}.
\end{split}
\end{equation*}    
\end{small}Furthermore, we can get:
\begin{small}
$$\exp o_i^j \propto \sum_{k \in \left\{\delta(n)\mid n\in \mathcal{C}(n_i^j)\right\}} \exp o_k^{j+1},$$    
\end{small}where ${\sum_{m=1}^{N_j}\exp o^j_m}\ /\ {\sum_{m=1}^{N_{j+1}}\exp o^{j+1}_m}$ is the proportionality coefficient shared across layer $j$, satisfying the condition in Theorem \ref{theo: sampling theory to unbiased softmax}.

\section{Illustration of Label Rectification}
\label{appendix: illustration of label rectification}
Figure~\ref{fig:label_rectification} illustrates the proposed label rectification method. For the training instance \((u,f)\), item \(f\) is mapped to leaf node \(20\); its ancestors are nodes \(9\), \(4\), and \(1\). Without rectification, all nodes along the path to \(f\) would be treated as positive \((z=1)\). With rectification, we relabel each internal node as follows: set \(\bar{z}=1\) only if the target item \(f\) has the highest conditional probability among the items assigned to the leaves in the subtree rooted at that node; otherwise, set \(\bar{z}=0\). In the figure, the subtree of node \(1\) spans items \(a\sim h\), where item \(c\) has the highest conditional probability (0.13), not item \(f\); hence its rectified label \(\bar{z}=0\). For node \(4\), whose subtree coves item \(e,f,g,h\), the probability of \(f\) is the highest (0.11), so \(\bar{z}=1\). This adjustment better aligns the training objective with the \emph{max-heap} assumption and improves performance under beam search.

\begin{figure*}[htbp]
    \centering
    \includegraphics[width=0.95\linewidth]{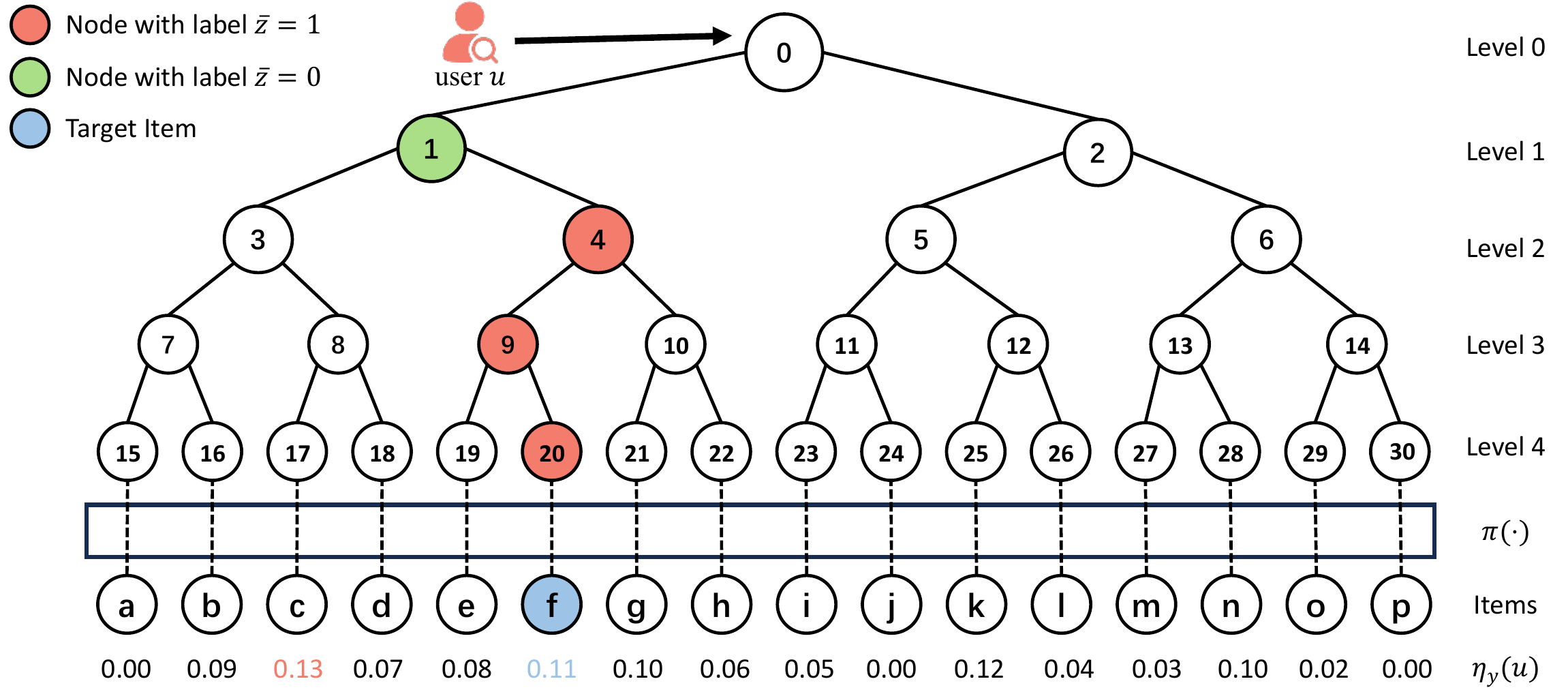}
    \caption{The illustration of label rectification. For the training instance $(u,f)$, item $f$ is mapped to leaf node $20$, whose ancestors are nodes $9$, $4$, and $1$. Among these, nodes $20$, $9$, and $4$ are labeled $\bar{z}=1$, whereas node $1$ is labeled $\bar{z}=0$.}
    \label{fig:label_rectification}
\end{figure*}

\section{Probability Perturbing Experiment}
\label{appendix: probability perturbing experiments}
\subsection{Setting}
We design a probability perturbing experiment on existing datasets to demonstrate that, when the estimation error of probabilities is small, the resulting change in rectified labels remains limited (i.e., a small $\epsilon$ will lead to a small $g(\epsilon)$).

First, we aggregate all user–item interaction counts and normalize them to form a probability matrix $P \in [0,1]^{|\mathcal{U} \times \mathcal{Y}|}$, where $P_{ij}$ denotes the preference probability of user $i$ for item $j$.
We then perturb each nonzero entry of $P$ by adding Gaussian noise $\epsilon \sim \mathcal{N}(0,1)$, clipped to the interval $[-L, L]$. In our experiments, $L$ is set to one of the following values: $[1e\text{-}7,1e\text{-}6, 1e\text{-}5, 1e\text{-}4,1e\text{-}3, 1e\text{-}2]$. After perturbation, we renormalize to obtain the matrix $\widetilde{P}$.
Finally, using both $P$ and $\widetilde{P}$, we compute the unnormalized rectified labels via Eq. (\ref{eq: normalized rectified label}) on tree $\mathcal{T}$ learned by DTR, and measure the percentage of labels that differ due to the perturbation.

\subsection{Results}
Table~\ref{tab:eps-results} reports the ratio of  changed rectified labels under varying perturbation thresholds across four datasets. We observe that, for small thresholds, label changes are negligible, and even at the largest threshold ($L=10^{-2}$), the change ratio remains below 0.10 in all datasets except MIND. These findings confirm that rectified labels are robust to small biases in probability estimation.

\begin{table}[ht]
  \centering
   \caption{Percentage of Rectified Labels Altered under Different Thresholds}
  \resizebox{\linewidth}{!}{
  \begin{tabular}{c|cccccc}
    \toprule
    \diagbox[width=2.3cm]{Dataset}{Threshold}
      & \(1e\text{-}7\) & \(1e\text{-}6\) & \(1e\text{-}5\)
      & \(1e\text{-}4\) & \(1e\text{-}3\) & \(1e\text{-}2\) \\
    \midrule
    Amazon & 0.0326 & 0.0328 & 0.0328 & 0.0330 & 0.0369 & 0.0511 \\
    MIND   & 0.0597 & 0.0599 & 0.0599 & 0.0622 & 0.0847 & 0.1766 \\
    Movie  & 0.0553 & 0.0554 & 0.0554 & 0.0564 & 0.0623 & 0.0784 \\
    Tmall  & 0.0602 & 0.0605 & 0.0606 & 0.0607 & 0.0675 & 0.0925 \\
    \bottomrule
  \end{tabular}
   }
  \label{tab:eps-results}
\end{table}


\section{Comparison with Sequential and Graph-based Methods}
We compare our proposed DTR (T-RL) with sequential recommendation models—GRU4Rec~\cite{hidasi2015session}, SASRec~\cite{kang2018self}, and FMLP-Rec~\cite{zhou2022filter}—as well as a graph-based method, LightGCN~\cite{he2017neural}.
GRU4Rec consists of an embedding layer followed by 6 gated recurrent layers, each with a hidden size of 64.
SASRec consists of an embedding layer, 4 self-attention layers, and an MLP with layer dimensions [24,128,24], while FMLP-Rec adopts a similar architecture, replacing the self-attention layers with 4 filter layers.
For LightGCN, we build the user-item interaction graph based on users' historical sequences and set the number of graph convolutional layers to 3. The results are shown in Table~\ref{table: additional_results}. Except for the marginal gap at top-$K=20$ on the Tmall dataset, our proposed DTR (T-RL) consistently outperforms all baselines. The gap on Tmall may be due to data sparsity, where models with filter layers like FMLP-Rec perform better.
\begin{table}[ht]
\centering
\caption{Performance comparison of DTR(T-RL) against sequential and graph-based  methods on four datasets under top-$K$ = 20 and 40. \\ \textbf{P}, \textbf{R}, and \textbf{F} denote Precision, Recall, and F-measure, respectively.}
\label{table: additional_results}
\begin{tabular}{c|cccccc}
\toprule
\textbf{Algorithm} & \textbf{P@20} & \textbf{R@20} & \textbf{F@20}
  & \textbf{P@40} & \textbf{R@40} & \textbf{F@40}  \\ \midrule
  & \multicolumn{6}{c}{Movie}  \\
\midrule
\multicolumn{7}{c}{} \\[-1.5em]
\midrule
GRU4Rec       & 0.2072	&0.1121	&0.1281	&0.1883	&0.1944	&0.1643 \\
SASRec  & 0.2217	&0.1220	&0.1392	&0.1965	&0.2044	&0.1726	  \\
FMLP-Rec & 0.2310	&0.1273	&0.1450	&0.2022	&0.2110	&0.1778  \\
LightGCN  & 0.1272	& 0.0577	&0.0688	&0.1188	&0.1042 &0.0931\\
DTR(T-RL)   & \textbf{0.2545} & \textbf{0.1374} & \textbf{0.1580} & \textbf{0.2176} & \textbf{0.2240} & \textbf{0.1905}   \\
\midrule
{}  & \multicolumn{6}{c}{MIND}  \\
\midrule
\multicolumn{7}{c}{} \\[-1.5em]
\midrule
GRU4Rec        &0.3930	&0.1820	&0.2203	&0.3184	&0.2730	&0.2547  \\
SASRec  & 0.4041	&0.1860	&0.2254	&0.3318	&0.2818	&0.2639  \\
FMLP-Rec & 0.4112	&0.1902	&0.2303	&0.3363	&0.2866	&0.2684  \\
LightGCN  & 0.1807	&0.0620	&0.0806	&0.1665	&0.1138	&0.1147 \\
DTR(T-RL)  & \textbf{0.4210} & \textbf{0.1938} & \textbf{0.2350} & \textbf{0.3401} & \textbf{0.2869} & \textbf{0.2695} \\
\midrule
{}  & \multicolumn{6}{c}{Amazon}  \\
\midrule
\multicolumn{7}{c}{} \\[-1.5em]
\midrule
GRU4Rec        & 0.0624	&0.0421	&0.0455	&0.0527	&0.0698	&0.0536	  \\
SASRec  &0.0683	&0.0475	&0.0509	&0.0578	&0.0784	&0.0595 \\
FMLP-Rec  & 0.0711	&0.0500	&0.0533	&0.0601	&0.0825	&0.0622  \\
LightGCN  & 0.0228	&0.0132	&0.0149	&0.0198	&0.0227	&0.0185  \\
DTR(T-RL) &\textbf{0.0777} & \textbf{0.0542} & \textbf{0.0580} & \textbf{0.0626} & \textbf{0.0847} & \textbf{0.0644}  \\
\midrule
{}  & \multicolumn{6}{c}{Tmall}  \\
\midrule
\multicolumn{7}{c}{} \\[-1.5em]
\midrule
GRU4Rec      &0.0379&0.0263	&0.0286	&0.0292	&0.0399	&0.0308  \\
SASRec  & 0.0462	&0.0324	&0.0352	&0.0350	&0.0482	&0.0370 \\
FMLP-Rec & \textbf{0.0485}	&\textbf{0.0345} &\textbf{0.0372}	&0.0366	&0.0509	&0.0390  \\
LightGCN  & 0.0069	&0.0047	&0.0051	&0.0053	&0.0071 &0.0055 \\
DTR(T-RL)  & 0.0482 & 0.0342 & 0.0369 & \textbf{0.0373} & \textbf{0.0514} & \textbf{0.0395}  \\
\bottomrule
\end{tabular}
\end{table}

\section{Adaptive Sampling Strategy}
The sampling layer size is fixed at each tree level in our current design. However, an adaptive sampling strategy based on node importance or uncertainty could potentially be more effective. In our tree-based sampling method, each training iteration samples a single top-down path from the root to a leaf, and the nodes along this path are used as negative samples. Due to this path-based sampling mechanism, the number of samples per layer is implicitly determined by the tree structure and cannot be independently adjusted. As a result, it is not feasible to apply adaptive sampling strategies to DTR(T) and DTR(T-RL). Therefore, we conduct adaptive sampling experiments on DTR(U).
Motivated by the intuition that deeper layers should be sampled more heavily—as they are more tightly connected to items and contain more nodes—we explore a depth-aware adaptive sampling strategy. Specifically, we begin by setting the per-layer sample size to 70 and compute the total sampling budget $N$. For each level $l$, we assign a sampling weight $w_l = 1 + \alpha \cdot l$, where $\alpha$ is a hyperparameter that controls the emphasis on deeper levels. The number of samples allocated to level $l$ is then computed as:
$$
N_l = \frac{w_l}{\sum_{j=1}^H w_j} \cdot N
$$
As $\alpha$ increases, more samples are assigned to deeper layers. We vary $\alpha$ from 0 to 1.0 in steps of 0.2. The corresponding results, shown in Figure~\ref{fig:adaptive_num_fig}, demonstrate that moderately increasing the number of samples at deeper levels can lead to improved model performance, validating the effectiveness of adaptive sampling strategy.

\begin{figure}[htbp]
  \centering 
  \setcounter{subfigure}{0} 
  \subfloat[Movie]{\includegraphics[width=0.243\textwidth]{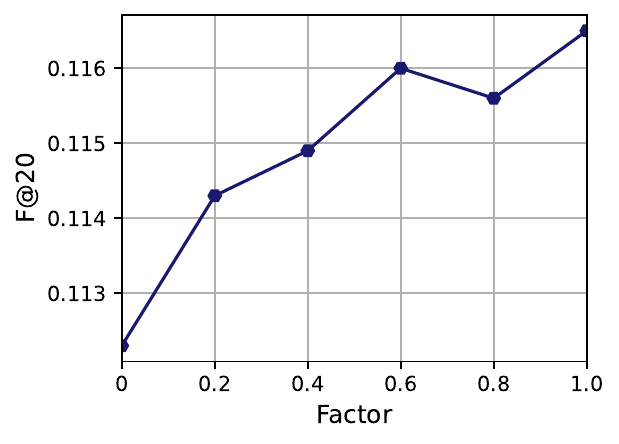}}
  \subfloat[MIND]{\includegraphics[width=0.243\textwidth]{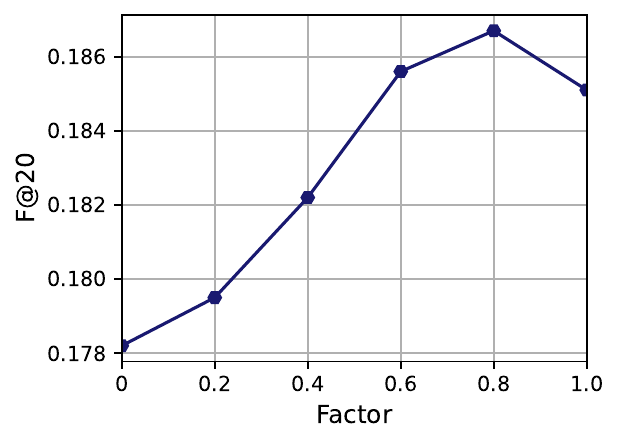}}\\
  \vspace{-0.25cm}
  \subfloat[Amazon]{\includegraphics[width=0.243\textwidth]{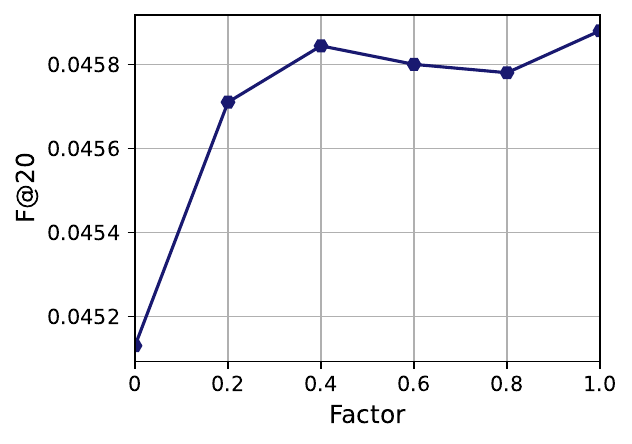}}
  \subfloat[Tmall]{\includegraphics[width=0.243\textwidth]{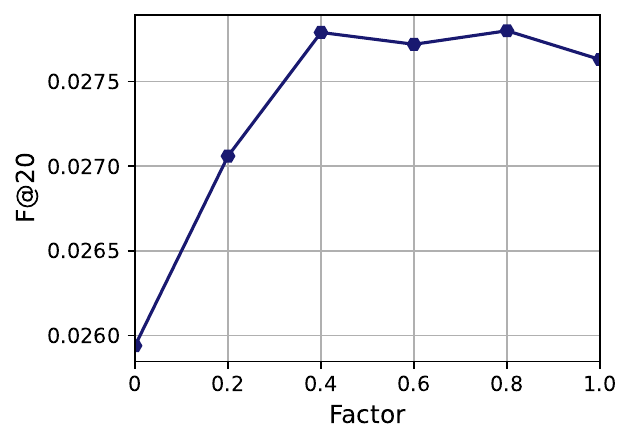}}\\
  \vspace{-0.1cm}
  \caption{Variation of $F\text{-}measure@20$ for DTR(U) with adaptive Sampling across four datasets.}
  \label{fig:adaptive_num_fig}
\end{figure}

\section{Detailed Time and Memory Costs of Algorithms}

\subsection{Training Time}
Table~\ref{tab:training_time_detail} reports the training time per 1,000 steps across four datasets. ScaNN, IPNSW, QINCo, and ROTLEX are omitted because they are two-stage indexing methods. For our proposed method, DTR with uniform sampling (DTR(U)) incurs training time comparable to that of training the preference model alone. In contrast, tree-based sampling (DTR(T)) introduces additional overhead due to the need to compute node scores during training. Further, when using the modified loss with rectified labels (DTR(T-RL)), training time increases further because of the probability estimation. Compared to OTM, which also performs inference-based sampling during training, all DTR variants demonstrate lower training times, underscoring the efficiency of our method. Although the additional computations introduce moderate overhead, they represent a reasonable trade-off for improved performance. In practice, the impact can be further reduced: since the auxiliary model is pre-trained and independent of the main training loop, the conditional probabilities can be pre-computed and cached. Moreover, distributed training can be leveraged to further alleviate the cost in real-world deployments.

\begin{table}[htbp] 
\caption{Training Time (in seconds) of Algorithms}
\label{tab:training_time_detail}
\centering
\begin{tabular}{c|cccc}
\hline
Algorithm & Movie & MIND & Amazon & Tmall \\ \hline
DIN & 59.95 & 58.29 & 50.24 & 51.44 \\
YouTubeDNN & 32.65 & 19.82 & 69.20 & 130.80 \\
FMLP-Rec & 23.10 & 21.88 & 22.13 & 23.13 \\
JTR & 121.9 & 85.38 & 113.14 & 134.38 \\
PLT & 17.05 & 15.78 & 17.22 & 23.22 \\
TDM & 36.91 & 29.35 & 33.55 & 59.07 \\
JTM & 35.11 & 29.90 & 33.76 & 57.36 \\
OTM & 261.45 & 259.00 & 278.74 & 305.57 \\ \hline
DTR(U) & 60.21 & 57.39 & 69.15 & 66.47 \\
DTR(T) & 126.97 & 118.83 & 177.83 & 193.70 \\
DTR(T-RL) & 158.25 & 147.09 & 230.16 & 254.72 \\ \hline
\end{tabular}
\end{table}

\subsection{Inference Time}
The detailed inference times are presented in Table~\ref{tab:inference_time_detail}. It is worth noting that PLT, TDM, JTM, OTM, and DTR adopt similar index structures and preference models, resulting in comparable inference time overheads.
\begin{table}[htbp]
\caption{Inference Time (in seconds) of Algorithms}
\label{tab:inference_time_detail}
\centering
\begin{tabular}{c|cccc}
\hline
Algorithm & Movie & MIND & Amazon & Tmall \\ \hline
DIN & 81.32 & 38.93 & 217.21 & 2104.05 \\
YouTubeDNN & 48.90 & 25.27 & 121.60 & 1022.94 \\
SCANN & 20.39 & 18.64 & 4.13 & 15.49 \\
IPNSW & 16.84 & 10.98 & 11.15 & 53.45 \\
JTR & 11.95 & 9.24 & 15.87 & 82.79 \\
QINCo & 9.41 & 3.71 & 7.98 & 19.40 \\
ROTLEX & 5.53 & 3.56 & 3.64 & 17.89 \\
PLT & 5.65 & 3.40 & 3.48 & 18.47 \\
TDM & 5.62 & 3.36 & 3.46 & 18.50 \\
JTM & 5.71 & 3.38 & 3.47 & 18.49 \\
OTM & 5.57 & 3.41 & 3.48 & 18.45 \\
DTR & 5.58 & 3.40 & 3.53 & 18.79 \\ \hline
\end{tabular}
\end{table}

\subsection{Memory Consumption}
The memory consumption of the recommendation model is inevitable in any recommender systems, so we mainly focus on the size of the index structure.  The memory overhead of different index structures is reported in Table~\ref{tab:memory_consumption_detail}. Notably, PLT, TDM, JTM, OTM, and DTR all adopt the same indexing structure, resulting in identical memory usage; thus, we only report the index memory consumption for OTM and DTR in the table. Importantly, DTR achieves strong performance while maintaining excellent memory efficiency—its memory usage is on par with IPNSW and OTM, significantly lower than that of JTR. Compared with ROTLEX, DTR has lower memory consumption on Movie and MIND, whereas ROTLEX is more memory-efficient on Amazon and Tmall. This is because ROTLEX employs a bucket strategy that maps multiple embeddings to a single bucket, thereby reducing the number of internal nodes for large vector spaces. However, this breaks the one-to-one correspondence between leaf nodes and item embeddings, leading to the performance degradation shown in Table~\ref{table: results}. When compared to quantization-based methods like SCANN and QINCo, which are known for minimal memory usage, DTR’s memory consumption is higher; yet their performance lags significantly behind (as shown in Table~\ref{table: results}), underscoring that DTR achieves a better balance between memory cost and retrieval effectiveness.

\begin{table}[htbp]
\caption{Memory Consumption (MB) of Index}
\label{tab:memory_consumption_detail}
\centering
\begin{tabular}{c|cccc}
\hline
Algorithm & Movie & MIND & Amazon & Tmall \\ \hline
SCANN & 1.93 & 1.48 & 9.26 & 18.06 \\
IPNSW & 6.38 & 4.01 & 43.02 & 80.00 \\
JTR & 11.32 & 7.44 & 71.21 & 142.22 \\
QINCo & 2.31 & 2.07 & 6.21 & 10.87 \\
ROTLEX & 8.54 & 4.07 & 11.26 & 24.66 \\
OTM & 6.00 & 3.48 & 45.34 & 90.78 \\
DTR & 6.00 & 3.48 & 45.34 & 90.78 \\ \hline
\end{tabular}
\end{table}

\end{appendices}

 





\end{document}